\documentclass[12pt]{article}

\usepackage[font=small,format=plain,labelfont=bf,textfont=it]{caption}
\usepackage{bm}
\usepackage{setspace}
\usepackage{apptools}

\usepackage{fullpage, verbatim, amsfonts, amsmath, amssymb, amsthm, latexsym, setspace, graphicx}
\allowdisplaybreaks

\usepackage[T1]{fontenc}
\usepackage[latin9]{inputenc}
\usepackage{geometry}
\usepackage{xcolor}
\usepackage{natbib}
\usepackage{overpic}

\usepackage[french, english]{babel}
\usepackage{hyperref,bookmark}

\hypersetup{linkcolor=MyDarkBlue, citecolor=MyDarkBlue, colorlinks=true} 
	\definecolor{MyDarkBlue}{rgb}{0,0.08,0.45}
	\definecolor{MyDarkGreen}{rgb}{0,0.55,0.08}

\theoremstyle{plain}
\newtheorem{thm}{Theorem}
\newtheorem{lem}{Lemma}
\newtheorem{cor}{Corollary}
\newtheorem{prop}{Proposition}
\newtheorem{defn}{Definition}

\newtheorem{assume}{Assumption}

\newtheorem{ex}{Example}

\theoremstyle{definition}
\newtheorem{rem}{Remark}

\AtAppendix{\counterwithin{lem}{section}}
\AtAppendix{\counterwithin{prop}{section}}
\AtAppendix{\counterwithin{thm}{section}}
\AtAppendix{\counterwithin{cor}{section}}
\AtAppendix{\counterwithin{ex}{section}}
\AtAppendix{\counterwithin{assume}{section}}

\newcommand{\argmax}{\mathop{\mathrm{argmax}}}
\newcommand{\argmin}{\mathop{\mathrm{argmin}}}

\newcommand{\cP}{{\cal P}}
\newcommand{\cC}{{\cal C}}

\newcommand\cites[1]{\citeauthor{#1}'s \citeyearpar{#1}}

\newcommand{\finex}{\leavevmode\unskip\penalty9999 \hbox{}\nobreak\hfill\quad\hbox{$\blacktriangle$}}

\begin{document}

	\title{Contagious Ambiguity\thanks{Frick: Princeton University (mfrick@princeton.edu); Iijima: Princeton University (riijima@princeton.edu); Oyama: University of Tokyo (oyama@e.u-tokyo.ac.jp). For helpful comments, we thank Roberto Corrao, Drew Fudenberg, Faruk Gul, Atsushi Kajii, Peter Klibanoff, Elliot Lipnowski, Stephen Morris, Sujoy Mukerji, Pietro Ortoleva, Ludovic Renou, Tomasz Strzalecki, Takashi Ui, Mu Zhang, as well as numerous seminar and conference audiences. We also gratefully acknowledge the financial support from Sloan Research Fellowships (Frick and Iijima) and from JSPS KAKENHI Grant 23K01303 (Oyama).} }
	\author{Mira Frick \and Ryota Iijima \and Daisuke Oyama}
	\date{August 20, 2026}

	\maketitle

	\begin{abstract}
	
	We study the strategic impact of ambiguity through the channel of higher-order beliefs. We show that even small amounts of prior ambiguity about game payoffs can generate arbitrarily large amounts of higher-order ambiguity. This gives rise to a novel form of contagion: vanishingly small payoff ambiguity can select ``secure'' actions (e.g., non-participation) as the unique equilibrium outcome even when those actions are almost dominated. We highlight two main implications. First, classical robustness results under probabilistic uncertainty break down under ambiguity when players can deviate to secure actions. Second, if a designer can introduce small amounts of payoff ambiguity into a game, this can serve as a powerful tool for unique implementation.

		\end{abstract}

		\newpage

	\onehalfspacing

\section{Introduction}

In many economic environments, agents face ambiguity, i.e., are unable to form a precise probabilistic belief about fundamentals. For example, investors may have ambiguous beliefs about the profitability of an investment, or consumers may face ambiguity about the quality of a new technology. Many such settings naturally feature strategic externalities (e.g., through synergies or network effects). In this case, agents' behavior may depend not only on their first-order ambiguity (i.e., ambiguous beliefs about fundamentals) but also on their higher-order ambiguity (i.e., ambiguous beliefs about other agents' beliefs). 

In this paper, we study the strategic impact of ambiguity through the channel of higher-order beliefs. We show that even small amounts of prior ambiguity about game payoffs can generate arbitrarily large amounts of higher-order ambiguity. This gives rise to a novel form of contagion: vanishingly small payoff ambiguity can select ``secure'' actions (e.g., non-participation) as the unique equilibrium outcome even when those actions are almost dominated. We highlight two main implications. First, classical robustness results under probabilistic uncertainty break down under ambiguity when players can deviate to secure actions. Second, if a designer can introduce small amounts of payoff ambiguity into a game, this can serve as a powerful tool for unique implementation. The following example illustrates the main ideas.

\subsection{Illustrative Example}\label{sec:example} 

Consider a complete-information game with two players $i \in \{1, 2\}$ and binary action sets $A_i = \{0,1\}$ for each $i$. Each player $i$'s payoff function is given by 
\[
u_i(a)= \mathbf{1}_{a_i = a_{-i} =1} -ca_i
\] 
for some $c\in (0,1)$. Thus, a player who plays $a_i =1$ always incurs a unilateral cost of $c$, and receives a benefit of $1$ if and only if the other player also plays $a_{-i} =1$. Action $a_i = 0$ is \emph{safe}, yielding a payoff of $0$ regardless of the other player's action. There are two strict Nash equilibria: $a={\bf 1}$ where both players choose action 1, and $a={\bf 0}$ where both players choose action 0. 
This stylized game captures coordination problems that are relevant in various economic settings. For example, action $a_i=1$ might represent participation in a joint investment or coordinated attack, while $a_i=0$ corresponds to non-participation in these activities. 

Observe that when $c = 0$, action $a_i = 1$ is weakly dominant for each player $i$. Thus, one might expect that when $c$ is sufficiently small (i.e., $a_i = 1$ is ``almost dominant''), equilibrium ${\bf 1}$ survives against small amounts of uncertainty about game payoffs. Under probabilistic uncertainty, this point was formalized by \cite{kajii1997}. They considered $\varepsilon$-elaborations, i.e., incomplete-information games with a common prior that assigns probability at least $1- \varepsilon$ to both players being {\it normal} types, who know that their payoffs coincide with the original game payoffs $u_i$. Their analysis implies that as long as $c < 0.5$, ${\bf 1}$ is {\it robust}, i.e., every $\varepsilon$-elaboration with small enough $\varepsilon$ admits an equilibrium where ${\bf 1}$ is played with high probability.

In this paper, we instead study the implications of small amounts of ambiguity about game payoffs. In the baseline model, we represent ambiguous beliefs using the multiple-prior approach and focus on maxmin expected utility preferences
\citep{GS1989}. Thus, we allow for {\it ambiguous $\varepsilon$-elaborations}, i.e., incomplete-information games with a common \emph{set} of priors, each of which assigns probability at least $1- \varepsilon$ to both players being normal types. We find that ${\bf 1}$ is not robust against such elaborations, because ambiguity makes ${\bf 0}$ {\it contagious}:  For every $\varepsilon>0$, there is an ambiguous $\varepsilon$-elaboration where ${\bf 0}$ is played with probability 1 in the unique equilibrium. This is the case even if $c$ is arbitrarily close to $0$.

To illustrate, consider an ambiguous elaboration of the following form. Each player $i$ has a set of types $T_i=\mathbb{N}=\{0,1,\ldots\}$. All types $t_i\geq 1$ are normal with payoffs given by $u_i$, but type $t_i=0$ has a payoff function under which action $0$ is strictly dominant. Players share a common set of priors ${\cal P}=\left\{P^\lambda:\lambda\in[0,1] \right \} \subseteq \Delta(T_1\times T_2)$. Here, for each $\lambda\in[0,1]$, the prior $P^\lambda\in\Delta(T_1\times T_2)$ is given by
\[
P^\lambda(t_1,t_2)=
\begin{cases}
\varepsilon(1-\varepsilon)^n\lambda & \text{if } (t_1,t_2)=(n+1,n) \text{ for some } n\in\mathbb{N},\\
\varepsilon(1-\varepsilon)^n(1-\lambda) & \text{if } (t_1,t_2)=(n,n+1) \text{ for some } n\in\mathbb{N},
\end{cases}
\]
and $P^\lambda(t_1, t_2) = 0$ otherwise. Thus, all priors $P^{\lambda}$ assign probability $1- \varepsilon$ to both players having payoffs $u_i$ (i.e., to the event $(t_1, t_2) \geq (1, 1)$), so there is only a small amount of prior ambiguity about game payoffs. However, there is significant ambiguity about the relative ranking of players' types, as priors differ in the probabilities $P^\lambda(t_1 \ge t_2) = \lambda$ they assign to player 1's type being higher than player 2's. This type space can be viewed as an ambiguous analog of an email-game type space \`a la \cite{rubinstein1989}, where players share the single common prior $P^{\frac{1}{2}}$.

Each player $i$ knows her own type $t_i$ and uses this information to form posteriors about her opponent's type. There are many belief-updating rules under ambiguity. For concreteness, assume for now that players perform full Bayesian updating \citep{pires2002}: $i$'s set of posteriors at type $t_i$ is given by the Bayesian updates $\{P^\lambda(\cdot  \mid t_i): \lambda\in [0,1]\}$ of all the priors conditional on $t_i$. In particular, every normal type $t_i=n \geq 1$ believes that the opponent's type $t_{-i}$ is either $n-1$ or $n+1$, and perceives the relative probability to be fully ambiguous: For each $n \geq 1$,
\[
\left\{P^\lambda \left(t_{-i} =n-1 \mid t_i = n \right) = 1 - P^\lambda \left(t_{-i} =n+1 \mid t_i = n \right) :\lambda\in[0,1] \right \}=[0,1].
\]

Note that all normal types know they are normal. Moreover, all types $t_i \geq 2$ also know that their opponent is normal, and thus have no ambiguity about either their own or their opponent's payoffs. Nevertheless, the fact that type $t_i = 1$ faces substantial ambiguity about her opponent's payoffs (which in turn induces substantial higher-order ambiguity for types $t_i \geq 2$) is enough to trigger the following contagion: First, each type $t_i =1$ chooses action $0$. This is because action $0$ yields a safe payoff of $0$, whereas the payoff to playing action $1$ is $-c$ under $i$'s worst-case posterior that assigns probability 1 to $t_{-i}= 0$ for whom action $0$ is strictly dominant.
As a result, the payoff to action $1$ is also $-c$ under the worst-case posterior of each type $t_{i} = 2$ (which assigns probability 1 to $t_{-i} = 1$), and hence $t_{i} = 2$ plays action $0$. Inductively, all types must play action $0$ in the unique equilibrium.\footnote{Indeed, more strongly, $0$ is uniquely rationalizable for all types.}

The above contagion suggests that ambiguity has dual implications for two important questions in economic theory. First, adopting a robust-prediction perspective \`a la  \cite{kajii1997} and a large follow-up literature (see Section~\ref{sec:robustness}), one can ask which equilibria of the complete-information game are robust to small amounts of \emph{exogenous} payoff uncertainty. From this perspective, the implication is that considering payoff ambiguity can greatly limit robust predictions relative to the case of probabilistic uncertainty: Even equilibria in almost dominant actions (e.g., ${\bf 1}$ when $c \approx 0$) can fail to be robust, as a small amount of payoff ambiguity can, through its effect on higher-order beliefs, lead players to opt for safe actions. For example, even a very profitable joint venture may fail to attract investors, as vanishingly small prior ambiguity about its profitability may trigger a contagion towards non-investment.

Second, a more recent literature adopts a design perspective and asks how a principal can \emph{endogenously} inject small amounts of payoff uncertainty into a game with the goal of implementing some target action profile \citep[e.g.,][]{bergemann2019, halac2025}. From this perspective, the above suggests that ambiguity can be a useful tool. Specifically, suppose the principal seeks to implement the safe action profile ${\bf 0}$ (e.g., prevent a coordinated attack) using bilateral contracts that pay each player $i$ a subsidy $w_i \geq 0$ if $i$ chooses $a_i = 0$. If the principal's goal is partial implementation, i.e., to implement ${\bf 0}$ as \emph{an} equilibrium, no subsidies are needed. However, it is natural for the principal to be concerned about strategic uncertainty, i.e., that players may coordinate on an undesirable equilibrium, and hence seek to implement ${\bf 0}$ as the \emph{unique} equilibrium.  If the principal can only offer deterministic payment schemes or introduce probabilistic uncertainty about players' subsidies, uniquely implementing ${\bf 0}$ is in general more costly than partial implementation.\footnote{Specifically, the unique implementation cost of ${\bf 0}$ under deterministic (respectively, stochastic but unambiguous) payment schemes is $1 - c$ (respectively, $\max\{1 - 2c, 0\}$).} However, suppose the principal can design {\it ambiguous payment schemes}, modeled as a set of distributions over subsidies $(w_i)$. This generates higher-order ambiguity, where each type of player $i$ knows her own realized subsidy but faces ambiguity about the other player's subsidy, their beliefs about $i$'s subsidy, etc. We will see that ambiguous payment schemes reduce the unique-implementation cost of ${\bf 0}$ to (arbitrarily close to) zero, i.e., allow the principal to eliminate strategic uncertainty for free. The idea is to introduce vanishingly small prior ambiguity about the possibility that some players receive subsidies that make $0$ dominant, and to exploit this to create a contagion towards ${\bf 0}$ as above.

\subsection{Overview of Results}\label{sec:summary}

Moving beyond this example, in the following we analyze general games with finitely many players $i \in I$ and finite action sets $A_i$ for each $i$. Section~\ref{sec:model} models players' higher-order ambiguity via an ambiguous type space, which specifies a countable set of types $T_i$ for each player, along with a compact set $\cP$ of common prior beliefs over type profiles. Each player $i$ of type $t_i$ forms a set of posterior beliefs $\cP_{t_i} $ over opponent types. We allow for fairly general belief-updating rules, including full Bayesian updating (as in the example), maximum-likelihood updating \citep{gilboa1993}, and many other updating rules proposed in the literature. 

Section~\ref{sec:main} presents the main results. Our first main result (Theorem~\ref{thm:CP-amb}) highlights a key difference between higher-order beliefs under ambiguity vs.\ probabilistic uncertainty. In unambiguous type spaces with a common singleton prior ${\cal P} = \{P\}$, \cites{kajii1997} critical path theorem implies that if players have high prior confidence in some event $E$, this limits how much higher-order uncertainty about $E$ there can be: That is, if $P(E) \approx 1$, then for all not too large $p$, the probability that players have common $p$-belief \citep{monderer1989} in $E$ must also be close to one. In stark contrast, Theorem~\ref{thm:CP-amb} shows that under ambiguity, prior confidence imposes no limits on higher-order uncertainty: For arbitrarily small $\varepsilon$, we exhibit type spaces where even though \emph{all} prior beliefs $P \in \cP$ assign probability at least $1- \varepsilon$ to an event $E$, common $\varepsilon$-belief of $E$ fails at all types. Here, we use a natural extension of common $p$-belief to ambiguous type spaces \citep{ganguli2007}. 

Second, we build on Theorem~\ref{thm:CP-amb} to study contagion in games with small amounts of payoff ambiguity and maxmin expected utility preferences (Section~\ref{sec:alphaMEU} considers more general ambiguity preferences). Generalizing the illustrative example, we find that ambiguity creates a new force for contagion: Suppose an action profile $a^*$ is \emph{secure}, i.e., each player $i$'s worst-case payoff to playing action $a^*_i$ when opponents play an arbitrary $a_{-i}$ is better than the best payoff she can achieve by deviating to any $a_i \neq a^*_i$ when opponents stick with $a^*_{-i}$. Then Theorem~\ref{prop:contagious} shows that $a^*$ is contagious under ambiguous elaborations.  Any strict Nash equilibrium in safe actions (e.g., ${\bf 0}$ in the example) is secure, but as we discuss, secure equilibria also arise in natural coordination games that do not feature safe actions. The contagion of secure equilibria is driven by the higher-order-belief effects of ambiguity in Theorem~\ref{thm:CP-amb}: Even when all priors put probability $1- \varepsilon$ on the original game payoffs and only probability $\varepsilon$ on ``crazy'' types for whom $a^*_i$ is strictly dominant, players may face substantial higher-order ambiguity about each other's beliefs regarding the probability of crazy types. When $a^*$ is secure, then as in the example, such higher-order ambiguity iteratively forces all types of all players to play $a^*$.

Section~\ref{sec:implication} illustrates the implications of these results for the dual questions highlighted in the example. First, Section~\ref{sec:robustness} considers the robustness of equilibrium predictions to small exogenous payoff uncertainty. Under probabilistic uncertainty, the literature has identified various sufficient conditions for robustness that formalize the central idea that equilibria in ``almost dominant'' actions are robust. For instance, \cite{kajii1997} show that an equilibrium $\hat a$ is robust if it is $p$-dominant for some $p < 1/|I|$ (i.e., each $\hat a_i$ is a strict best-response to any opponent strategy that puts probability at least $p$ on $\hat a_{-i}$). 
In contrast, our analysis implies that these conditions do not guarantee robustness to small payoff ambiguity. For instance, as the example illustrates, even if $\hat a$ is $p$-dominant for arbitrarily small $p > 0$, it can be eliminated by small amounts of payoff ambiguity, because by Theorem~\ref{prop:contagious} such payoff ambiguity may select a secure $a^* \neq \hat a$ as the unique equilibrium. We also discuss how to combine $p$-dominance and security to obtain stronger sufficient conditions that ensure robustness to small payoff ambiguity.

Second, Section~\ref{sec:contracts} considers how a principal who can endogenously introduce small payoff ambiguity into a game can exploit this for unique implementation. Building on Theorem~\ref{prop:contagious}, we highlight two cases in which such ambiguity can reduce the unique implementation cost of a target action profile $a^*$ all the way to its partial implementation cost. First, generalizing the example, Corollary~\ref{lem:FB bound} shows this is the case if $a^*$ is sufficiently close to secure. Second, Corollary~\ref{cor:insurance} shows that, when combined with ambiguity, a simple departure from bilateral contracts---{\it insurance contracts}---allows the principal to uniquely implement \emph{any} action profile $a^*$ at no additional cost relative to partial implementation. In contrast, in both these cases, if the principal can only introduce probabilistic payoff uncertainty, the unique implementation cost can exceed the partial implementation cost by a significant margin.

Finally, while all aforementioned results apply under general belief-updating rules, Section~\ref{sec:ML} shows that some updating rules generate even more extreme departures from the case of probabilistic uncertainty. Indeed, we provide a sense in which updating rules that are less conservative (i.e., induce smaller sets of posteriors) can magnify the higher-order belief effects of ambiguity. As an extreme example, consider maximum-likelihood updating, which refines full Bayesian updating by restricting each type $t_i$'s set of posteriors ${\cP}_{t_i}$ to the Bayesian updates of only those priors under which $t_i$ is most likely. In this case, we derive a starker failure of the critical path theorem: even a less demanding notion of common belief that we term \emph{common $\varepsilon$-justifiability} fails for arbitrarily small $\varepsilon$. Consequently, under maximum-likelihood updating, \emph{any} strict Nash equilibrium is contagious under ambiguous elaborations. Intermediate results hold for updating rules in between full Bayesian and maximum-likelihood updating.

\subsection{Related Literature}\label{sec:literature}

We contribute to the literature on ambiguous beliefs in strategic environments. Many papers study the impact of ambiguous beliefs in specific classes of games, without an explicit focus on higher-order beliefs.\footnote{See, e.g., \cite{salo1995, ellis2016, chen2026} in the context of auctions, voting, and social learning.} \cite{epstein1996} and \cite{ahn2008} provide foundations for ambiguous type spaces based on hierarchies of ambiguous beliefs.
\cite{kajii2005} formulate the equilibrium concept we use in our analysis: each player's strategy is required to be worst-case optimal against opponents' strategies given interim beliefs updated from a common set of priors.\footnote{See, e.g., \cite{billot2000, kajii2009, cerreia2022} for characterizations of the common-prior assumption under ambiguity.} Some recent papers focus on different aspects of higher-order ambiguity. \cite{huo2024} and \cite{cerreia2022} study how ambiguous beliefs affect the unique equilibrium in beauty-contest games \`a la \cite{morris2002}.  \cite{ui2025} studies binary-action global games with signals of ambiguous precision and full Bayesian updating. He shows that, unlike low-quality unambiguous signals, more ambiguous signal technologies increase the range of signals where a safe action is played, and that sufficiently ambiguous signal technologies can induce an action that is both safe and ex-ante risk-dominant to be played at all signals in the unique equilibrium.

We instead focus on contrasting the higher-order belief effects of small payoff ambiguity with those under probabilistic payoff uncertainty. Under probabilistic uncertainty, \cite{rubinstein1989} provides the seminal observation that high prior confidence in an event $E$ need not guarantee approximate common certainty of $E$, and as a result, small payoff uncertainty can eliminate some equilibria of a complete-information game by uniquely selecting another equilibrium.  Qualifying this contagion result, \cite{kajii1997} bound the amount of higher-order uncertainty that can be generated by small prior uncertainty, and use this to show that small payoff uncertainty cannot eliminate equilibria that are $p$-dominant for small enough $p$. In contrast, we show that small prior payoff ambiguity can generate arbitrarily large higher-order ambiguity, and consequently, gives rise to a more extreme form of contagion than probabilistic uncertainty: even equilibria that are arbitrarily close to dominant can be eliminated in favor of a secure equilibrium.\footnote{It is worth contrasting this finding with \cite{weinstein2007}, who instead adopt an interim perspective on robustness. Applied to the complete-information game in the example (represented by a type $t^{{\rm CK}}$ in the universal type space), their analysis implies that, even if $c \approx 0$ so that ${\bf 1}$ is almost dominant, one can slightly perturb (in the product topology) $t^{{\rm CK}}$ to some unambiguous type $t^*$ for which ${\bf 0}$ is uniquely rationalizable. However, \cites{kajii1997} analysis implies that in any type space whose common prior assigns high enough probability to the original game payoffs, such types $t^*$ must have very small prior probability. In contrast, we show that one can find ambiguous common-prior elaborations where ${\bf 0}$ is uniquely rationalizable for \emph{all} types. \cite{yokota2024} extends \cites{weinstein2007} results to perturbations of ambiguous types. }

In highlighting the role of secure actions (e.g., non-participation) as a source of contagion in games, we relate to an extensive literature \citep[going back to][]{dow1992} that proposes ambiguity as a source of market non-participation \citep[see, e.g.,][for a survey]{billot2020}. While this literature mostly focuses on non-negligible payoff ambiguity and abstracts away from strategic externalities, our analysis suggests that if one accounts for strategic externalities, even small payoff ambiguity can have a large impact through the channel of higher-order beliefs.

Our analysis complements \cite{oyama2010, oyama2012}, who consider elaborations with probabilistic but non-common priors and show that any strict Nash equilibrium is contagious with respect to such elaborations.\footnote{See also \cite{kajii1997} (pp.\ 1296-7), who note that the critical path theorem breaks down with non-common probabilistic priors.}  We highlight that ambiguity creates an alternative force for contagion that applies even when the common-prior assumption is maintained. However, reflecting the difference in mechanism, under ambiguity, the extent of contagion depends on the conservativeness of players' updating rules and the security of actions, neither of which is relevant under (common or non-common) probabilistic priors.\footnote{As Section~\ref{sec:ML} discusses, this difference relates to the novel feature under ambiguity that one must distinguish higher-order uncertainty in the sense of common belief vs.\ common justifiability. Under general updating rules, the critical path theorem fails with respect to common belief but can hold with respect to common justifiability, and failures of common belief (resp.\ justifiability) are needed to generate contagion of secure (resp.\ arbitrary strict Nash) equilibria.}  While players in our main model share a common set of priors, Section~\ref{sec:alphaMEU} shows that our results extend to settings where players share a single non-probabilistic or second-order prior; thus, the departures from probabilistic uncertainty we highlight do not rely on ambiguity being modeled via multiple priors. By maintaining the common-prior assumption, our results also apply to settings where, starting from a complete-information benchmark, a designer induces an ambiguous type space through contracts or information, as in the unique-implementation problem in Section~\ref{sec:contracts}. By contrast, in line with the agreeing-to-disagree logic \citep{aumann1976}, inducing a non-common-prior type space would require the designer to manipulate agents' prior beliefs themselves, rather than merely provide them with different information.

 The unique-implementation application relates to a recent literature on the endogenous design of ambiguity. Several papers study how a designer can benefit from inducing ambiguity about agents' own payoffs via ambiguous information, contracts, or mechanisms \citep[e.g.,][]{tillio2016, kellner2018, beauchene2019, tang2021, dutting2024, cheng2024}. In a multi-agent mechanism design setting, \cite{bose2014} show that inducing first-order ambiguity (about other agents' payoff types) is a useful tool for partial implementation. In contrast, in our contracting setting, we highlight the channel of higher-order ambiguity (about other agents' beliefs) and its usefulness for unique implementation.

Finally, how to model belief updating under ambiguity is a key methodological question \citep[e.g.,][]{machina2014}. The literature has proposed and axiomatized many updating rules \citep[e.g.,][]{pires2002, gilboa1993, hanany2007, epstein2007, gul2021, cheng2019}. Some recent work compares different updating rules, for example, in terms of their performance in single-agent learning problems \citep{FII2025} or their evolutionary stability \citep{SS2024}. Our analysis in this paper accommodates a large class of updating rules and sheds light on their implications for higher-order beliefs. In particular, we highlight that less conservative alternatives to full Bayesian updating rules, such as maximum-likelihood updating and its variants, can magnify the higher-order belief effects of prior ambiguity, leading to more drastic strategic implications. Another well-known issue is that belief updating under ambiguity can give rise to dynamic inconsistency \citep[e.g.,][]{siniscalchi2009}. While dynamic inconsistency is an important force in many applications of ambiguity to strategic settings, Section~\ref{sec:dyn-inconsis} shows that our main insights do not hinge on this channel by considering a dynamically consistent equilibrium formulation \`a la \cite{hanany2020}.

\enlargethispage{\baselineskip}

\section{Setting}\label{sec:model}

\subsection{Type Space and Belief Updating}\label{sec:type}

Fix a finite set $I$ of at least two agents. Agents' higher-order ambiguity is captured by an \textit{\textbf{(ambiguous)  type space}} $(T = \prod_{i \in I} T_i, {\cal P})$, which specifies a countable set $T_i$ of types for each agent $i \in I$ and a compact set ${\cal P}\subseteq \Delta(T)$ of common prior distributions over type profiles.\footnote{For any topological space $X$, $\Delta(X)$ denotes the set of Borel probability measures on $X$, endowed with the induced topology of weak convergence. We endow $T_i$ with the discrete topology.}
 We assume that $P(\{t_i\}\times T_{-i})>0$ for all $t_i\in T_i$ and $P\in {\cal P}$, where $T_{-i}=\prod_{j\not=i}T_j$. The type space is unambiguous if ${\cal P} = \{ P \}$ is a singleton.

Agents know their own type, so the ambiguity of agent $i$ of type $t_i$ is captured by a set of posterior beliefs $\cP_{t_i}\subseteq\Delta(T_{-i})$ over opponent types. The literature has proposed many belief-updating rules under ambiguity. Perhaps the most commonly studied rule is \textit{\textbf{full Bayesian updating}}: For each prior $P \in \cP$, type $t_i$ forms the Bayesian update $P(\cdot \mid t_i) \in \Delta (T_{-i})$ (i.e., $P(t_{-i} \mid t_i ) = \frac{P(t_i, t_{-i})}{P(\{t_i\}\times T_{-i})}$ for all $t_{-i} \in T_{-i}$). Her set of posteriors is then
\[
\cP^{\rm FB}_{t_i}=\{P(\cdot \mid t_i): P \in \cP \}.
\]
This updating rule is quite conservative, as the agent does not use the information that her own type is $t_i$ to eliminate any priors.\footnote{See, e.g., \cite{seidenfeld1993} and \cite{shishkin2023} for conceptual and experimental critiques of full Bayesian updating.} 
Motivated by this, the literature has developed less conservative refinements of full Bayesian updating. For instance, under \textit{\textbf{maximum-likelihood updating}},
\[
\cP^{\rm ML}_{t_i}=\{P(\cdot \mid t_i):  P \in \argmax_{P' \in \cP} P'(\{t_i \} \times T_{-i}) \}.
\]
That is, after observing $t_i$, the agent restricts attention to the priors $P \in \cP$ that predicted $t_i$ with the highest probability and only forms the Bayesian updates of these priors. 

To capture these and many other commonly studied updating rules, we allow for any updating rule that satisfies the following assumption:

\begin{assume}\label{asp:ML}
For every $(T, \cP)$, $i \in I$, and $t_i \in T_i$, the set of posteriors $\cP_{t_i}$ is non-empty, closed, and satisfies:
\begin{enumerate}
\item[(i).] $\cP_{t_i} \subseteq \cP^{\rm FB}_{t_i}$,
\item[(ii).] for any $P, P'\in\cP$ with ${\rm  marg}_{T_i}P={\rm  marg}_{T_i}P'$,  
$$P(\cdot \mid t_i)\in \cP_{t_i} \iff P'(\cdot \mid t_i)\in \cP_{t_i}.$$
\end{enumerate}
\end{assume}

Condition (i) requires agents' updating rules to (weakly) refine full Bayesian updating. Condition (ii) imposes an own-information invariance requirement on this refinement: If $P$ and $P'$ have the same marginals over $T_i$, then they generate the same distribution over agent $i$'s private information, so $i$'s realized type $t_i$ cannot serve as a basis for eliminating one prior but not the other.\footnote{All our results remain valid if condition~(ii) is relaxed to the requirement that $\mathcal P_{t_i}$ contain a nonempty subset
$\widetilde{\mathcal P}_{t_i}$ satisfying condition~(ii).} Assumption~\ref{asp:ML} nests full Bayesian and maximum-likelihood updating, as well as natural hybrids whose sets of posteriors are in between $\cP^{\rm FB}_{t_i}$ and $\cP^{\rm ML}_{t_i}$ \citep[e.g.,][]{epstein2007, cheng2019}. It also nests some updating rules, such as proxy updating \citep{gul2021}, whose sets of posteriors need not be supersets of $\cP^{\rm ML}_{t_i}$.\footnote{Formally, proxy updating posits a proxy belief $\hat P_i \in \cP$ for each $i \in I$, and the set of posteriors for each $t_i$ is given by $\cP_{t_i}=\{P(\cdot \mid t_i): P \in \cP \text{ and } {\rm marg}_{T_i}P={\rm marg}_{T_i}\hat P_i \}$.} We allow updating rules to vary across agents and types.

\subsection{Common Belief}\label{sec:CB}

We will be interested in the impact of higher-order ambiguity on strategic interactions. \cite{monderer1989} introduced common $p$-belief as a natural quantification of the degree of higher-order uncertainty in strategic settings. We will make use of the following extension to ambiguous type spaces due to \cite{ganguli2007}.\footnote{\cite{ganguli2007} used this extension of common $p$-belief to generalize the agreeing-to-disagree result and the no-trade theorem to settings with ambiguity.}
Fix a type space $(T, \cP)$ and a belief-updating rule. Consider any $p \in [0, 1)$ and product event $E = \prod_{i \in I} E_i \subseteq T$. Define the event that $E$ is \textit{\textbf{individually ${p}$-believed}} by
 \begin{equation}\label{eq:Bp}
 B^{p}(E)=\prod_{i\in I}B^{p}_i (E), \quad \text{ where } \quad B^{p}_i(E)=\{t_i \in E_i : P_i(E_{-i}) \geq p \;  \forall P_i \in \cP_{t_i}\}.
\end{equation}
Define the event that $E$ is \textit{\textbf{commonly ${p}$-believed}} by
\[
C^{p}(E)=\bigcap_{k\in\mathbb N} (B^{p})^k(E),
\]
where $(B^{p})^0(E) = E$ and $(B^{p})^k(E)=B^{p}((B^{p})^{k-1}(E))$ for all $k\geq 1$. Section~\ref{sec:ML} also introduces a novel relaxation of common $p$-belief that will play an important role in contrasting the strategic impact of different belief-updating rules.

 \subsection{Ambiguous Incomplete-Information Games}\label{sec:game}

We model strategic interactions under ambiguity by an ambiguous \textbf{\textit{incomplete-information game}}: This consists of a type space $(T, \cP)$, along with a finite action set $A_i$ and bounded payoff function $v_i: A\times T\to\mathbb R$ for each agent $i \in I$, where $A=\prod_{j\in I}A_j$. As usual, each $v_i (\cdot, t)$ is extended linearly to mixed action profiles in $\prod_{j\in I}\Delta(A_j)$. A strategy for agent $i$ is a mapping $\sigma_i: T_i\to \Delta(A_i)$. We assume maxmin expected utility preferences \`a la \cite{GS1989} (Section~\ref{sec:alphaMEU} considers more general ambiguity preferences). That is, given any updating rule satisfying Assumption~\ref{asp:ML}, type $t_i$'s payoff to playing action $a_i$ against opponent strategies $\sigma_{-i} = (\sigma_j)_{j \neq i}$ is\footnote{The minimum over $P_i \in \cP_{t_i}$ is well-defined as $\cP_{t_i}$ is compact: By Assumption~\ref{asp:ML}, $\cP_{t_i}$ is a closed subset of $\cP^{\rm FB}_{t_i}$ (which is compact by compactness of $\cP$).}
\[ V_{i} (a_i, \sigma_{-i}, t_i) = \min_{P_i\in {\cal P}_{t_i}} \sum_{t_{-i}\in T_{-i}} P_i (t_{-i}) v_i(a_i, \sigma_{-i}(t_{-i}), t_i, t_{-i}).\]
We extend $i$'s payoffs linearly to mixed actions $\alpha_i \in \Delta(A_i)$, i.e., 
$$V_i (\alpha_i, \sigma_{-i}, t_i) = \sum_{a_i \in A_i} \alpha_i(a_i) V_{i} (a_i, \sigma_{-i}, t_i).$$ 
This implicitly assumes that a malevolent Nature chooses $P_i \in \cP_{t_i}$ after $i$'s pure action has been realized, so that $i$ cannot hedge against ambiguity by mixing. However, as Section~\ref{sec:mix} discusses, our main insights go through even if mixing is given hedging value by reversing this timing. An \textbf{\textit{equilibrium}} is a strategy profile $\sigma^*=(\sigma^*_i)_{i\in I}$ such that for each $i \in I$ and $t_i \in T_i$,
\[
\sigma^*_i(t_i)\in\argmax_{\alpha_i\in\Delta(A_i)} V_i (\alpha_i, \sigma^*_{-i}, t_i).
\]
Fixed-point arguments as in \cite{kajii2005} imply that an equilibrium exists. While this equilibrium formulation imposes interim optimality at each type $t_i$, Section~\ref{sec:dyn-inconsis} considers an alternative formulation based on ex-ante optimality.

\section{Main Results}\label{sec:main}

\subsection{Higher-Order Ambiguity vs.\ Probabilistic Uncertainty}\label{sec:CP}

Our first main result highlights a key difference between higher-order beliefs under ambiguity vs.\ probabilistic uncertainty. Since this result is purely epistemic, it does not rely on any of the assumptions in Section~\ref{sec:game} about agents' payoffs and ambiguity preferences in an underlying game. 

In unambiguous type spaces $(T, \{P\})$, a key insight due to \cite{kajii1997} is that if agents have high prior confidence in an event $E$, this limits how much higher-order uncertainty about $E$ there can be. Indeed, while \cite{rubinstein1989} highlighted that even if $P(E)$ is close to $1$, common ${p}$-belief of $E$ can fail for large ${p}$, \cites{kajii1997} critical path theorem shows that there is a lower bound on the probability of common ${p}$-belief at moderate ${p}$:

\medskip

\noindent {\bf Critical Path Theorem} \citep{kajii1997}{\bf.} {\it For every unambiguous type space $(T, \{P\})$, product event $E$, and ${p} \in (0, 1/|I|)$, 
\[
P(C^{p}(E)) \geq 1-(1-P(E))\frac{1-p}{1-|I|p}.
\]}

In particular, the critical path theorem implies that if $P(E) \approx 1$, then $P(C^{p}(E)) \approx 1$ for any $p < 1/|I|$. In contrast, our first main result shows a stark failure of the critical path theorem under ambiguity:

\begin{thm}\label{thm:CP-amb}
Fix any $\varepsilon \in (0, 1)$. There is a type space $(T, \cP)$ and product event $E$ such that $P(E) = P'(E) \geq 1-\varepsilon$ for all $P, P'\in \cP$ but $C^{ {\varepsilon} }(E)=\emptyset$ under all updating rules satisfying Assumption~\ref{asp:ML}.
\end{thm}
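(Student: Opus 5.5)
The plan is to build, for each $\varepsilon$, a type space in which every prior has the \emph{same} marginal on each $T_i$. Then condition (ii) of Assumption~\ref{asp:ML}, together with non-emptiness of $\cP_{t_i}$, forces $\cP_{t_i}=\cP^{\rm FB}_{t_i}$ for every admissible updating rule. This reduces the statement to a computation under full Bayesian updating. The email-game-style example of Section~\ref{sec:example} does not have this property, because there the marginal on $T_1$ depends on $\lambda$. So I would replace it by a family of \emph{couplings} of a fixed marginal $\mu$ on $\mathbb N$.

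\textbf{Construction for two players.} Let $T_1=T_2=\mathbb N$, and fix a strictly decreasing full-support distribution $\mu$ on $\mathbb N$ with $\mu(0)+\mu(1)\le\varepsilon$ (for instance a suitable geometric distribution). Define the coupling $P^A$ by
\[
P^A(n,n-1)=\mu(n)\ (n\ge1),\qquad P^A(0,m)=\mu(m)-\mu(m+1)\ (m\ge0),
\]
and zero elsewhere. The marginals check out:
\begin{itemize}
\item player $1$: type $n\ge1$ has mass $\mu(n)$, and type $0$ has mass $\sum_m(\mu(m)-\mu(m+1))=\mu(0)$;
\item player $2$: type $m$ has mass $\mu(m+1)+\mu(m)-\mu(m+1)=\mu(m)$.
\end{itemize}
Let $P^B$ be the mirror image with the players' roles swapped, and set $\cP=\{\lambda P^A+(1-\lambda)P^B:\lambda\in[0,1]\}$. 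This set is compact, every $P\in\cP$ has both marginals equal to $\mu$, and $\mu$ has full support, so every type has positive prior probability. Take $E=E_1\times E_2$ with $E_i=\{t_i\ge1\}$. Then $P^A(E)=\sum_{n\ge2}\mu(n)=1-\mu(0)-\mu(1)=P^B(E)$. By linearity $P(E)$ is the same for all $P\in\cP$, and it is at least $1-\varepsilon$.

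\textbf{Contagion.} Since all posteriors are the full Bayesian ones, each type $n\ge1$ of player $1$ has the posterior $P^A(\cdot\mid n)=\delta_{n-1}$. Symmetrically, each type $n\ge1$ of player $2$ has the posterior $P^B(\cdot\mid n)=\delta_{n-1}$. I would then prove by induction on $k$ that
\[
(B^\varepsilon)^k(E)\subseteq\{t: t_1\ge k+1,\ t_2\ge k+1\}.
\]
For $k=1$, type $1$ of each player has a posterior assigning probability $0<\varepsilon$ to $E_{-i}$, so it is removed. For the inductive step, type $k+1$ has a posterior concentrated on opponent type $k$, which was removed at step $k$, so type $k+1$ is removed at step $k+1$. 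Hence $C^\varepsilon(E)=\emptyset$.

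\textbf{More than two players.} For $|I|>2$ I would let every player $j\ge3$ be a copy of player $2$: take $T_j=\mathbb N$, let each prior be supported on profiles with $t_j=t_2$, and set $E_j=\{t_j\ge1\}$. Each $j\ge3$ then has the same marginal $\mu$, so condition (ii) still yields the full Bayesian posterior sets. The mirror coupling $P^B$ still gives every type $n\ge1$ of players $2,\dots$ the posterior that player $1$ has type $n-1$. Meanwhile $P^A$ gives player $1$'s type $n$ certainty that all others have type $n-1$. So the same induction applies.

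\textbf{Main obstacle.} The crux is the marginal-invariance requirement. It is what makes the conclusion hold for \emph{all} rules in Assumption~\ref{asp:ML}, including maximum-likelihood and proxy updating. The step I expect to need the most care is verifying that the "downward" coupling is feasible, which relies on $\mu$ being decreasing so that the residual masses $\mu(m)-\mu(m+1)$ are nonnegative and can be absorbed by type $0$. Everything after that is routine.
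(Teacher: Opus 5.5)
Your proof is correct, and its core is the same as the paper's. All priors share each marginal on $T_i$, so Assumption~\ref{asp:ML}(ii) plus non-emptiness forces $\cP_{t_i}=\cP^{\rm FB}_{t_i}$ under every admissible rule. Every type $n\ge 1$ then has some posterior certain that an opponent sits at $n-1$, and the same induction empties $C^{\varepsilon}(E)$.

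The concrete constructions differ in two ways.

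First, for two players the paper's finite cyclic space is exactly your residual-absorption coupling with $\mu$ uniform on $\{0,\dots,N\}$. That space has $T_i=\{0,\dots,N\}$, $t_j\equiv t_i-1$ modulo $N+1$, and uniform marginals. With uniform $\mu$, all of your residual mass $\mu(m)-\mu(m+1)$ lands on the profile $(0,N)$, which is precisely the wrap-around. Your condition $\mu(0)+\mu(1)\le\varepsilon$ then becomes the paper's $2/(N+1)\le\varepsilon$. So your decreasing-$\mu$ device generalizes the paper's, while the paper's choice keeps $T_i$ and $\cP$ finite. Your convex hull is also unnecessary, since $\{P^A,P^B\}$ already suffices.

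Second, and more substantively, the two constructions handle $|I|\ge 3$ differently.
\begin{itemize}
\item The paper uses one prior $P^i$ per player, under which type $n+1$ of $i$ is certain that \emph{all} opponents are at $n$.
\item Your cloning only makes player $1$ drop a level from the perspective of players $j\ge 2$; the other clones stay at the same level.
\end{itemize}
This is enough for Theorem~\ref{thm:CP-amb} as stated. There, $B^{\varepsilon}_i$ tests the product event $E_{-i}$, so a single opponent falling out of the event suffices.

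It does not, however, deliver the paper's stronger $\mathbf f$-operator version (Theorem~\ref{thm:CP-general}), which is what the proof of Theorem~\ref{prop:contagious} actually uses. In your space, players $2,\dots,|I|$ are commonly certain of having equal types. Suppose players $2$ and $3$ play the introductory coordination game between themselves, with payoffs independent of player $1$'s action. Then your elaboration has an equilibrium in which both choose action $1$ at every normal type, so the secure profile is not selected.
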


That is, under ambiguity, prior confidence imposes no limits on higher-order uncertainty: Even if an event $E$ has high probability under every prior belief $P\in \cP$, common ${\varepsilon}$-belief of $E$ can fail globally even for arbitrarily small ${\varepsilon}$. Observe that the event $E$ in Theorem~\ref{thm:CP-amb} is unambiguous, as all priors $P \in \cP$ assign the same probability to $E$. What drives the failure of the critical path theorem is that there is prior ambiguity about the probabilities of certain subevents of $E$.

More specifically, the proof of Theorem~\ref{thm:CP-amb} (Appendix~\ref{app:CP-amb}) adapts the construction in the illustrative example in Section~\ref{sec:example} to general updating rules and more than two agents. As in the example, we make use of type spaces with ordered types $t_i$ and two key features: (i) there is only small ambiguity about the marginal distributions of individual agents' types---indeed, the proof uses a type space where ${\rm marg}_{T_i} P$ is the same for all $P \in {\cal P}$, so marginals are fully unambiguous; but (ii) there is significant ambiguity about the correlation across agents' types, in particular, their relative rankings.\footnote{In contrast to the full ambiguity about relative rankings in the example, for any given $\varepsilon \in (0, 1)$, the prior probabilities that each agent's type is higher than her opponents' can be bounded away from zero and one.  Moreover, unlike in the example, $T_i$ and ${\cal P}$ in the proof of Theorem~\ref{thm:CP-amb} are finite.} The key insight behind Theorem~\ref{thm:CP-amb} is that these two features can give rise to events $E$ (e.g., the event that all agents' types are above some cutoff, as in the example) that have (unambiguous) high prior probability but nevertheless are subject to arbitrarily large higher-order uncertainty. In contrast, by the critical path theorem, this is impossible if the uncertainty about the relative rankings of types is probabilistic, as in the seminal email-game type space \`a la \cite{rubinstein1989}.

 As prior work has highlighted (e.g., in the literatures on global games and information design), type spaces with rank uncertainty are relevant in a variety of economic contexts, from currency attacks, where speculators are uncertain about how their private signals about fundamentals compare with those of others, to organizations, where employees may be uncertain about relative pay rankings. While such rank uncertainty is typically modeled as probabilistic, Theorem~\ref{thm:CP-amb} suggests that the higher-order belief implications can be very different if it is instead ambiguous.

It is worth noting that the type space $(T, {\cal P})$ and event $E$ in Theorem~\ref{thm:CP-amb} can be chosen uniformly across all updating rules satisfying Assumption~\ref{asp:ML}. Moreover, as we discuss in Section~\ref{sec:alphaMEU}, the fact that Theorem~\ref{thm:CP-amb} uses a multiple-prior formulation of ambiguity is not important. Analogs of Theorem~\ref{thm:CP-amb} can be derived when ambiguity is instead modeled via a common single second-order or non-additive prior.

 \subsection{Secure Actions and Contagion}\label{sec:strategic}

Building on Theorem~\ref{thm:CP-amb}, we now explore how ambiguity magnifies the impact of small amounts of payoff uncertainty in games.

To capture games with small amounts of payoff uncertainty, we fix a complete-information game $G = (A, u)$, where there is common certainty that $i$'s payoff function is $u_i: A\to\mathbb R$. We then consider \textbf{\textit{(ambiguous) $\varepsilon$-elaborations}} of $G$, i.e., incomplete-information games $(T, {\cal P}, (A_i, v_i)_{i\in I})$ as in Section~\ref{sec:game} such that
\begin{equation*}\label{eq:elaboration}
P\left(T^*\right)\geq 1-\varepsilon, \, \forall P\in{\cal P}.
\end{equation*} 
Here, $T^* = \prod_{i \in I} T^*_i$, where
\[
T^*_i =\left\{t_i\in T_i:  v_i(\cdot, (t_i, t_{-i}))=u_i(\cdot) \forall t_{-i} \in T_{-i}  \right\}
\]
denotes the set of $i$'s \textbf{\textit{normal types}}, i.e., types whose payoff functions coincide with the complete-information payoffs $u_i$. When $|\cP| = 1$, this corresponds to \cites{kajii1997} notion of unambiguous $\varepsilon$-elaborations. 

\begin{rem}\label{rem:small-ambig} 
In an $\varepsilon$-elaboration, payoff uncertainty is small in two senses: First, players have high confidence that payoff functions are given by the profile $u = (u_i)_{i \in I}$, as all priors assign probability at least $1 - \varepsilon$ to this event. Second, even though we allow for ambiguous type spaces, the first point also implies that there is only \emph{small prior ambiguity about game payoffs}: Indeed, for each $P \in \mathcal{P}$, consider the induced probability measure $\mu_P \in \Delta(\mathbb{R}^{A \times I})$ over game payoffs.\footnote{That is, $\mu_P (B) = P( \{ t \in T : (v_i (\cdot, t))_{i \in I} \in B\})$ for all measurable $B \subseteq \mathbb{R}^{A \times I}$.} Then the fact that $\mu_P (\{u\}) \geq 1 - \varepsilon$ for all $P\in \mathcal{P}$ implies that the total variation distance $d_{\rm TV} (\mu_P, \mu_{P'}) $ is at most $\varepsilon$ for any $P, P' \in \mathcal{P}$. In contrast, it is important for our results that $\varepsilon$-elaborations do not limit the distance between priors in $\cP$ when viewed as distributions over the entire type space; that is, there can be \emph{large prior ambiguity about payoff-irrelevant events}.\footnote{Indeed, Appendix~\ref{app:payoff-irrelevant} formalizes a sense in which, if $(T, \mathcal{P})$ features small prior ambiguity about all events in $T$, then it satisfies an approximate version of the critical path theorem, and as a result, $(T, \mathcal{P})$ does not support the contagion in Theorem~\ref{prop:contagious} below.}\finex
 \end{rem}
 
 To highlight the additional strategic implications of small payoff ambiguity relative to the case of small probabilistic uncertainty, we focus on the classic question of contagion \citep[][]{rubinstein1989}. Call a Nash equilibrium $a^*$ of $G$ \textbf{\textit{contagious}} if, 
 for every $\varepsilon>0$, there is an $\varepsilon$-elaboration of $G$ whose unique equilibrium $\sigma^*$ satisfies $\sigma^*_i( a^*_i|t_i)=1$ for all $i\in I$, $t_i\in T_i$. That is, even if game $G$ admits multiple equilibria, introducing arbitrarily small payoff uncertainty can select $a^*$ as the unique equilibrium.\footnote{Our results remain valid even if we strengthen the requirement that $a^*$ is played with probability 1 in the unique equilibrium of the elaboration to the requirement that $a^*$ is uniquely obtained via iterated elimination of dominated strategies.}  As we will see in Section~\ref{sec:implication}, understanding how ambiguity affects contagion has implications for two dual questions: First, which equilibrium predictions are robust when a cautious analyst perceives small exogenous ambiguity about players' payoffs (Section~\ref{sec:robustness}); second, which target actions a principal can uniquely implement when he can endogenously design such payoff ambiguity (Section~\ref{sec:contracts}).
 
Our second main result shows that ambiguity introduces a new force for contagion---security:
\begin{defn}
Action profile $a^*$ in $G$ is \textbf{\textit{secure}} if for every player $i$,
\begin{equation}\label{eq:gensafe}
\min_{a_{-i}\in A_{-i}} u_i(a^*_i, a_{-i}) >  \max_{a_i \neq a^*_i } u_i(a_i, a^*_{-i}).
\end{equation}
\end{defn}
That is, each player $i$'s worst-case payoff when she plays action $a^*_i$ and other agents play an arbitrary $a_{-i}$ is better than the best payoff she can achieve by deviating to any $a_i \neq a^*_i$ when other players stick with $a^*_{-i}$. 
Thus, security strengthens the strict Nash equilibrium requirement that $u_i(a^*_i, a^*_{-i}) >  \max_{a_i \neq a^*_i } u_i(a_i, a^*_{-i})$ by introducing worst-case reasoning about opponents' behavior into the left-hand side, i.e., into each player $i$'s calculation of her payoffs to playing $a^*_i$. Many important games admit secure equilibria. One example are strict Nash equilibria $a^*$ where each action $a^*_i$ is \textbf{\textit{safe}}, i.e., $i$'s payoff $u_i(a^*_i, a_{-i})$ is constant across opponent actions $a_{-i}$. Safe actions are natural in economic settings where there is an option of non-participation, as in the introductory example. However, secure equilibria also arise in important coordination games that do not feature safe actions:

\begin{ex}[Security without safety]\label{ex:secure}
Suppose $A = X^I$ and $u_i (a) = b(a) \mathbf{1}_{\{a_i = a_j \forall j\}} - c(a_i)$, where $b(a), c(a_i) \geq 0$ for all $a \in A$. That is, players receive a benefit if they all play the same action (e.g., adopt the same technology), while each player unilaterally incurs some cost for each action choice. Then coordinating on the least costly action is secure, i.e., playing $a^*$ with $a^*_i = x^*$ for all $i$ where $\{x^* \} = \argmin_{x \in X} c(x)$.  \finex\end{ex}

\begin{thm}\label{prop:contagious} 
Fix any complete-information game $G$. Under any updating rule satisfying Assumption~\ref{asp:ML}, any secure action profile $a^*$ in $G$ is contagious.
\end{thm}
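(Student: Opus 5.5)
Given $\varepsilon>0$, the plan is to build an ambiguous $\varepsilon$-elaboration from the type space of Theorem~\ref{thm:CP-amb}, choosing the payoffs so that every type's set of posteriors contains a worst-case posterior concentrated on ``lower'' types. Security of $a^*$ then drives an induction up the ranking of types. Concretely, we take $(T,\cP)$ and the product event $E=\prod_i E_i$ supplied by Theorem~\ref{thm:CP-amb} for $\varepsilon$ (shrinking $\varepsilon$ if needed). We let types in $E_i$ be normal, $v_i(\cdot,t)=u_i$. We let types $t_i\notin E_i$ be ``crazy'': their payoff makes $a^*_i$ strictly dominant, e.g.\ $v_i(a,t)=\mathbf 1_{a_i=a^*_i}$. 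Then $P(T^*)\ge P(E)\ge 1-\varepsilon$ for all $P\in\cP$, so this is an $\varepsilon$-elaboration, and it works for every updating rule satisfying Assumption~\ref{asp:ML}, because Theorem~\ref{thm:CP-amb} holds uniformly over such rules.

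The key step is to convert $C^{\varepsilon}(E)=\emptyset$ into an iterated-elimination argument. Define $E^0=E$ and $E^k=B^{\varepsilon}(E^{k-1})$, a decreasing sequence of product events. Set $D_i^k=E_i\setminus E_i^k$, the types eliminated by stage $k$, with $D_i^0=T_i\setminus E_i$ covering the crazy types. By induction on $k$, we claim that $a^*_i$ is the unique rationalizable action, i.e.\ the unique survivor of iterated strict dominance, for every type in $D_i^k\cup (T_i\setminus E_i)$. For the base case, crazy types play $a^*_i$ by strict dominance. For the inductive step, take a normal type $t_i\in E^{k-1}_i\setminus E^k_i$. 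By definition of $B^\varepsilon$, some posterior $P_i\in\cP_{t_i}$ satisfies $P_i(E^{k-1}_{-i})<\varepsilon$, so with probability greater than $1-\varepsilon$ the opponents' types lie outside $E^{k-1}_{-i}$. Since $E^{k-1}$ is a product event, each such type profile has some coordinate already forced to $a^*$. We need all opponents to play $a^*_{-i}$, so a single forced coordinate is not enough. The plan is to use the specific structure of the Theorem~\ref{thm:CP-amb} construction: ordered types in which, at the worst-case posterior of a type just outside $E^k$, \emph{all} opponents have lower types that are already eliminated. Indeed, the intersection $C^\varepsilon(E)=\emptyset$ presumably arises because each type's worst-case posterior puts mass at least $1-\varepsilon$ on profiles where every opponent is lower. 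If the construction only gives the weaker statement, I would redo the Theorem~\ref{thm:CP-amb} construction so that it does.

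Given that, under this posterior the payoff to any $a_i\ne a^*_i$ is at most $(1-\varepsilon)\max_{a_i\neq a^*_i}u_i(a_i,a^*_{-i})+\varepsilon M$, where $M=\max|u_i|$. Under any posterior, and for any opponent play, the payoff to $a^*_i$ is at least $\min_{a_{-i}}u_i(a^*_i,a_{-i})$. The malevolent-Nature timing means $V_i(a_i,\cdot)$ is evaluated at its own worst posterior, so it is at most its value under $P_i$. By the strict inequality~\eqref{eq:gensafe}, $a^*_i$ is then strictly better than every other action once $\varepsilon$ is small relative to the security gap. Mixing does not help, because $V_i$ is linear in $\alpha_i$. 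Since $\bigcap_k E^k=\emptyset$, every type is eventually eliminated, and this yields the unique equilibrium $\sigma^*_i(a^*_i\mid t_i)=1$.

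The main obstacle is the step described above. Theorem~\ref{thm:CP-amb} only asserts $C^\varepsilon(E)=\emptyset$, which by itself yields that some opponent is eliminated, not all opponents simultaneously. So one must open the proof of Theorem~\ref{thm:CP-amb} (ordered types, unambiguous marginals, ambiguous rankings) and verify that the extreme prior in each type's posterior set, which Assumption~\ref{asp:ML}(ii) guarantees is retained since marginals are common, puts mass close to one on all opponents having lower types. With two players this is immediate, as in the illustrative example. With $|I|>2$, the priors must order all opponents below $t_i$ jointly.
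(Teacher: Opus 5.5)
There is a genuine gap, and it is the one you flag but leave open. Theorem~\ref{thm:CP-amb} as stated only says that a type leaving $E^{k}$ has a posterior under which \emph{some} opponent lies outside $E^{k-1}$. Security, however, needs a posterior putting high mass on \emph{all} opponents playing $a^*_{-i}$. Your induction over the stages $E^k=B^{\varepsilon}(E^{k-1})$ also cannot be rescued just by inspecting the construction, because for $|I|\ge 3$ its inductive step fails there.

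Take the paper's type space: $T_i=\{0,\dots,N\}$, $E_i=\{1,\dots,N\}$, and $P^j$ uniform on the profiles with $t_l\equiv t_j-1$ mod $N+1$ for all $l\neq j$. Type $t_i=N$ already leaves $E^1_i$, because under $P^j(\cdot\mid t_i=N)$, $j\neq i$, opponent $j$ has the crazy type $0$. But under that posterior every other opponent has type $N$, and under $P^i(\cdot\mid t_i=N)$ all opponents have type $N-1$. So knowing only that crazy types play $a^*$ does not pin down $t_i=N$. For instance, in the secure game $u_i(a)=\mathbf 1_{a_i=1}(\mathbf 1_{\{\exists j\neq i:\, a_j=1\}}-c)$ with $a^*=\mathbf 0$, action $1$ is not yet ruled out for type $N$ at that stage. (With $|I|=2$, ``some'' and ``all'' opponents coincide, and your argument is complete.)

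The missing idea is a belief operator about the event that \emph{at least one} opponent deviates from $a^*$. The paper proves Theorem~\ref{thm:CP-amb} for generalized $\mathbf f$-belief (Theorem~\ref{thm:CP-general}). In the construction above, $C^{\mathbf f}_i(E)=\emptyset$ for every nondecreasing $f_i$ with $f_i(\emptyset)<0$. The reason is that $P^i(\cdot\mid t_i=n+1)$ puts probability one on $t_{-i}=\mathbf n$, i.e.\ all opponents jointly one step lower. This posterior belongs to $\cP_{t_i}$ under every rule satisfying Assumption~\ref{asp:ML}, since all priors share the uniform marginal on $T_i$.

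The paper then picks $\gamma<1$ such that any belief with probability at least $\gamma$ on $a^*_{-i}$ makes every $a_i\neq a^*_i$ strictly worse than $\min_{a_{-i}}u_i(a^*_i,a_{-i})$. It sets $f_i(\emptyset)=\gamma-1$ and $f_i(S)=\gamma$ for $S\neq\emptyset$. It shows that the set $\hat T$ of types not playing $a^*$ with probability one in a given equilibrium is contained in $E$ and is lower $\mathbf f$-evident, so $\hat T\subseteq C^{\mathbf f}(E)=\emptyset$.

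Equivalently, within that specific construction you can induct on the type index $n$ instead of on $B^{\varepsilon}$-stages. Once all types $\le n$ play $a^*$, the posterior $P^i(\cdot\mid t_i=n+1)$ caps the payoff of any $a_i\neq a^*_i$ at $u_i(a_i,a^*_{-i})$. Security then forces $a^*_i$ at $t_i=n+1$.
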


Theorem~\ref{prop:contagious} implies that small payoff ambiguity has more extreme strategic implications than small probabilistic payoff uncertainty. 
To understand the gap, consider the introductory example, where $A_i = \{0, 1\}$ and $u_i(a)= \mathbf{1}_{a_i = a_{-i} =1} -ca_i$ for some $c \in (0, 1)$. There, $\mathbf{0}$ is contagious with respect to probabilistic $\varepsilon$-elaborations if and only if $c \geq \frac{1}{2}$, i.e., if and only if the other strict Nash equilibrium $\mathbf{1}$ is not risk-dominant. In contrast, since action $0$ is safe, Theorem~\ref{prop:contagious} implies that under ambiguous elaborations, $\mathbf{0}$ is contagious for \emph{all} $c > 0$ no matter how small, i.e., even if action $0$ is almost dominated by $1$. Likewise, while in general games $G$ no if-and-only-if characterization of contagion under probabilistic uncertainty is known, \cites{kajii1997} analysis implies that a necessary condition for $a^*$ to be contagious under probabilistic uncertainty is that no other equilibrium $\hat a$ is \textbf{\textit{$p$-dominant}} for some $p < 1/{|I|}$ (i.e., $\hat{a}_i$ is a strict best-response under $u_i$ to any opponent strategy that assigns probability at least $p$ to $\hat{a}_{-i}$); see Section~\ref{sec:robustness} for more discussion.\footnote{For generic binary-action supermodular games, 
\cite{oyama2020} provide an if-and-only-if condition for contagion under probabilistic uncertainty in terms of monotone potential maximization.}
In contrast, under ambiguity, Theorem~\ref{prop:contagious} implies that a secure $a^*$ is contagious even if another equilibrium $\hat a$ is $p$-dominant for arbitrarily small $p > 0$. Thus, vanishingly small payoff ambiguity can select a secure $a^*$ over an equilibrium in almost dominant strategies.

The following example illustrates how the security of $a^*$ allows one to exploit the failure of the critical path theorem in Theorem~\ref{thm:CP-amb} to construct elaborations where $a^*$ is the unique equilibrium. The proof of Theorem~\ref{prop:contagious} (Appendix~\ref{app:contagious}) adapts the ideas in the example to general games with more than two players.

\begin{ex}[Illustration of Theorem~\ref{prop:contagious}]\label{ex:contagious}
Consider the introductory example. To illustrate why $a^*={\bf 0}$ is contagious under general updating rules, consider $\varepsilon$-elaborations where $a^*_i = 0$ is dominant for non-normal types. Then, in any equilibrium $\sigma$, $T^* ({\bf 1}) :=\{t: \sigma({\bf 1}|t) > 0\}\subseteq T^*$. Moreover, since $a^*_i = 0$ is safe, a type $t_i$ is willing to choose action 1 in equilibrium only if \textbf{\emph{all posteriors}} of $t_i$ assign probability at least $p=c$ to $T^*({\bf 1})$; otherwise, the maxmin expected utility of playing action 1 is less than the payoff $0$ to choosing $0$.\footnote{More generally, if $a^*$ is secure, a type $t_i$ is willing to choose some action $a_i \neq a^*_i$ in equilibrium only if all of $t_i$'s posteriors assign sufficiently high probability to opponents deviating from $a^*_{-i}$.} Thus, $T^* ({\bf 1})\subseteq B^p(T^* ({\bf 1}))$ (i.e., $T^* ({\bf 1})$ is  $p$-evident), which (by standard arguments) implies $T^* ({\bf 1})\subseteq C^p (T^*)$. But by Theorem~\ref{thm:CP-amb}, the elaboration can be constructed in such a way that $C^p(T^*)=\emptyset$. Thus, $\sigma ({\bf 0} \mid t) = 1$ for all $t$. \finex
\end{ex}

\subsection{Role of Belief Updating}\label{sec:ML}

While Theorems~\ref{thm:CP-amb}--\ref{prop:contagious} apply under general updating rules, some updating rules give rise to results that are even more extreme departures from probabilistic uncertainty. Indeed, we now illustrate that updating rules that are less conservative (i.e., induce smaller sets of posteriors) can magnify the contagious effects of ambiguity. For expositional simplicity, we focus on contrasting the predictions under the most conservative updating rule satisfying Assumption~\ref{asp:ML}, full Bayesian updating, with those under the refinement of maximum-likelihood updating. Appendix~\ref{app:CPJ-FB} provides intermediate results for a class of updating rules that are in between these two updating rules.

To contrast the higher-order belief effects of ambiguity across different updating rules, we introduce the following relaxation of ${p}$-belief. Define the event that $E$ is \textit{\textbf{individually ${p}$-justifiable}} by
\[
\overline B^{p}(E)=\prod_{i\in I}\overline B^{p}_i(E), \quad \text{ where }  \overline B^{p}_i(E)=\{t_i \in E_{i}: \exists P_i \in \cP_{t_i} \text{ s.t. }  P_i(E_{-i}) \geq p \}.
\]
Define the event that $E$ is \textit{\textbf{commonly ${p}$-justifiable}} by
\[
\overline C^{p}(E)=\bigcap_{k\in\mathbb N} (\overline B^{p})^k(E),
\]
where $(\overline{B}^{p})^0(E) = E$ and $(\overline{B}^{p})^k(E)=\overline{B}^{p}((\overline{B}^{p})^{k-1}(E))$ for all $k\geq 1$. Note that $\overline B^{p}_i(E)$ replaces the universal quantifier over posteriors $P_i \in \cP_{t_i} $ in the $p$-belief operator $B^p_i (E)$ with an existential quantifier. In unambiguous type spaces, $B^{p}(E)= \overline B^{p}(E)$ and $C^{p}(E) = \overline C^{p}(E)$; more generally, $B^{p}(E)\subseteq \overline B^{p}(E)$, and hence $C^{p}(E)\subseteq \overline C^{p}(E)$ by standard monotonicity arguments. Thus, common $p$-justifiability is in general a weaker requirement than common $p$-belief. 

\begin{prop}\label{prop:CP-ML} $ $
\begin{enumerate}

\item Under full Bayesian updating, the critical path theorem holds under common justifiability:\footnote{We use ``under full Bayesian updating'' to mean that for every $(T, \cP)$, $i \in I$, and $t_i \in T_i$, we have $\cP_{t_i} = \cP^{\rm FB}_{t_i}$, with the analogous meaning for ``under maximum-likelihood updating''.} For every type space $(T, \cP)$, product event $E$, $P \in \cP$, and $p \in (0, 1/|I|)$, we have
$P(\overline C^{p}(E)) \geq 1-(1-P(E))\frac{1-p}{1-|I|p}$.

\item Under maximum-likelihood updating, Theorem~\ref{thm:CP-amb} extends to common justifiability: For any $\varepsilon \in (0, 1)$, there is a type space $(T, \cP)$ and a product event $E$ such that $P(E)= P'(E) \geq 1-\varepsilon$ for all $P, P'\in \cP$ but $\overline C^{ {\varepsilon}} (E)=\emptyset$.

\end{enumerate}

\end{prop}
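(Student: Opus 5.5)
For part 1, I would fix $P\in\cP$ and reduce to the unambiguous critical path theorem. Under full Bayesian updating, $P(\cdot\mid t_i)\in\cP^{\rm FB}_{t_i}$ for every $t_i$. Hence if $t_i\in E_i$ satisfies $P(E_{-i}\mid t_i)\ge p$, then $t_i\in\overline B^{p}_i(E)$. Let $B^{p}_{P}$ and $C^{p}_{P}$ denote the belief and common belief operators of the unambiguous type space $(T,\{P\})$. Then $B^{p}_P(F)\subseteq \overline B^{p}(F)$ for every product event $F$. Since $\overline B^{p}$ is monotone, induction on $k$ gives $(B^p_P)^k(E)\subseteq(\overline B^p)^k(E)$: if $(B^p_P)^{k-1}(E)\subseteq(\overline B^p)^{k-1}(E)$, then
\[
(B^p_P)^k(E)\subseteq \overline B^p\bigl((B^p_P)^{k-1}(E)\bigr)\subseteq \overline B^p\bigl((\overline B^p)^{k-1}(E)\bigr).
\]
Intersecting over $k$ yields $C^p_P(E)\subseteq\overline C^p(E)$. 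The \citeauthor{kajii1997} critical path theorem applied to $(T,\{P\})$ then gives
\[
P(\overline C^p(E))\ge P(C^p_P(E))\ge 1-(1-P(E))\tfrac{1-p}{1-|I|p}.
\]
The only points to check are monotonicity of $\overline B^p$ (immediate) and that the theorem applies with $P(\{t_i\}\times T_{-i})>0$ (assumed). This part is routine.

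For part 2, I need a type space in which maximum-likelihood updating singles out, at each type, posteriors that put almost no weight on $E_{-i}$. My plan is to modify the construction behind Theorem~\ref{thm:CP-amb}. That construction makes marginals unambiguous, so ML equals FB there, and common $\varepsilon$-belief fails only because \emph{some} posterior is bad. Here I want \emph{every} ML-selected posterior at each type to be bad. So I would deliberately make the marginals ambiguous, in such a way that the prior maximizing the likelihood of $t_i$ is exactly the one that puts $t_i$ next to a lower-ranked opponent.

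Concretely, for two players I would take email-style types $T_i=\mathbb N$ and priors indexed by a parameter that shifts mass between the configurations $(n+1,n)$ and $(n,n+1)$. The indexing should be done so that, for each $t_i=n\ge 1$, the unique likelihood maximizer concentrates $t_i$'s conditional belief on $t_{-i}=n-1$, or at least puts weight at least $1-\varepsilon$ there.

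A candidate family is $\{P^m:m\in\mathbb N\}\cup\{\text{limits}\}$, where $P^m$ places extra mass on type $m$ of each player through pairs $(m,m-1)$ and $(m-1,m)$. Then $t_i=m$ is most likely under $P^m$, and under $P^m$ the type $m$ believes the opponent has type $m-1$. Meanwhile a common baseline mass $1-\varepsilon$ on normal types keeps $P(E)$ constant across priors, with $E=\{t\ge(1,1)\}$. Because ML updating uses only the maximizers, the justifiability operator also sees only downward-looking posteriors. Induction then gives $(\overline B^\varepsilon)^k(E)\subseteq\{t: t_j\ge k+1\ \forall j\}$, so $\overline C^\varepsilon(E)=\emptyset$. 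For more than two agents I would use the analogous ranked construction, with one agent's type below the others, as in the proof of Theorem~\ref{thm:CP-amb}.

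The main obstacle is reconciling three requirements. First, $P(E)$ must be identical and at least $1-\varepsilon$ across priors. Second, each type's ML maximizer must be unique, or at least all maximizers must be bad. Third, $\cP$ must be compact, and each type must have positive probability under every prior. With countably many types, the perturbation sizes must shrink with $m$. I would first try a finite truncation: types $0,\dots,N$ with $N$ large, so that $\cP$ is a finite set. Types near the top need a separate check, and I would handle them by making the top type's ML prior also downward-looking.
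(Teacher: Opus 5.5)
\textbf{Part 1.} Your argument is correct and is exactly the paper's. Since $P(\cdot\mid t_i)\in\cP^{\rm FB}_{t_i}$, you get $B^p_P\subseteq\overline B^p$. Monotonicity then gives $C^p_P(E)\subseteq\overline C^p(E)$, and the Kajii--Morris bound finishes the proof.

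\textbf{Part 2.} This part is not proved. You never exhibit a type space, and the level-indexed family you sketch runs into exactly the ``main obstacle'' you flag; as described, it fails.

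\emph{Bump sizes must strictly decrease.} In your family, type $t_1=m$ gains likelihood under $P^m$ (via $(m,m-1)$). It also gains likelihood under $P^{m+1}$, via that prior's upward pair $(m,m+1)$. So with a common baseline, maximum likelihood selects only the downward-looking prior at $t_1=m$ only if the bump sizes satisfy $x_m>x_{m+1}$.

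\emph{$P(E)$ is then not constant.} The level-$m$ bump lies inside $E=\{t\ge(1,1)\}$ for $m\ge2$. It lies outside $E$ for $m=1$, since $(1,0),(0,1)\notin E$. With a common baseline, the differences $P^m(E)-P^{m'}(E)$ come entirely from the bumps, and the $x_m$ must differ across $m$. So $P^m(E)$ is not constant, contrary to your claim, and even the total masses do not match.

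\emph{Compensation is unresolved.} Restoring constancy requires prior-specific compensating mass. If you place it on existing types in $E$, it changes their likelihood rankings and posteriors. If you place it on new type profiles in $E$, those new types' ML posteriors must again be made downward-looking, which is the original problem.

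\emph{Dominance at every level.} To get $P^m(t_2\ge m\mid t_1=m)<\varepsilon$, the bump must swamp the baseline mass on $\{t_1=m,\,t_2\ge m\}$ by a factor of order $(1-\varepsilon)/\varepsilon$ at every level. Meanwhile, your baseline of mass $1-\varepsilon$ on $E$ leaves at most $\varepsilon$ of non-baseline mass in each prior.

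None of these constraints is reconciled, and the finite truncation does not address them.

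\textbf{The missing idea.} Make the ambiguity be about \emph{which player is ahead}, not \emph{at which level there is a bump}. The paper uses only $|I|$ priors. Take $T_i=\mathbb N$, $r\in(1-\varepsilon,1)$, $c=(1-\varepsilon)(1-r)/r$ and $d=\varepsilon-c>0$. Let $P^i$ put mass $cr^n$ on $\mathbf n+\mathbf 1_i$ (player $i$ one step above everyone else) and mass $d$ on $\mathbf 0$. Then:
\begin{itemize}
\item $P^i(t_i=n+1)=cr^n>cr^{n+1}=P^j(t_i=n+1)$ for every $j\neq i$, so ML selects $P^i$ uniquely at every normal type of $i$.
\item $P^i(\cdot\mid t_i=n+1)$ is a point mass on $t_{-i}=\mathbf n$.
\item By symmetry, $P^i(E)=\sum_{n\ge1}cr^n=1-\varepsilon$ for every $i$.
\item $\cP$ is finite, hence compact, and every type has positive probability under every prior.
\end{itemize}
The geometric decay is what simultaneously delivers unique ML selection, degenerate downward posteriors, and identical $P(E)$, which is the reconciliation your sketch lacks. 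With this type space, your induction $(\overline B^\varepsilon)^k_i(E)\subseteq\{k+1,k+2,\ldots\}$ goes through immediately.
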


Thus, under full Bayesian updating, high prior confidence in an event $E$ can coexist with failures of common $\varepsilon$-belief of $E$ for arbitrarily small $\varepsilon$, but it at least guarantees common $p$-justifiability of $E$ for moderate $p$. In contrast, maximum-likelihood updating exacerbates the higher-order belief effects of ambiguity: No amount of prior confidence in $E$ is enough to even rule out failures of common $\varepsilon$-justifiability for arbitrarily small $\varepsilon$. 

\begin{ex}[Illustration of Proposition~\ref{prop:CP-ML}]\label{ex:CP-ML} Consider the introductory example. Under maximum-likelihood updating, every normal type $t_1 \geq 1$ of player 1 has the unique posterior $P^1 (\cdot \mid t_1)$, as $P^1(\{t_1\}\times T_{2})>P^\lambda(\{t_1\} \times T_{2})$ for any $\lambda\not=1$. Similarly, for every normal type $t_2 \geq 1$ of player 2, ${\cal P}_{t_2}^{\rm ML} = \{P^0 (\cdot \mid t_2) \}$.\footnote{However, crazy types $t_1=0$ (resp. $t_2=0$) use $P^0$ (resp. $P^1$) to form posteriors.} Thus, each normal type $t_i $'s unique posterior assigns probability one to the event that $t_{-i}=t_{i}-1$. As a result, the set of normal type profiles $E=\{1,2,\ldots\}^2$ is not individually $\varepsilon$-justifiable at $t_i=1$, and inductively, $E$ is not commonly $\varepsilon$-justifiable at \emph{any} type profile.\finex
\end{ex}

Under maximum-likelihood updating, the stronger failure of the critical path theorem in Proposition~\ref{prop:CP-ML} also gives rise to an extreme form of contagion that does not rely on security. (For the intermediate class of updating rules in Appendix~\ref{app:CPJ-FB}, we instead provide a sufficient condition for contagion that retains a weaker form of security.)

\begin{prop}\label{prop:contagious-ML}
Fix any complete-information game $G$. Under maximum-likelihood updating, any strict Nash equilibrium $a^*$ of $G$ is contagious.
\end{prop}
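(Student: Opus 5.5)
The plan is to combine the type space from part 2 of Proposition~\ref{prop:CP-ML} with crazy types for whom $a^*_i$ is strictly dominant. We then run the contagion argument of Example~\ref{ex:contagious}, with common $p$-justifiability in place of common $p$-belief. The key point is that, under maximum-likelihood updating, the type space can be built so that every type has a \emph{unique} posterior. Once the posterior is unique, justifiability and belief coincide type by type, so security is no longer needed and strictness of $a^*$ suffices.

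\textbf{Step 1 (threshold from strictness).} Since $a^*$ is a strict Nash equilibrium and $A$ is finite, there is $p>0$ with the following property. For each $i$, $a^*_i$ is the unique best response under $u_i$ to any opponent mixed profile that puts probability at least $1-p$ on $a^*_{-i}$. Concretely, take $p$ less than $\min_i \min_{a_i\neq a^*_i}\bigl(u_i(a^*)-u_i(a_i,a^*_{-i})\bigr)$ divided by $2\max|u_i|+1$.

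\textbf{Step 2 (elaboration).} Fix $\varepsilon\le p$, and shrink $\varepsilon$ further if needed. Take the type space $(T,\cP)$ and the event $E=\prod_i E_i$ from the proof of Proposition~\ref{prop:CP-ML}(2), which is the $|I|$-player analog of Example~\ref{ex:CP-ML}. In it, $P(E)\ge 1-\varepsilon$ for all $P\in\cP$, and I will arrange that each type's maximum-likelihood posterior set is a singleton, generalizing the example where $t_i\ge1$ believes $t_{-i}=t_i-1$ for sure. Make the types in $E_i$ normal, with payoffs $u_i$. Make the types outside $E_i$ crazy, with payoffs $u_i$ plus a bonus that renders $a^*_i$ strictly dominant. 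Then $T^*\supseteq E$, so this is an $\varepsilon$-elaboration.

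\textbf{Step 3 (contagion).} Let $\sigma$ be any equilibrium. Define $F_i=\{t_i:\sigma_i(a^*_i\mid t_i)<1\}$, the types that play something other than $a^*_i$ with positive probability; crazy types never do, so $F_i\subseteq E_i$. Take $t_i\in F_i$ with unique posterior $P_{t_i}$. Some action $a_i\neq a^*_i$ is a best response for $t_i$, so by Step 1 the induced opponent profile must put probability more than $p$ on deviations from $a^*_{-i}$. That forces $P_{t_i}(F_{-i}^{\cup})>p/(|I|-1)$, where $F_{-i}^{\cup}$ is the set of opponent profiles in which at least one $j$ lies in $F_j$.

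This is a union event rather than a product event, so it does not directly give $p$-evidence of $F$. The cleanest fix is to design the type space so that every type's posterior is concentrated on a single opponent profile, as in the example (for instance, via a chain-type construction in which type $n$ of each player is certain of specific lower-indexed types). Then $t_i\in F_i$ implies that opponent profile lies in $F^{\cup}$, and we derive a contradiction by induction on the type index. The lowest-indexed type in $\bigcup_j F_j$ is certain of an opponent profile with lower indices, and hence with no $F$-types, which is impossible. So $F=\emptyset$, and every equilibrium plays $a^*$ at all types. Existence of an equilibrium, by the fixed-point argument cited in Section~\ref{sec:game}, gives uniqueness.

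\textbf{Main obstacle.} The hard part is the multi-player construction in Step 2. We need maximum-likelihood posteriors that are degenerate on strictly lower-ranked opponent profiles, while the marginals and $P(E)$ stay unambiguous and at least $1-\varepsilon$. I would build it as a cyclic chain of type profiles $(n+1,\dots,n+1,n,\dots)$, using one prior per player $i$ that makes $i$ the top-ranked player. I would then check that this prior uniquely maximizes the likelihood of each type $t_i\ge1$, exactly as $P^1$ does for player 1 in Example~\ref{ex:CP-ML}.
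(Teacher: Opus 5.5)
Your argument is correct and is essentially the paper's proof. The paper takes the type space from the proof of Proposition~\ref{prop:CP-ML'} ($P^i$ puts weight $cr^n$ on $\mathbf n+\mathbf 1_i$ and $d$ on $\mathbf 0$, so each normal type $t_i=n+1$ has a unique maximum-likelihood posterior, degenerate on $t_{-i}=\mathbf n$), makes types outside $E$ crazy with $a^*_i$ strictly dominant, and shows that the set of deviating types is empty. The only difference is that the paper handles your non-product event $F^{\cup}_{-i}$ through the generalized operator with $f_i(\emptyset)=\gamma-1$ and $f_i(S)=\gamma$ for $S\neq\emptyset$, then invokes $\overline C^{\mathbf f}(E)=\emptyset$, whereas you run the same degenerate-posterior induction directly.

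Two small corrections are needed.
\begin{enumerate}
\item The construction must raise exactly one player per profile and use geometric rather than equal weights. A cyclic equal-weight chain as in Theorem~\ref{thm:CP-amb} gives all priors the same marginals, so maximum-likelihood updating collapses to full Bayesian updating and posteriors are not unique.
\item Your explicit threshold (strictness gap divided by $2\max|u_i|+1$) fails when the gap exceeds $1$; divide by the gap plus $2\max|u_i|$ instead. This does not affect the argument, since your final induction only uses $P_{t_i}(F^{\cup}_{-i})>0$, which strictness alone delivers.
\end{enumerate}
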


The following example illustrates the connection between failures of common justifiability and the contagion result for general strict Nash equilibria. Beyond the question of contagion, the contrast between Examples~\ref{ex:contagious} and \ref{ex:contagious2} (and the proofs of Theorem~\ref{prop:contagious} and Proposition~\ref{prop:contagious-ML}) highlights a broader methodological point: Under ambiguity, different belief operators---$p$-belief vs.\ $p$-justifiability---are needed to capture players' coordination incentives for secure vs.\ non-secure actions.

\begin{ex}[Illustration of Proposition~\ref{prop:contagious-ML}]\label{ex:contagious2} Consider the non-secure action profile ${\bf 1}$ in the introductory example. To see why this is contagious under maximum-likelihood updating, consider $\varepsilon$-elaborations where $a^*_i = 1$ is dominant for non-normal types, so $T^* ({\bf 0}) :=\{t: \sigma({\bf 0}|t) > 0\}\subseteq T^*$ in any equilibrium $\sigma$. Now, $t_i$ is willing to choose action 0 in equilibrium only if \textbf{\emph{some posterior}} of $t_i$ assigns probability at least $p=1-c$ to $T^*({\bf 0})$; otherwise, the maxmin expected utility of playing action 1 is greater than the payoff $0$ to choosing $0$. Thus, in contrast with the requirement that $T^*({\bf 1}) \subseteq  B^p(T^*({\bf 1}))$ in Example~\ref{ex:contagious}, we now need $T^*({\bf 0})\subseteq \overline B^p(T^*({\bf 0}))$, and hence  $T^* ({\bf 0}) \subseteq \overline C^p (T^*)$. But under maximum-likelihood updating, Proposition~\ref{prop:CP-ML} shows we can construct an elaboration where even $\overline C^p(T^*)=\emptyset$. Then, $\sigma ({\bf 1} \mid t) = 1$ for all $t$. \finex
\end{ex}

\section{Implications}\label{sec:implication}

We now highlight the dual implications of our main results for robust equilibrium predictions and unique implementation in games. Throughout this section, for each type space $(T, \cP)$, we fix some arbitrary updating rules $(\cP_{t_i})$ satisfying Assumption~\ref{asp:ML}.

\subsection{Equilibrium (Non-)Robustness}\label{sec:robustness}

Consider the question of equilibrium robustness studied by \cite{kajii1997} and a large follow-up literature.\footnote{See, e.g., \cite{ui2001, morris2005, oyama2009, chassang2011, oyama2020, pei2025}, and the survey by \cite{KMsurvey2, KMsurvey}. More broadly, this question also underlies the large literature on robust mechanism design \citep[for a survey, see][]{bergemann2013}.} An analyst views a complete-information game $G = (A,u)$ as only an approximation of the true environment, allowing for the possibility that players face small exogenous payoff uncertainty. Crucially, the analyst is cautious with respect to such payoff uncertainty and takes a robust approach to equilibrium predictions: for an equilibrium of $G$ to be robust, he requires that it remain a valid prediction under {\it every} $\varepsilon$-elaboration with sufficiently small $\varepsilon$.

Formally, extending \cites{kajii1997} definition to allow for payoff ambiguity, call an equilibrium $\hat a$ of $G$ \textbf{\textit{robust}} if for every $\gamma < 1$, there is $\overline\varepsilon>0$ such that every $\varepsilon$-elaboration of $G$ with $\varepsilon\leq\overline\varepsilon$ admits an equilibrium $\sigma$ with $\sum_{t\in T}P(t)\sigma(\hat a|t)\geq \gamma$ for all $P\in \cP$; that is, in all close enough elaborations of $G$, $\hat a$ is played with high probability in some equilibrium.\footnote{Note that since we are fixing some updating rules $(\cP_{t_i})$ for each type space $(T, {\cal P})$, this robustness notion is weaker than one that would require robustness under all possible updating rules satisfying Assumption~\ref{asp:ML}. This makes the implications we state below for \emph{non}-robustness stronger.} For $\hat a$ to be robust, it is necessary that no other equilibrium $a^*$ of $G$ is contagious. Thus, Theorem~\ref{prop:contagious} immediately implies the following:

\begin{cor}\label{cor:non-robust}
Fix any complete-information game $G$.  An equilibrium $\hat a$ of $G$ is not robust if there is some secure action profile $a^*\not=\hat a$.  
\end{cor}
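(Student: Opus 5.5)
The corollary follows from Theorem~\ref{prop:contagious} together with the observation in the text that robustness of $\hat a$ is incompatible with contagion of any other profile. I will prove the contrapositive: assume $a^*\neq\hat a$ is secure, and exhibit a $\gamma<1$ for which no $\overline\varepsilon>0$ works.

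First I note that a secure profile is automatically a strict Nash equilibrium of $G$. Indeed, for each $i$,
\[
u_i(a^*)\geq \min_{a_{-i}}u_i(a^*_i,a_{-i})>\max_{a_i\neq a^*_i}u_i(a_i,a^*_{-i}).
\]
So Theorem~\ref{prop:contagious} applies directly to $a^*$. For the updating rules fixed in this section, it gives, for every $\varepsilon>0$, an $\varepsilon$-elaboration whose unique equilibrium $\sigma^*$ satisfies $\sigma^*_i(a^*_i\mid t_i)=1$ for all $i$ and $t_i$. Those updating rules satisfy Assumption~\ref{asp:ML}, which is all the theorem requires.

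Next I show that $\hat a$ gets probability zero in such an elaboration. Since $a^*\neq\hat a$, there is some player $j$ with $\hat a_j\neq a^*_j$. The equilibrium strategies are independent across players conditional on $t$, so $\sigma^*(\hat a\mid t)=\prod_i\sigma^*_i(\hat a_i\mid t_i)$. This product is at most $\sigma^*_j(\hat a_j\mid t_j)=0$ for every $t$. Hence
\[
\sum_t P(t)\,\sigma^*(\hat a\mid t)=0 \quad\text{for all } P\in\cP.
\]

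Now take any $\gamma\in(0,1)$. For every $\overline\varepsilon>0$, choose $\varepsilon=\overline\varepsilon$ and the corresponding elaboration. Its only equilibrium gives $\hat a$ probability $0<\gamma$, so the robustness requirement fails for this $\gamma$, and $\hat a$ is not robust.

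Essentially all the substantive work lies in Theorem~\ref{prop:contagious}, which I take as given. Two points need care. The first is that the elaboration delivered by Theorem~\ref{prop:contagious} must be valid for the particular updating rules fixed in this section; this holds because the theorem covers any rule satisfying Assumption~\ref{asp:ML}. The second is that uniqueness of the equilibrium matters: it rules out any other equilibrium placing high probability on $\hat a$.
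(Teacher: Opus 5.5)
Your proof is correct and follows the paper's route. The paper simply observes that robustness of $\hat a$ is impossible if another profile is contagious, and then invokes Theorem~\ref{prop:contagious}. You spell out that observation explicitly: you check that security implies strict Nash, that the elaboration remains valid for the fixed updating rules, and that $\hat a$ receives probability zero in the unique equilibrium, which violates robustness for any $\gamma\in(0,1)$.
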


The literature has developed various sufficient conditions for an equilibrium $\hat a$ to be robust to probabilistic payoff uncertainty. In particular, \cite{kajii1997} show that this is the case if $\hat a$ is $p$-dominant for $p < 1/|I|$, which reduces to the requirement that $\hat a$ is risk-dominant in (symmetric) $2\times2$ coordination games; follow-up work has introduced other sufficient conditions that also reduce to risk-dominance in $2\times2$ coordination games. The key implication of Corollary~\ref{cor:non-robust} is that none of these generalizations of risk-dominance is sufficient to guarantee robustness to payoff ambiguity: As we saw in the introductory $2 \times 2$ game, even if $\hat a$ is $p$-dominant for $p \approx 0$, this does not rule out the presence of other secure equilibria. 

At the same time, our analysis also yields sufficient conditions for robustness to payoff ambiguity by appropriately combining $p$-dominance with security. Call $\hat a \in A$ \textit{\textbf{$p$-secure}} if for all $i$,
\begin{equation}\label{eq:p-secure}
\min_{a_{-i} \in A_{-i}} u_i(\hat a_i, a_{-i}) > \max_{a_i\neq \hat a_i} u_i(a_i, \alpha_{-i}), \quad  \forall \alpha_{-i}\in\Delta(A_{-i}) \text{ with  } \alpha_{-i} (\hat a_{-i})\geq p.
\end{equation}
Note that replacing the left-hand side of (\ref{eq:p-secure}) with $u_i(\hat a_i, \alpha_{-i})$ yields $p$-dominance. Thus, analogous to how security strengthens strict Nash, $p$-security strengthens $p$-dominance by incorporating worst-case reasoning about opponents' behavior into each $i$'s calculation of her payoffs to playing $\hat a_i$. Likewise, (\ref{eq:p-secure}) strengthens the security requirement $\min_{a_{-i} \in A_{-i}} u_i(\hat a_i, a_{-i}) > \max_{a_i\neq \hat a_i} u_i(a_i, \hat a_{-i})$ by incorporating $p$-dominance reasoning into the right-hand side. Any $p$-dominant $\hat a$ where $\hat a_i$ is safe for all $i$ is $p$-secure.

Under full Bayesian updating, Proposition~\ref{prop:CP-ML} can be used to show that $p$-security for $p < 1/|I|$ is sufficient for robustness. At the same time, since less conservative updating rules can exacerbate contagion, the corresponding conditions for robustness are in general more demanding. Indeed, under the extreme case of maximum-likelihood updating, Proposition~\ref{prop:contagious-ML} implies that no equilibrium is robust when there are multiple strict Nash equilibria, so no degree of $p$-security with $p > 0$ ensures robustness. More generally, Appendix~\ref{app:CPJ-FB} considers a parametrized class of updating rules in between full Bayesian and maximum-likelihood: Now, $p$-secure equilibria for $p < \overline p$ are robust, where $\overline p$ is smaller the less conservative the updating rule.

\begin{cor}\label{cor:FB/ML} \
\begin{enumerate}
\item  Under full Bayesian updating, any $p$-secure equilibrium $\hat a$ of $G$ with $p<1/|I|$ is robust.
\item Under maximum-likelihood updating, no game $G$ with multiple strict Nash equilibria admits a robust equilibrium.
\end{enumerate}

\end{cor}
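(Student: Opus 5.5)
Part 2 follows directly from Proposition~\ref{prop:contagious-ML}, and the substance lies in part 1. I plan to run a Kajii--Morris style argument, using Proposition~\ref{prop:CP-ML}(1) in place of the critical path theorem.

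\emph{Part 2.} If $G$ has two distinct strict Nash equilibria $a^*\neq\hat a$, then Proposition~\ref{prop:contagious-ML} says $a^*$ is contagious under maximum-likelihood updating. So for every $\varepsilon>0$ there is an $\varepsilon$-elaboration whose unique equilibrium plays $a^*$ at every type, and hence plays $\hat a$ with probability $0$ under every prior. Thus $\hat a$ is not robust; by symmetry $a^*$ is not robust either. Any equilibrium other than these two is also not robust, since contagion of $a^*$ rules out every $\hat a\neq a^*$. Hence no equilibrium is robust.

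\emph{Part 1.} Fix a $p$-secure $\hat a$ with $p<1/|I|$, and an $\varepsilon$-elaboration $(T,\cP,(A_i,v_i))$. I will construct an equilibrium $\sigma$ in which all types in $C:=\overline C^{p}(T^*)$ play $\hat a$. First note that $C$ is a product event, $C\subseteq T^*$, and $C$ is $p$-justifiable-evident: $C\subseteq\overline B^p(C)$. This follows from the standard fixed-point property of the intersection, using monotonicity of $\overline B^p$. It does need care with the countable intersection, because $\overline B^p$ need not commute with decreasing intersections. My plan for that step is to use compactness of $\cP_{t_i}$ and a limit argument: take posteriors $P^k$ with $P^k(C^k_{-i})\ge p$ and extract a convergent subsequence. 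If this fails, I will define $C$ instead as the largest $p$-justifiable-evident subset of $T^*$; it contains $\overline C^p(T^*)$ only if evident sets sit inside the iterates, which is the direction I actually need.

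Construct $\sigma$ by fixing $\sigma_i(t_i)=\hat a_i$ for $t_i\in C_i$. Then take an equilibrium of the restricted game in which only types outside $C$ choose; this exists by the fixed-point argument of \cite{kajii2005}. It remains to check that $\hat a_i$ is a best response for each $t_i\in C_i$. For any $a_i\neq\hat a_i$, the payoff from $\hat a_i$ is $V_i(\hat a_i,\sigma_{-i},t_i)\ge\min_{a_{-i}}u_i(\hat a_i,a_{-i})$, because $t_i$ is normal. For the deviation, there exists $P_i\in\cP_{t_i}$ with $P_i(C_{-i})\ge p$. Under this posterior the induced opponent action distribution $\alpha_{-i}$ satisfies $\alpha_{-i}(\hat a_{-i})\ge p$. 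Hence
\[
V_i(a_i,\sigma_{-i},t_i)\le \sum_{t_{-i}}P_i(t_{-i})u_i(a_i,\sigma_{-i}(t_{-i}))=u_i(a_i,\alpha_{-i})<\min_{a_{-i}}u_i(\hat a_i,a_{-i})
\]
by \eqref{eq:p-secure}. Here it matters that the opponents' payoffs do not enter, and $t_i$'s payoff equals $u_i$ regardless of $t_{-i}$ by the definition of $T^*_i$. So $\hat a_i$ is strictly optimal. This is exactly where the maxmin structure pairs with justifiability: the best deviation value is bounded by a single posterior, while the value of $\hat a_i$ is bounded pointwise.

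For the probability bound, Proposition~\ref{prop:CP-ML}(1) gives, for every $P\in\cP$,
\[
\sum_t P(t)\sigma(\hat a\mid t)\ge P(C)\ge 1-\varepsilon\,\frac{1-p}{1-|I|p}.
\]
This exceeds any $\gamma<1$ once $\varepsilon$ is small. Since $p$ is fixed, $\overline\varepsilon$ is uniform across elaborations. I expect the main obstacle to be the evidentness and fixed-point issue for $\overline C^p$ with countable types. I would settle it via compactness, or else by citing that the critical-path bound in Proposition~\ref{prop:CP-ML}(1) is proved for the largest evident subset.
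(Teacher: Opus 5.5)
Your proposal is correct and follows the paper's own route. Part 2 is read off from Proposition~\ref{prop:contagious-ML}, and part 1 combines the restricted fixed-point argument of Lemma~\ref{lem:MS}(2), which you reprove inline, with the justifiability critical-path bound of Proposition~\ref{prop:CP-ML}(1). Your compactness argument for the evidentness of $\overline C^p(T^*)$ goes through as planned: every subset of a countable discrete space is closed, so weak limits preserve $P(C^m_{-i})\ge p$ for each $m$, and continuity from above finishes. This is the content of Lemma~\ref{lem:basic}(2), so the fallback is unnecessary.
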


\subsection{Unique Implementation in Games}\label{sec:contracts}

Consider a classic problem of contracting with externalities \citep[for a survey, see][]{halac2025}. Given some complete-information game $G = (A, u)$ describing players' interdependent incentives, a principal seeks to design payments to each player to implement some target action profile $a^* \in A$ as an equilibrium; crucially, to avoid coordination on another, less desirable equilibrium, the principal requires the equilibrium to be unique. Formally, the principal can offer each agent $i$ a bilateral contract that pays $w_i \geq 0$ if and only if $i$ plays action $a^*_i$, and the principal's goal is to choose the cheapest payment scheme that induces $a^*$ as the unique equilibrium.

Earlier work studied this unique implementation problem when the principal is restricted to deterministic payments $w_i$ \citep[e.g.,][]{segal2003}. More recently, \cite{halac2021} highlighted that the principal can do better by introducing probabilistic higher-order uncertainty via stochastic payments $w_i$.\footnote{We focus on contractible actions \citep[as in][]{segal2003, sakovics2012}, but our analysis can be extended to settings with moral hazard \citep[as in][]{halac2021}.} Moving beyond this, we allow the principal to design payment schemes that involve higher-order ambiguity, in the sense that each agent $i$ knows her own payment but faces ambiguity about others' payments, their beliefs about $i$'s payment, etc. Such ambiguity is natural in many contracting problems with externalities; for example, in organizations managers may provide employees with only vague information about coworkers' wages, and in the context of technology adoption, firms may offer personalized subsidies to new adopters, creating ambiguity among consumers about each other's offers. Based on the previous analysis, we show that ambiguous payments can lead to further significant reductions in the principal's unique implementation cost.

 {\bf Principal's problem.} The principal chooses an \textbf{\textit{ambiguous payment scheme}} $\cC = (T, \cP, w)$: Here, $(T, \cP)$ is an ambiguous type space, and $w = (w_i)_{i\in I}$ specifies a bounded payment $w_i :  T_i\to\mathbb R_+$ for each agent $i$ as a function of her type $t_i$, which is paid if and only if $i$ plays $a^*_i$. We identify each payment scheme $\cC = (T, \cP, w)$ 
with the induced ambiguous incomplete-information game with payoff functions
\[
v_i(a, t)= u_i(a) + w_i (t_i) {\bf 1}_{a_i=a^*_i}.
\]
Denote by $\Sigma(\cC)$ the set of equilibria of $\cC$. Slightly abusing notation, we denote by $a^*$ the strategy profile $\sigma$ of $\cC$ such that $\sigma (a^* \mid t) = 1$ for all $t \in T$.

The principal seeks to choose the cheapest payment scheme $\cC$ that implements $a^*$ as the unique equilibrium of $\cC$: The \textbf{\textit{unique implementation cost}} $W^{a^*}$ is given by
\begin{equation}\label{eq:implementation}
W^{a^*} =\inf_{\cC = (T, \cP, w) } \min_{P\in \cP} \sum_{t\in T} P(t)\sum_{i\in I}w_i(t_i) \quad  \text{ subject to }
\end{equation} 
\begin{equation}\tag{unique implementation}\label{eq:full}
\{a^*\}=\Sigma({\cal C}), 
\end{equation}
\begin{equation}\tag{consistency}\label{eq:consistency}
\sum_{t\in T} P(t)\sum_{i\in I}w_i(t_i)=\sum_{t\in T} P'(t)\sum_{i\in I}w_i(t_i), \;\; \forall P,P'\in \cP.
\end{equation}

Deterministic payment schemes correspond to $|T| = |\cP| = 1$; more generally, unambiguous payment schemes correspond to $|T| \geq 1$ but $|\cP| = 1$, i.e., allow for probabilistic higher-order uncertainty. Let $W^{a^*}_{\rm prob}$ denote the unique implementation cost when the principal is restricted to unambiguous payment schemes.

In allowing for ambiguous payment schemes, we impose a consistency condition as in \cite{tillio2016} and \cite{dutting2024}, which requires the principal to be indifferent among all distributions $P \in {\cal P}$. The interpretation is that this is necessary for the principal to be able to credibly introduce ambiguity about which of several payment schemes he is using; otherwise, agents might be able to narrow down the set of possible schemes by reasoning about the principal's incentives.

The implementation cost in (\ref{eq:implementation}) considers a $P \in \cP$ that minimizes the principal's expected payment. An interpretation is that the principal himself knows which $P \in \cP$ he uses, but keeps the choice of $P \in \cP$ ambiguous for agents (e.g., by using vague language or by conditioning the choice of $P$ on an event, such as the composition of an Ellsberg urn, whose realization is known to the principal but ambiguous to agents). However, by the consistency condition, (\ref{eq:implementation}) could equivalently consider $P \in \cP$ that maximize the principal's expected payment, capturing a principal who commits to also not learning the realized $P \in \cP$ (e.g., by delegating its choice to an intermediary) and has maxmin expected utility preferences.

{\bf Unique vs.\ partial implementation.} To illustrate how ambiguous payment schemes can facilitate unique implementation, we contrast the unique implementation cost $W^{a^*}$ with the \textbf{\textit{partial implementation cost}} $W^{a^*}_{\rm partial}$, i.e., the value of (\ref{eq:implementation}) when the unique implementation condition $\{a^*\}=\Sigma({\cal C})$ is relaxed to the condition that $a^* \in \Sigma(\cC)$. For partial implementation, there is no benefit to using ambiguous payment schemes: $W^{a^*}_{\rm partial}$ is achieved by a deterministic payment scheme with payments $w^*_{i, {\rm partial}}=\max_{a_i\in A_i}u_i(a_i, a^*_{-i})-u_i(a^*)$ for each $i$ (see Lemma~\ref{lem:partial}).

Under partial implementation, the principal accepts some strategic uncertainty about which equilibrium of $\cC$ will be played. Consequently, partial implementation is in general strictly less costly than unique implementation. However, based on Theorem~\ref{prop:contagious}, we exhibit two settings in which ambiguous payment schemes bring the unique implementation cost all the way down to the partial implementation cost, i.e., allow the principal to fully eliminate strategic uncertainty for free.

{\bf Unique implementation of secure actions.} First, consider target actions $a^*$ that are sufficiently close to secure:
\begin{cor}\label{lem:FB bound}
We have $W^{a^*}=W^{a^*}_{\rm partial}$ if $a^*$ satisfies
\begin{equation}\label{eq:general-impl}
\min_{a_{-i}\in A_{-i}}u_i(a^*_i, a_{-i}) + w^*_{i, {\rm partial}}  \geq \max_{a_i\not=a^*_i}u_i(a_i, a^*_{-i}) \quad \text{ for all } i\in I.
\end{equation} 
\end{cor}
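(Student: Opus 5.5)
Since $W^{a^*}\ge W^{a^*}_{\rm partial}$ trivially (unique implementation is more restrictive), I only need the reverse inequality. Fix $\eta>0$. The plan is to exhibit an ambiguous payment scheme whose expected cost is at most $\sum_i w^*_{i,{\rm partial}}+\eta$ and under which $a^*$ is the unique equilibrium. The natural candidate is the following. Give every ``normal'' type the deterministic payment $w_i = w^*_{i,{\rm partial}}+\delta$ for a small $\delta>0$. Give a small-prior-probability ``crazy'' type a large payment $\bar w$ that makes $a^*_i$ strictly dominant. Embed these payments in the type space of Theorem~\ref{thm:CP-amb}.

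\textbf{Step 1 (security of the modified game).} Let $\tilde G$ be the complete-information game with payoffs $\tilde u_i(a)=u_i(a)+(w^*_{i,{\rm partial}}+\delta)\mathbf 1_{a_i=a^*_i}$. Condition (\ref{eq:general-impl}) then gives
\[
\min_{a_{-i}}\tilde u_i(a^*_i,a_{-i}) \ge \max_{a_i\neq a^*_i}u_i(a_i,a^*_{-i})+\delta>\max_{a_i\ne a^*_i}\tilde u_i(a_i,a^*_{-i}),
\]
so $a^*$ is secure in $\tilde G$. The role of $\delta$ is exactly to turn the weak inequality in (\ref{eq:general-impl}) into the strict security inequality (\ref{eq:gensafe}).

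\textbf{Step 2 (apply contagion).} By Theorem~\ref{prop:contagious}, for every $\varepsilon>0$ there is an $\varepsilon$-elaboration of $\tilde G$ with unique equilibrium $a^*$. I want this elaboration to be realizable as a payment scheme $v_i=u_i+w_i(t_i)\mathbf 1_{a_i=a^*_i}$. For that I will inspect the construction in the proof of Theorem~\ref{prop:contagious} (as in Example~\ref{ex:contagious}). There the non-normal types are ones for which $a^*_i$ is strictly dominant. Such a payoff is induced by the payment $\bar w>\max_{a}u_i(a)-\min_a u_i(a)$, which depends only on $t_i$. The normal types get $w^*_{i,{\rm partial}}+\delta$. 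So the elaboration is a payment scheme $\cC$ with $\Sigma(\cC)=\{a^*\}$.

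\textbf{Step 3 (cost and consistency).} This is the main obstacle. In the type space of Theorem~\ref{thm:CP-amb}, the marginals ${\rm marg}_{T_i}P$ are identical across $P\in\cP$. Since $w_i$ depends only on $t_i$, the expected payment $\sum_t P(t)\sum_i w_i(t_i)$ is the same for every $P$, which gives (consistency). I will need to confirm that the proof of Theorem~\ref{prop:contagious} keeps this common-marginal property. If it does not, I would redo the construction directly on the Theorem~\ref{thm:CP-amb} type space, using the argument of Example~\ref{ex:contagious}: the set of type profiles at which $a^*$ is not played is $p$-evident and lies within the normal types, while $C^p(T^*)=\emptyset$. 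The expected cost is then at most
\[
(1-\varepsilon)\sum_i(w^*_{i,{\rm partial}}+\delta)+\varepsilon |I|\bar w,
\]
which is within $\eta$ of $W^{a^*}_{\rm partial}$ for small $\delta,\varepsilon$, because $\bar w$ does not depend on $\varepsilon$. One more point to check is that the construction uses the prescribed updating rules. This holds because Theorems~\ref{thm:CP-amb} and \ref{prop:contagious} apply under any rule satisfying Assumption~\ref{asp:ML}. Letting $\eta\to0$ yields $W^{a^*}\le W^{a^*}_{\rm partial}$.
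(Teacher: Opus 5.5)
Your proposal is correct and is essentially the paper's proof. A small extra subsidy (the paper's $\kappa/(2|I|)$, your $\delta$) makes $a^*$ strictly secure in the modified game. Normal types receive $w^*_{i,{\rm partial}}$ plus that subsidy, while types $t_i\notin E_i$ receive a large payment (the paper's $w^*_i+K$) that makes $a^*_i$ strictly dominant. Everything is placed on the finite type space from the proof of Theorem~\ref{thm:CP-general}, uniqueness is read off from the proof of Theorem~\ref{prop:contagious}, and the consistency condition holds because ${\rm marg}_{T_i}P$ is the same for all $P\in\cP$.

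Your hedge in Step~3 is unnecessary, since the proof of Theorem~\ref{prop:contagious} uses exactly that type space. The plain $p$-evidence phrasing of your fallback is accurate only for two players: with $|I|>2$, security forces a deviating type to put high probability only on \emph{some} opponent deviating, not all of them, which is why the paper works with the $\mathbf f$-belief operator.
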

Condition (\ref{eq:general-impl}) requires that in the modified complete-information game where each agent receives the partial implementation subsidy $w^*_{i, {\rm partial}}$ for playing $a^*_i$, $a^*$ satisfies (a weak inequality version of) security.  Clearly, (\ref{eq:general-impl}) holds if $a^*$ is secure in the original game $G$ (in which case $w^*_{i, {\rm partial}} = 0$ as $a^*$ is already an equilibrium of $G$); (\ref{eq:general-impl}) also holds if each $a^*_i$ is safe, regardless of whether or not $a^*$ is an equilibrium of $G$.

Corollary~\ref{lem:FB bound} highlights that for actions that satisfy condition (\ref{eq:general-impl}), ambiguous payment schemes greatly facilitate unique implementation relative to unambiguous payment schemes. Indeed, even for secure actions, unique implementation using unambiguous payment schemes can be arbitrarily more costly than partial implementation:

\begin{ex}[Ambiguous vs.\ unambiguous payment schemes]\label{ex:contracts}
Consider an instance of the technology adoption game in Example~\ref{ex:secure}: $A_i = \{0, 1\}$ and $u_i (a_i, a_{-i}) = b(a_i)\mathbf{1}_{a_i = a_{-i}} - ca_i $ for all $i = 1, 2$, where $b(1) > b(0) > c > 0$. Since ${\bf 0}$ is secure, Corollary~\ref{lem:FB bound} implies that, under ambiguous payment schemes, $W^{\bf 0} = W^{\bf 0}_{\rm partial} = 0$ for all $b$ and $c$.
In contrast, under unambiguous payment schemes, Appendix~\ref{app:unambig-impl} shows that $W^{\bf 0}_{\rm prob} = \max\{b(1)-b(0)-2c, 0\}$. In particular, $W^{\bf 0}_{\rm prob} \to \infty$ as $b(1) - b(0) \to \infty$. \finex
\end{ex}

 The proof of Corollary~\ref{lem:FB bound} (Appendix~\ref{app:FB-contract}) builds on Theorem~\ref{prop:contagious} by designing an ambiguous payment scheme that, given condition (\ref{eq:general-impl}), creates contagion toward $a^*$: All normal types receive a payment $w^*_{i, {\rm partial}} + \delta$ with $\delta \approx 0$, whereas non-normal types receive a $w_i$ that is large enough to make $a^*_i$ a dominant action. Here, analogous to Theorems~\ref{thm:CP-amb}--\ref{prop:contagious}, players know their own payments but face ambiguity about their type's ranking relative to other players, which generates higher-order ambiguity about other players' (beliefs about) payments. Relative to Theorem~\ref{prop:contagious}, additional care is required to ensure that the consistency condition is satisfied. However, the amount of payoff ambiguity introduced by the principal is again small, in that the probability of normal types is arbitrarily close to $1$ under all $P \in \cP$.  Just as in Theorems~\ref{thm:CP-amb}--\ref{prop:contagious}, the same ambiguous payment scheme can be used for all updating rules satisfying Assumption~\ref{asp:ML}, so that the principal does not need to know players' updating rules.\footnote{That said, if the principal knows players' updating rules, then $W^{a^*}=W^{a^*}_{\rm partial}$ may hold for a larger class of target actions (e.g., for all $a^*$ under maximum-likelihood updating, by Proposition~\ref{prop:contagious-ML}).}

{\bf Unique implementation via insurance contracts.} Second, we exhibit a limited and natural departure from bilateral contracts---insurance contracts---that, when combined with ambiguity, reduces the unique implementation cost of \emph{any} target action to its partial implementation cost.

An insurance contract reimburses each player $i$ for any loss caused by other players' deviations from the target action profile $a^*$ provided $i$ plays $a^*_i$, and thus makes $a^*_i$ a safe action.\footnote{This idea bears a broad resemblance to some schemes used in practice. For example, FDIC deposit insurance discourages bank runs by making non-withdrawal safe for insured depositors.} Formally, it specifies payments $w_i \geq 0$ for each $i$ such that if $i$ plays $a^*_i$, she receives a total subsidy of $u_i (a^*) - u_i (a^*_i, a_{-i}) + w_i$ for any opponent action profile $a_{-i}$, and hence a constant payoff of $u_i (a^*) + w_i$ regardless of $a_{-i}$; if $i$ does not play $a^*_i$, $i$ receives no subsidy.\footnote{This formulation allows for negative off-path subsidies. However, the same result continues to hold if subsidies are replaced with $\max \{ u_i (a^*) - u_i (a^*_i, a_{-i}) + w_i, 0 \}$, ensuring limited liability.} Allowing for higher-order ambiguity about $(w_i)$, an \textit{\textbf{ambiguous insurance contract}} specifies a type space $(T, \cP)$ and bounded subsidies $w_i : T_i\to\mathbb R_+$ that induce the incomplete-information game with payoffs
\[
v_i(a, t)= 
\begin{cases}
u_i(a^*)+ w_i (t_i) &\text{ if } a_i=a^*_i \\
u_i(a) & \text{ otherwise. }
\end{cases}
\]
Denote by $\overline W^{a^*}$ the unique implementation cost of $a^*$ when in (\ref{eq:implementation}) the principal optimizes over ambiguous insurance contracts, and by $\overline W^{a^*}_{\rm partial}$ the corresponding partial implementation cost. These problems reduce to the original bilateral contracting setting with modified game payoffs $\overline u_i$ given by $\overline u_i(a)=u_i(a^*)$ if $a_i=a^*_i$ and $\overline u_i(a)= u_i(a)$ otherwise. Since each agent's target action $a^*_i$ is safe under this modified game, Corollary~\ref{lem:FB bound}  immediately implies the following:

\begin{cor}\label{cor:insurance} For every $a^*\in A$, we have $\overline W^{a^*}=\overline W^{a^*}_{\rm partial}=W^{a^*}_{\rm partial}$. 
\end{cor}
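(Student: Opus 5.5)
The plan is to reduce Corollary~\ref{cor:insurance} to Corollary~\ref{lem:FB bound}, applied to the modified game $\overline G=(A,\overline u)$. Here $\overline u_i(a)=u_i(a^*)$ if $a_i=a^*_i$ and $\overline u_i(a)=u_i(a)$ otherwise. The first step is to record the identification. An ambiguous insurance contract $(T,\cP,w)$ induces exactly the incomplete-information game with payoffs
\[
v_i(a,t)=\overline u_i(a)+w_i(t_i)\mathbf 1_{a_i=a^*_i}.
\]
This is precisely the game induced by the ambiguous bilateral payment scheme $(T,\cP,w)$ when the underlying game is $\overline G$. The objective in (\ref{eq:implementation}) and the consistency constraint depend only on $(T,\cP,w)$. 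Hence $\overline W^{a^*}$ and $\overline W^{a^*}_{\rm partial}$ coincide with the unique and partial implementation costs of $a^*$ in $\overline G$ under bilateral contracts.

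Second, I would verify condition (\ref{eq:general-impl}) for $\overline G$. Since $\overline u_i(a^*_i,a_{-i})=u_i(a^*)$ for all $a_{-i}$, the action $a^*_i$ is safe in $\overline G$. The partial subsidy in $\overline G$ is
\[
\overline w^*_{i,{\rm partial}}=\max_{a_i}\overline u_i(a_i,a^*_{-i})-\overline u_i(a^*)=\max\Bigl\{0,\max_{a_i\neq a^*_i}u_i(a_i,a^*_{-i})-u_i(a^*)\Bigr\}.
\]
The left-hand side of (\ref{eq:general-impl}) is therefore $u_i(a^*)+\overline w^*_{i,{\rm partial}}$. This is at least $\max_{a_i\neq a^*_i}\overline u_i(a_i,a^*_{-i})=\max_{a_i\neq a^*_i}u_i(a_i,a^*_{-i})$, so (\ref{eq:general-impl}) holds. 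Corollary~\ref{lem:FB bound} then yields $\overline W^{a^*}=\overline W^{a^*}_{\rm partial}$.

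Third, I would show $\overline W^{a^*}_{\rm partial}=W^{a^*}_{\rm partial}$. By Lemma~\ref{lem:partial} applied to each game, both partial costs are attained by deterministic payments. They equal $\sum_i\overline w^*_{i,{\rm partial}}$ and $\sum_i w^*_{i,{\rm partial}}$ respectively, where $w^*_{i,{\rm partial}}=\max_{a_i}u_i(a_i,a^*_{-i})-u_i(a^*)$. The maximum over $a_i$ includes $a^*_i$, so $w^*_{i,{\rm partial}}=\max\{0,\max_{a_i\neq a^*_i}u_i(a_i,a^*_{-i})-u_i(a^*)\}=\overline w^*_{i,{\rm partial}}$.

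The main point to check is that Lemma~\ref{lem:partial} and Corollary~\ref{lem:FB bound} apply to an arbitrary complete-information game, here $\overline G$. They are stated for general $G$, so this should be immediate. A minor subtlety is that insurance payments on the event $a_i=a^*_i$, $a_{-i}\neq a^*_{-i}$ are off-path. In the unique equilibrium $a^*$ they are never made, so the principal's expected cost is just $\sum_i w_i$. This matches the bilateral-scheme cost in $\overline G$ and justifies the identification in the first step.
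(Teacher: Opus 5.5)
Your proposal is correct and follows the paper's own argument. You reduce ambiguous insurance contracts to bilateral schemes in the modified game $\overline G$, where each $a^*_i$ is safe, so condition (\ref{eq:general-impl}) holds and Corollary~\ref{lem:FB bound} gives $\overline W^{a^*}=\overline W^{a^*}_{\rm partial}$; Lemma~\ref{lem:partial} then gives $\overline W^{a^*}_{\rm partial}=W^{a^*}_{\rm partial}$ because $\overline u_i(\cdot,a^*_{-i})=u_i(\cdot,a^*_{-i})$, and your write-up simply spells out the verifications the paper leaves implicit.
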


Thus, when combined with (even small amounts of) payoff ambiguity, insurance contracts are a highly effective tool for unique implementation. The combination with ambiguity is key, as the unique implementation cost $\overline W^{a^*}_{\rm prob}$ under unambiguous insurance contracts in general strictly exceeds $W^{a^*}_{\rm partial}$. For instance, in Example~\ref{ex:contracts}, $\overline W^{\bf 1}_{\rm prob}=  \max \{2c-2b(1)+b(0), 0\}$ (see Appendix~\ref{app:unambig-impl}), which can be arbitrarily more costly than $W^{\bf 1}_{\rm partial} = 0$.\footnote{As the literature notes, fully general non-bilateral contracts can equalize unique- and partial-implementation costs even with deterministic payments, but there are various difficulties with using such contracts \citep[e.g.,][]{segal2003, halac2025}. Insurance contracts mitigate some of these concerns: (i) while this broader class may condition $i$'s payment on the details of opponents' actions, which may be very costly to verify, insurance contracts only condition on $i$'s action and payoff loss $u_i(a^*)-u_i(a_i^*,a_{-i})$; (ii) while the aforementioned deterministic contracts make each $a^*_i$ strictly dominant via large payments if opponents deviate, insurance contracts merely make $a^*_i$ safe, which requires less knowledge of the full payoff matrix and makes them less vulnerable to side contracts among agents.}


\section{Discussion}\label{sec:discussion}

\subsection{More General Ambiguity Representations}\label{sec:alphaMEU}

Beyond maxmin expected utility, our main insights extend to more general classes of ambiguity-sensitive preferences. First, to capture players with a mix of ambiguity-averse and ambiguity-seeking tendencies, Online Appendix~\ref{app:alpha-MEU} considers $\alpha$-maxmin expected utility. We provide a condition for contagion that reduces to security under maxmin expected utility.  Under the opposite extreme of maxmax preferences, the condition implies that a strict Nash equilibrium $a^*$ is contagious if every action $a_i \neq a^*_i$ is safe. Thus, if agents are ambiguity-seeking, the presence of safe actions serves as a source of contagion for \emph{non-safe} equilibria. However, as in the maxmin case, this again means that almost-dominated actions can be contagious.

Second, while we have followed the canonical approach of representing ambiguity by a set of priors, our main results extend to other standard representations of ambiguity based on a single second-order prior or non-additive prior. In particular, in Online Appendix~\ref{app:smooth}--\ref{app:CEU}, we adapt ambiguous type spaces and elaborations to agents with smooth preferences \citep{klibanoff2005} with a common second-order prior $\mu \in \Delta(\Delta(T))$, as well as to agents with Choquet expected utility preferences \citep{schmeidler1989} and a common non-additive Choquet capacity $\nu$ over $T$. We derive analogs of the failure of the critical path theorem in Theorem~\ref{thm:CP-amb}. We also show that secure action profiles $a^*$ remain contagious under both these models.

\subsection{Hedging against Ambiguity through Mixing}\label{sec:mix}

The formulation of incomplete-information games in Section~\ref{sec:game} rules out hedging against ambiguity through mixing: When player $i$ mixes, Nature selects a
worst-case posterior $P_i \in \mathcal{P}_{t_i}$ after $i$'s pure action has been realized. An alternative formulation reverses this order, thereby giving mixing hedging value. That is, Nature selects a
posterior after observing $i$'s mixed action but before the corresponding pure action is realized, so that $i$'s interim payoff from a mixed
action $\alpha_i\in\Delta(A_i)$ is
\[
\widetilde V_i(\alpha_i,\sigma_{-i},t_i)
=
\min_{P_i\in\mathcal P_{t_i}}
\sum_{t_{-i}\in T_{-i}}
P_i(t_{-i})
v_i(\alpha_i,\sigma_{-i}(t_{-i}),t_i,t_{-i}).
\]

In general, it is well known that giving mixing hedging value can substantially change equilibrium predictions.\footnote{For
example, in \cite{dutting2024} the advantage of ambiguous contracts
over unambiguous contracts disappears when the agent can hedge by mixing. See also \cite{bose2014, tillio2016}.} However, our main insights go through under this alternative formulation. In particular, every strict Nash equilibrium in safe actions remains
contagious, and the implications in Section~\ref{sec:implication} continue to
hold whenever security is strengthened to safety. The reason is that the contagion argument for equilibria in safe actions $a^*_i$ does not require Nature to choose
different posteriors against different pure deviations from $a^*_i$. Rather, at each
inductive step, the construction yields a single posterior
$P_i\in\mathcal P_{t_i}$ for the type $t_i$ under consideration under which \emph{every} pure
$a_i\neq a_i^*$ has expected payoff strictly below the constant payoff
from the safe action $a_i^*$. As a result, this also implies that $t_i$'s worst-case expected payoff of every mixed action $\alpha_i\neq\delta_{a_i^*}$ is strictly lower than the payoff from $a^*_i$. Hence, $\delta_{a_i^*}$ is the unique surviving action for type $t_i$,
allowing the inductive argument to proceed as before.\footnote{However, this argument need not extend to secure $a^*$ where $a^*_i$ is not safe. The construction still yields a single posterior $P_i\in\mathcal P_{t_i}$ under which every pure $a_i\neq a_i^*$ has an expected payoff strictly below the \emph{worst-case} payoff of $a_i^*$. But when $a^*_i$ is not safe, this does not rule out that some mixed $\alpha_i\neq\delta_{a_i^*}$ with $a^*_i \in {\rm supp} \, \alpha_i$ yields a better worst-case payoff than $a^*_i$.}

\subsection{Dynamically Consistent Agents}\label{sec:dyn-inconsis}

As is well known, belief-updating under ambiguity gives rise to dynamic inconsistency: A strategy that maximizes a player's interim worst-case payoff based on her set of posteriors ${\cal P}_{t_i}$ at each type need not in general maximize her ex-ante worst-case payoff based on the set of priors $\mathcal{P}$. However, dynamic inconsistency is not what drives the departures from probabilistic uncertainty that we have highlighted.  

To shut down the dynamic inconsistency channel, we replace the equilibrium notion in Section~\ref{sec:game} based on interim optimality with one based on ex-ante optimality.\footnote{One interpretation of this solution concept is that players follow the dynamically consistent updating rule in \cite{hanany2007}. Posteriors under this updating rule are subsets of ${\cal P}^{\rm FB}_{t_i}$, but this rule violates consequentialism, in that ${\cal P}_{t_i}$ can depend on opponent strategies and game payoffs. In contrast, our interim-optimal analysis imposed consequentialism, because we fixed players' updating rules across opponent strategies and games.} Define
$i$'s ex-ante payoff to playing pure strategy $s_i: T_i\to A_i$ against opponents' mixed strategy profile $\sigma_{-i}$ by
\[
V_i(s_i, \sigma_{-i})=\min_{P\in \cP}\sum_{t\in T}P(t)v_i \left(s_i(t_i), \sigma_{-i}(t_{-i}), t_i, t_{-i} \right),
\]
and extend $V_i$ linearly to mixed strategies $\sigma_i $ of player $i$.\footnote{We embed mixed strategies $\sigma_i  \in (\Delta(A_i))^{T_i}$ into mixtures $\Delta(A_i^{T_i})$ over pure strategies in the standard way.} An \textbf{\textit{ex-ante equilibrium}} is a strategy profile $\sigma^*=(\sigma^*_i)_{i\in I}$ such that, for each $i \in I$, 
$\sigma^*_i\in\argmax_{\sigma_i}V_i(\sigma_i, \sigma^*_{-i})$. Thus, each player's contingent plan $\sigma^*_i $ is a worst-case best-response against $\sigma^*_{-i}$ based on the set of priors $\mathcal{P}$. Analogously to Section~\ref{sec:strategic}, for any complete-information game $G$, we call $a^* \in A$ \textbf{\textit{ex-ante contagious}} if for every $\varepsilon>0$, $G$ admits an $\varepsilon$-elaboration whose unique ex-ante equilibrium satisfies $\sigma^*_i( a^*_i|t_i)=1$ for all $i\in I$, $t_i\in T_i$.

Under the ex-ante formulation, ambiguity continues to magnify contagion relative to probabilistic uncertainty. Indeed, this formulation leads to an extreme contagion result, paralleling that under interim optimality and maximum-likelihood updating:

\begin{prop}\label{prop:ex-ante}
Any strict Nash equilibrium $a^*$ of $G$ is ex-ante contagious. 
\end{prop}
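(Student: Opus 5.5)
The plan is to reduce ex-ante optimality to a family of Bayesian best-response problems through a minimax argument, and then run a contagion induction on a type space in the spirit of the maximum-likelihood construction behind Proposition~\ref{prop:CP-ML}(2). Fix a strict Nash equilibrium $a^*$ of $G$ and $\varepsilon>0$. I would build an $\varepsilon$-elaboration with ordered types $T_i=\mathbb N$. Types $t_i=0$ are crazy types for whom $a^*_i$ is strictly dominant; all types $t_i\ge 1$ are normal. For each player $j\in I$ there is a prior $P^{j}$ under which player $j$'s type is the unique highest, exactly one step above all the others, as in the email-type construction. The set $\cP$ is the convex hull of $\{P^j\}_{j\in I}$. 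Every $P\in\cP$ should put probability at least $1-\varepsilon$ on $T^*$.

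\textbf{Step 1 (minimax reduction).} Fix opponents' strategies $\sigma^*_{-i}$. The map $(\sigma_i,P)\mapsto\sum_t P(t)v_i(\sigma_i(t_i),\sigma^*_{-i}(t_{-i}),t)$ is bilinear, where $\sigma_i$ ranges over mixtures of pure strategies and $P$ over the convex compact set $\cP$. A minimax theorem (after truncating to finitely many relevant types, or using boundedness of $v_i$) then gives a saddle point. Consequently any ex-ante best response $\sigma^*_i$ has a worst-case prior $P^*_i\in\cP$ against which it is Bayes-optimal. Since every type has positive probability under every prior, at every type $t_i$ the action $\sigma^*_i(t_i)$ is an interim best response to $P^*_i(\cdot\mid t_i)$.

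\textbf{Step 2 (contagion induction).} I would show by induction on $n$ that in every ex-ante equilibrium all types $t_j\le n$ of all players play $a^*_j$. The base case holds because $a^*_j$ is strictly dominant for the crazy types. For the inductive step, suppose type $n$ of player $i$ puts weight on some $a_i\ne a^*_i$. Let $s'_i$ be the strategy that agrees with $\sigma^*_i$ except that type $n$ plays $a^*_i$. Under $P^{i}$, type $n$ of player $i$ is matched with opponents of type $n-1$, who play $a^*_{-i}$ by the induction hypothesis. Since $a^*$ is a strict Nash equilibrium, the switch is then strictly profitable on that cell, with a gain of at least $P^i(t_i=n)\,\delta$, where $\delta>0$ is the strictness margin. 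The goal is to conclude that $V_i(s'_i,\sigma^*_{-i})>V_i(\sigma^*_i,\sigma^*_{-i})$, which contradicts ex-ante optimality.

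\textbf{Main obstacle.} The obstacle is exactly this last comparison. A gain under one prior need not raise a minimum over priors, because the worst-case prior for $s'_i$ may differ from that for $\sigma^*_i$. Under the other priors $P^{j}$ with $j\ne i$, type $n$ is paired with opponents of type $n+1$, whose actions are uncontrolled, so the switch can lose there. My first attempt would be to choose the priors so that these other priors cannot be the binding ones when a deviation occurs. Concretely, I would make $P^j(t_i=n)$ for $j\ne i$ geometrically small relative to $P^i(t_i=n)$ at every level. Then, by linearity in $\lambda$ along $\cP$, any strategy that deviates from $a^*_i$ at type $n$ is minimized at a vertex $P^{i}$ (or at a prior close to it), and at that vertex the switch is a strict improvement. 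If this weighting trick fails, the fallback is to use Step 1 directly. I would show that the saddle-point prior $P^*_i$ against a deviating strategy must load on $P^{i}$ at level $n$, since otherwise Nature could lower player $i$'s payoff further by shifting weight toward $P^i$. Bayes-optimality against $P^*_i(\cdot\mid t_i=n)$ then forces $a^*_i$, mirroring the common-justifiability failure in Example~\ref{ex:contagious2}. In either version, the remaining work is to verify that $P(T^*)\ge 1-\varepsilon$ for every $P\in\cP$ while keeping this weighting.
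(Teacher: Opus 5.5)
There is a genuine gap. You correctly identify the crux: under ex-ante maxmin, a gain under $P^i$ from switching type $n$ to $a^*_i$ need not raise $\min_{P\in\mathcal P}$. Neither of your fixes resolves it.

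\emph{The weighting trick is infeasible.} If $P^j(t_i=n)\le\kappa P^i(t_i=n)$ for all $n\ge 1$, then $P^j(T^*)\le P^j(t_i\ge 1)\le\kappa$, so the elaboration constraint forces $\kappa\ge 1-\varepsilon$. The ratios can be made less than one (in the paper's maximum-likelihood type space they equal $r\in(1-\varepsilon,1)$), but they cannot be made small. That is why the ``remaining work'' you flag cannot be done.

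\emph{Likelihood ratios are the wrong lever.} They govern \emph{posteriors}, not which prior minimizes the \emph{ex-ante} payoff $\sum_t P(t)v_i(s_i(t_i),\sigma_{-i}(t_{-i}),t)$. That minimizer depends on payoffs at all type profiles. Nothing in your construction prevents it from being a vertex $P^j$ with $j\neq i$. Under $P^j$, type $n$ faces opponents of types $n$ and $n+1$, whom your induction does not control.

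\emph{The fallback has the same problem.} Nature picks one global prior, not weights level by level, so there is no argument that it ``must load on $P^i$ at level $n$.''

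\emph{Step 1 does not match the paper's payoff.} $V_i$ is extended linearly over mixtures of pure strategies, so the minimum is taken pure strategy by pure strategy. There is therefore no bilinear saddle point. A maxmin-optimal pure strategy need not be Bayes-optimal against any prior in $\mathcal P$.

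The missing idea is to pin down the worst-case prior through the cardinal payoffs of the crazy types, which the elaboration leaves unconstrained. The paper takes the type space from the second part of Proposition~\ref{prop:CP-ML}, where $P^i(t_i=0)<P^j(t_i=0)$ for $j\neq i$. It sets $v_i(a,(0,t_{-i}))=\mathbf{1}_{a_i=a^*_i}+M_i$, with $M_i$ large enough that for all $j\neq i$,
\[
P^i(t_i=0)(M_i+1)+\bigl(1-P^i(t_i=0)\bigr)\max_{a}u_i(a)<P^j(t_i=0)M_i+\bigl(1-P^j(t_i=0)\bigr)\min_{a}u_i(a).
\]
Then $P^i$ is player $i$'s minimizing prior against \emph{every} strategy profile; by linearity this also holds on the convex hull. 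So ex-ante best responses are exactly Bayesian best responses under $P^i$. Since every type has positive $P^i$-probability, they are interim optimal against $P^i(\cdot\mid t_i)$, which for $t_i=n+1$ is degenerate on $t_{-i}=\mathbf{n}$. The strict-Nash induction of Proposition~\ref{prop:contagious-ML} then goes through. With this device your own Step 2 switch argument becomes valid, because $P^i$ is the minimizer for both $\sigma^*_i$ and $s'_i$, and no minimax theorem is needed.
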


Thus, there is a sense in which players' incentives can be more sensitive to small payoff ambiguity under ex-ante than under interim optimality. One novel channel under ex-ante optimality is that players' worst-case beliefs, and hence incentives, are sensitive to the cardinal specification of utilities in an elaboration. For example, if utility functions take the form $v_i(\cdot, t_i, t_{-i})=u_i(\cdot)+ b_i(t_i)$ for some $b_i:T_i\to\mathbb R$, then $b_i (\cdot)$ affects $i$'s incentives under ex-ante optimization, but not under interim optimization.

\section{Concluding Remarks}

We conclude by discussing some directions for future research. First, whereas the unique-implementation application in Section~\ref{sec:contracts} introduced higher-order ambiguity into complete-information games via ambiguous payment schemes, it is also worthwhile to consider information design settings where players share a common prior about a payoff-relevant fundamental $\theta$ and a designer can introduce higher-order ambiguity via ambiguous signals about $\theta$. Several papers study the design of probabilistic signal structures that uniquely implement a target action profile via a process of contagion \citep[e.g.,][]{moriya2020, hoshino2022, morris2024}.\footnote{Contagion is also used in information design under worst-case equilibrium selection \citep[e.g.,][]{li2023, inostroza2025}.}  Our analysis suggests that allowing for ambiguous signals may further facilitate unique implementation by expanding the class of target action profiles for which contagion can be induced.

Second, to illustrate the contrast with probabilistic uncertainty, we identified simple and economically interpretable sufficient conditions for contagion and robustness under ambiguous elaborations. However, it would be valuable to obtain sharper characterizations. As noted in Section~\ref{sec:strategic}, even under probabilistic elaborations, if-and-only-if conditions for contagion and robustness are not known for general games. A promising direction could be to analyze structured classes of games, such as binary-action supermodular games, for which \cite{oyama2020} obtain sharp characterizations under probabilistic uncertainty. Under ambiguity, such characterizations would additionally depend on the class of updating rules.


\newpage

\appendix

\noindent{\Large {\bf Appendix}}

\vspace{-3mm}
\section{Preliminaries}\label{app:gen-belief}
\subsection{Generalized Operators}
We extend the ambiguous $p$-belief and $p$-justifiability operators from Sections~\ref{sec:CB} and \ref{sec:ML} to more general ${\bf f}$-operators \citep{morris2007, oyama2020}. For each agent $i$, consider a function $f_i : 2^{I \setminus \{i\} } \to \mathbb{R}$ such that 
\begin{equation}\label{eq:f-ass}
 f_i (\emptyset) < 0 \quad \text{ and } \quad f_i (T) \leq f_i (S) \text{ for all } T \subseteq S \subseteq I \setminus \{i\} .
\end{equation}
Let ${\bm f} = (f_i)_{i \in I}$. Throughout the appendix, we treat product events $E \subseteq T$ as profiles $E = (E_i)_{i \in I}$ of their projections onto each coordinate $i$. As such, the inclusion $(E_i)_{i\in I}\subseteq (E'_i)_{i\in I}$ means that $E_i\subseteq E_i'$ for each $i \in I$, the intersection $(E_i)_{i\in I}\cap (E'_i)_{i\in I}$ equals $(E_i\cap E_i')_{i\in I}$, and $(E_i)_{i \in I} = \emptyset$ if and only if $E_i = \emptyset$ for all $i \in I$.

Given any type space $(T, \cP)$, any belief-updating rule satisfying Assumption~\ref{asp:ML}, and any product event $E =  (E_j)_{j\in I}$, define the \textbf{\textit{individual ${\bf f}$-belief}} and \textbf{\textit{common ${\bf f}$-belief}} events by 
\begin{align*}
 B^{f_i}_i(E) &=\left\{ t_i\in E_i: \sum_{t_{-i} \in T_{-i}} P_i(t_{-i}) f_i \left( \left\{ j \neq i : t_j \in E_j \right\} \right) \geq 0 \;  \forall P_i\in \cP_{t_i}\right\},\\
 B^{\bf f}(E) &= (B^{f_i}_i(E))_{i\in I}, \;\; (B^{\bf f})^0 (E) = E, \;\;  (B^{\bf f})^k (E) =( B^{f_i}_i (( B^{\bf f})^{k-1} (E)))_{i \in I} \, \forall k\geq 1, \\
 C^{\bf f}_i(E) &= \bigcap_{k\in\mathbb N} (B^{\bf f})_i^k(E), \;\; C^{\bf f}(E) = (C_i^{\bf f}(E))_{i\in I}.
\end{align*}

Define the \textbf{\textit{individual ${\bf f}$-justifiability}} and \textbf{\textit{common ${\bf f}$-justifiability}} events by
\begin{align*}
\overline{B}^{f_i}_i(E) &=\left\{ t_i\in E_i: \sum_{t_{-i} \in T_{-i}} P_i(t_{-i}) f_i \left( \left\{ j \neq i : t_j \in E_j \right\} \right) \geq 0 \;  \text{ for some } P_i\in \cP_{t_i}\right\}, \\
\overline B^{\bf f}(E) &= (\overline B^{f_i}_i(E))_{i\in I}, \;  (\overline B^{\bf f})^0 (E) = E, \;\;  (\overline B^{\bf f})^k (E) =( \overline B^{f_i}_i (( \overline B^{\bf f})^{k-1} (E)))_{i \in I} \, \forall k\geq 1, \\
  \overline C^{\bf f}_i(E) &= \bigcap_{k\in\mathbb N} (\overline B^{\bf f})_i^k(E), \;\; \overline C^{\bf f}(E) = (\overline C_i^{\bf f}(E))_{i\in I}.
\end{align*}
If each $f_i$ is defined by $f_i (S) =  \mathbf{1}_{\{S = I \setminus \{i\} \}} -p$ for all $S \subseteq I\setminus \{i\}$, these definitions reduce to the individual and common $p$-belief and $p$-justifiability definitions in Sections~\ref{sec:CB} and \ref{sec:ML}.

The following lemma summarizes some basic properties of ${\bf f}$-belief and ${\bf f}$-justifiability that we will use throughout the appendix without explicit reference. We omit the proof, which follows standard arguments.
Call a product event $E$ \textit{\textbf{lower $\bf f$-evident}} (resp., \textit{\textbf{upper $\bf f$-evident}}) if $E_i \subseteq B_i^{f_i}(E)$ (resp., $E_i \subseteq \overline{B}_i^{f_i}(E)$) for all $i \in I$.

\begin{lem}\label{lem:basic} 
\begin{enumerate}
\item If $E\subseteq E'$, then $B^{f_i}_i(E)\subseteq B^{f_i}_i(E')$ and $\overline B^{f_i}_i(E)\subseteq \overline B^{f_i}_i(E')$.
\item If $(E^n)_{n\in{\mathbb N}}$ is a sequence of events that is weakly decreasing with respect to set inclusion, then $B^{f_i}_i(\bigcap_{n\in{\mathbb N}}E^n)=\bigcap_{n\in{\mathbb N}}B^{f_i}_i(E^n)$ and $\overline B^{f_i}_i(\bigcap_{n\in{\mathbb N}}E^n)=\bigcap_{n\in{\mathbb N}}\overline B^{f_i}_i(E^n)$. 

\item $C^{\mathbf{f}}(E)$ (resp.\ $\overline{C}^{\mathbf{f}} (E)$) is
the largest lower (resp.\ upper) $\bf f$-evident event contained in $E$.
\end{enumerate}
\end{lem}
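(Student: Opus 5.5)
We prove the three parts in order, treating the belief operator $B^{f_i}_i$ and the justifiability operator $\overline B^{f_i}_i$ side by side. Throughout, write $S_E(t_{-i})=\{j\neq i: t_j\in E_j\}$ and $g_E(t_{-i})=f_i(S_E(t_{-i}))$. Since $f_i$ has a finite domain, $g_E$ is bounded. Note that (\ref{eq:f-ass}) says $f_i$ is weakly increasing with respect to set inclusion.

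\emph{Part 1 (monotonicity).} Suppose $E\subseteq E'$. Then $S_E(t_{-i})\subseteq S_{E'}(t_{-i})$ pointwise, so $g_E\le g_{E'}$ pointwise. Hence every $P_i$ satisfies $\sum P_i g_{E'}\ge \sum P_i g_E$. Also $t_i\in E_i$ implies $t_i\in E'_i$. Both the universal-quantifier condition (for $B$) and the existential one (for $\overline B$) are therefore preserved when passing from $E$ to $E'$.

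\emph{Part 2 (continuity along decreasing sequences).} Let $E^n$ be weakly decreasing with $E^\infty=\bigcap_n E^n$. The inclusion ``$\subseteq$'' follows from Part 1, so only ``$\supseteq$'' needs proof. Fix $t_i$ in $\bigcap_n B^{f_i}_i(E^n)$ (respectively $\bigcap_n \overline B^{f_i}_i(E^n)$). Then $t_i\in E^\infty_i$. Because $I$ is finite and each coordinate $E^n_j$ decreases, for each $t_{-i}$ the set $S_{E^n}(t_{-i})$ is eventually equal to $S_{E^\infty}(t_{-i})$. So $g_{E^n}\to g_{E^\infty}$ pointwise, and the sequence is uniformly bounded.

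For $B$, fix any $P_i\in\cP_{t_i}$. Dominated convergence on the countable set $T_{-i}$ gives $\sum P_i g_{E^\infty}=\lim_n\sum P_i g_{E^n}\ge 0$, so $t_i\in B^{f_i}_i(E^\infty)$.

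For $\overline B$, which is the main obstacle, the witnessing posterior may change with $n$, so we use compactness.
\begin{enumerate}
\item For each $n$, choose $P^n\in\cP_{t_i}$ with $\sum P^n g_{E^n}\ge0$.
\item The set $\cP_{t_i}$ is compact, being a closed subset of the compact set $\cP^{\rm FB}_{t_i}$. Pass to a subsequence with $P^n\to P\in\cP_{t_i}$ weakly.
\item Fix $m$. For $n\ge m$ we have $E^n\subseteq E^m$, so $g_{E^m}\ge g_{E^n}$ and hence $\sum P^n g_{E^m}\ge 0$.
\item Since $T_{-i}$ is discrete, $g_{E^m}$ is bounded and continuous. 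Weak convergence then gives $\sum P g_{E^m}\ge0$.
\item Letting $m\to\infty$ with $P$ fixed and applying dominated convergence yields $\sum P g_{E^\infty}\ge 0$.
\end{enumerate}
Thus $t_i\in\overline B^{f_i}_i(E^\infty)$.

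\emph{Part 3 (largest evident subevent).} By definition $B^{f_i}_i(E)\subseteq E_i$, so $B^{\bf f}(E)\subseteq E$. By induction using Part 1, the iterates $(B^{\bf f})^k(E)$ are weakly decreasing. Part 2 then gives
\[
B^{\bf f}(C^{\bf f}(E))=\bigcap_{k\ge1}(B^{\bf f})^k(E)=C^{\bf f}(E),
\]
so $C^{\bf f}(E)$ is lower ${\bf f}$-evident, and clearly $C^{\bf f}(E)\subseteq E$. For maximality, let $F\subseteq E$ be lower ${\bf f}$-evident. Then $F\subseteq B^{\bf f}(F)\subseteq B^{\bf f}(E)$ by Part 1. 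By induction $F\subseteq (B^{\bf f})^k(E)$ for all $k$, hence $F\subseteq C^{\bf f}(E)$. The justifiability case $\overline C^{\bf f}$ is identical, using the $\overline B$ versions of Parts 1 and 2.
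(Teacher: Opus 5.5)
Your proof is correct and is the standard argument the paper alludes to when it omits the proof. Part 1 follows from the monotonicity of $f_i$. Part 2 uses eventual constancy of $g_{E^n}$ plus dominated convergence, with compactness of $\mathcal{P}_{t_i}$ (noted in the paper's footnote) handling the existential quantifier in $\overline B^{f_i}_i$. Part 3 is the usual fixed-point and induction argument.

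One detail to state explicitly: extracting a convergent subsequence uses metrizability of $\Delta(T_{-i})$ for countable discrete $T_{-i}$. Alternatively, you can avoid subsequences by taking a point in the intersection of the nested, nonempty, closed sets $\{P\in\mathcal{P}_{t_i}:\sum_{t_{-i}}P(t_{-i})g_{E^n}(t_{-i})\ge 0\}$.
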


\subsection{Strategic Role of Common Belief and Justifiability}

The following lemma extends to settings with ambiguity the well-known result that a $p$-dominant equilibrium of a complete-information game $G$ can be played in an equilibrium of an elaboration of $G$ whenever there is common $p$-belief in normal types. It also establishes an analogous relationship between $p$-secure equilibria (as defined in Section~\ref{sec:robustness}) and the weaker notion of common $p$-justifiability.

\begin{lem}\label{lem:MS} \
Fix any $p\in (0,1)$. 
\begin{enumerate}
\item Suppose $\hat a$ is a $p$-dominant equilibrium of $G$.  Then for any $\varepsilon$-elaboration of $G$, there exists an equilibrium $\sigma$ with $\sigma(\hat a|t)=1$ at all $t\in C^p(T^*)$. 

\item Suppose $\hat a$ is a $p$-secure equilibrium of $G$. Then for any $\varepsilon$-elaboration of $G$, there exists an equilibrium $\sigma$ with $\sigma(\hat a|t)=1$ for all $t\in \overline C^p(T^*)$. 
\end{enumerate}

\end{lem}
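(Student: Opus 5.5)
We will prove both parts the same way: restrict the game to the relevant common-belief (or common-justifiability) event, pin down play there, and obtain the rest of the equilibrium by a fixed-point argument on the remaining types. Let $D$ denote $C^p(T^*)$ in part 1 and $\overline C^p(T^*)$ in part 2. By Lemma~\ref{lem:basic}(3), $D\subseteq T^*$ is a product event that is lower (respectively upper) $p$-evident, i.e.\ $D_i\subseteq B^p_i(D)$ (resp.\ $D_i\subseteq \overline B^p_i(D)$) for every $i$.

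\textbf{Constrained game.} Consider the modified incomplete-information game in which every type $t_i\in D_i$ is forced to play $\hat a_i$, while types outside $D_i$ choose freely. Existence of an equilibrium of this constrained game follows from the same fixed-point argument as in \cite{kajii2005}. The interim payoff $V_i$ is a minimum over the compact set $\cP_{t_i}$ of functions that are continuous in the opponents' strategies (in the product topology, with countable types and bounded payoffs), so $V_i$ is continuous. Best-response correspondences are therefore nonempty, convex-valued (since $V_i$ is linear in $\alpha_i$) and upper hemicontinuous, and Kakutani–Fan–Glicksberg applies on the compact convex product $\prod_i\Delta(A_i)^{T_i}$. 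Let $\sigma$ denote the resulting fixed point.

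\textbf{Unconstrained optimality on $D$.} It remains to show that $\hat a_i$ is a best response for each $t_i\in D_i$ against $\sigma_{-i}$; then $\sigma$ is an equilibrium of the original elaboration. Since $t_i\in T^*_i$, the payoff is $u_i$. For every posterior $P_i\in\cP_{t_i}$, the induced distribution $\alpha_{-i}^{P_i}\in\Delta(A_{-i})$ over opponent actions satisfies $\alpha_{-i}^{P_i}(\hat a_{-i})\ge P_i(D_{-i})$, because $\sigma_{-i}$ plays $\hat a_{-i}$ on $D_{-i}$.

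\emph{Part 1.} Lower evidence gives $P_i(D_{-i})\ge p$ for \emph{all} $P_i$. So $p$-dominance yields $u_i(\hat a_i,\alpha^{P_i}_{-i})>u_i(a_i,\alpha^{P_i}_{-i})$ for every $P_i$ and every $a_i\neq\hat a_i$. Let $P^*$ be a minimizer for $a_i$, so that $V_i(a_i)=u_i(a_i,\alpha^{P^*}_{-i})$. Then
\[
V_i(\hat a_i)\le u_i(\hat a_i,\alpha^{P^*}_{-i}),
\]
which points the wrong way, so a direct pointwise comparison is not enough. The fix is to compare through the worst case. $V_i(\hat a_i)=\min_{P_i}u_i(\hat a_i,\alpha^{P_i}_{-i})$ is attained at some $Q$. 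Then
\[
V_i(\hat a_i)=u_i(\hat a_i,\alpha^Q_{-i})>u_i(a_i,\alpha^Q_{-i})\ge \min_{P_i}u_i(a_i,\alpha^{P_i}_{-i})=V_i(a_i).
\]
Hence $\hat a_i$ is strictly optimal among pure actions, and by linearity also among mixed actions.

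\emph{Part 2.} Upper evidence gives only some $Q\in\cP_{t_i}$ with $Q(D_{-i})\ge p$, so $\alpha^Q_{-i}(\hat a_{-i})\ge p$. The lower bound $V_i(\hat a_i)\ge\min_{a_{-i}}u_i(\hat a_i,a_{-i})$ holds trivially. By $p$-security this exceeds $u_i(a_i,\alpha^Q_{-i})\ge V_i(a_i)$ for all $a_i\neq\hat a_i$.

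The main obstacle is the existence step for the constrained game with countably many types and the minimum over posteriors. This requires checking continuity of $V_i$ in $\sigma_{-i}$: uniform convergence over $\cP_{t_i}$ holds because each fixed $P_i$ is countably additive and payoffs are bounded, but we need tightness uniformly over the compact set $\cP_{t_i}$. Tightness follows from compactness of $\cP$ in the weak topology (Prokhorov). Once existence is secured, both parts reduce to the comparisons above.
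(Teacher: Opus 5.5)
Your proposal is correct and follows essentially the same route as the paper. The paper restricts the best-response correspondence to profiles in which types in $C^p(T^*)$ (resp.\ $\overline C^p(T^*)$) play $\hat a$, uses lower (resp.\ upper) $p$-evidence to show that best responses stay in this set, and applies Kakutani; this is exactly your constrained-game fixed point combined with your check that $\hat a_i$ is strictly optimal on $D$. Your Part 1 comparison (evaluated at the worst-case posterior $Q$ for $\hat a_i$), your Part 2 security bound, and your tightness-based continuity argument for existence supply details the paper leaves implicit; you should simply delete the abandoned first attempt in Part 1.
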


\begin{proof}
For the first part, let $\hat\Sigma_i$ be the set of $i$'s strategies such that $\sigma_i(\hat a_i|t_i)=1$ for all $t_i\in  C^{p}_i(T^*)$. Since $C^p(T^*)$ is $p$-evident (in the sense of lower $\mathbf{f}$-evidence defined above), $i$'s best response against any strategy profile in $\hat\Sigma_{-i}$ is in $\hat\Sigma_i$. Thus, by applying Kakutani's fixed-point theorem to the best-response correspondences restricted to strategy profiles with $\sigma_i \in \hat \Sigma_i$ for all $i$, we have an equilibrium $\hat\sigma$ such that $\hat\sigma(\hat a|t)=1$ for all $t\in C^{p}(T^*)$. The proof of the second part follows an analogous argument, by considering the best-response correspondences restricted to strategy profiles where $\sigma(\hat a|t)=1$ for all $t\in \overline C^p(T^*).$
\end{proof}

\section{Proofs for Section~\ref{sec:main}}

\subsection{Proof of Theorem~\ref{thm:CP-amb}}\label{app:CP-amb}

We prove the following generalization of Theorem~\ref{thm:CP-amb} to ${\bf f}$-operators. 

\begin{thm}\label{thm:CP-general}
Fix any $\varepsilon \in (0, 1)$. There is a type space $(T, \cP)$ and product event $E$ such that $P(E)=P'(E) \geq 1-\varepsilon$ for all $P, P'\in \cP$ but $C^{\bf f}_i(E)=\emptyset$ for all functions ${\bf f} = (f_i)_{i \in I}$ satisfying (\ref{eq:f-ass}), all $i\in I$, and all updating rules satisfying Assumption~\ref{asp:ML}.
\end{thm}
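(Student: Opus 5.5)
The plan is to build a finite type space in which every prior has the same marginal on each $T_i$. This does two jobs at once. It makes $P(E)$ unambiguous. It also removes any dependence on the updating rule, because Assumption~\ref{asp:ML} then forces $\cP_{t_i}=\cP^{\rm FB}_{t_i}$. Ambiguity enters only through how types are ranked relative to one another: for each player $i$ there will be a prior under which $i$'s type sits exactly one step above everyone else's.

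\textbf{Construction.} Fix $N\in\mathbb N$ with $2/(N+1)\le\varepsilon$. Let $T_i=\{0,1,\dots,N\}$ for every $i$, and let $E_i=\{1,\dots,N\}$. For each $j\in I$, define the prior $P^j$ as uniform with weight $1/(N+1)$ on the $N+1$ profiles
\[
t^{j,n}:\quad t_j=n,\qquad t_k=n-1 \pmod{N+1}\ \text{ for } k\ne j,\qquad n=0,\dots,N.
\]
Set $\cP=\{P^j:j\in I\}$, which is finite and hence compact. Each marginal ${\rm marg}_{T_i}P^j$ is uniform on $\{0,\dots,N\}$, because $n\mapsto n-1 \pmod{N+1}$ is a bijection. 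So every type has positive probability, and all priors share the same marginals. The profile $t^{j,n}$ lies in $E$ exactly when $n\ge2$: the case $n=0$ gives $t_j=0$, and $n=1$ gives $t_k=0$ for $k\neq j$. Hence $P^j(E)=(N-1)/(N+1)\ge 1-\varepsilon$ for every $j$, and these values coincide across priors.

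\textbf{Updating.} Fix any rule satisfying Assumption~\ref{asp:ML}, and any $i$ and $t_i$. The set $\cP_{t_i}$ is nonempty and, by condition (i), consists of Bayesian updates of priors in $\cP$. Since all priors have the same $T_i$-marginal, condition (ii) forces $\cP_{t_i}$ to contain every $P^j(\cdot\mid t_i)$. In particular it contains $P^i(\cdot\mid t_i)$. Under $P^i$ each $n$ is paired with a single profile, so for $t_i=n\ge1$ this posterior is the point mass on $t_k=n-1$ for all $k\ne i$.

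\textbf{Induction.} Write $E^m=(B^{\bf f})^m(E)$. I claim that $E^m_i\subseteq\{m+1,\dots,N\}$ for all $i$ and $m$; taking $m=N$ then gives $C^{\bf f}_i(E)=\emptyset$. The base case $m=0$ is the definition of $E$. For the inductive step, take $t_i=n\in E^{m}_i$ with $n\le m+1$, so that $n=m+1\ge1$. Under the point-mass posterior $P^i(\cdot\mid t_i)$, every opponent has type $m$, and $m\notin E^m_k$ by the inductive hypothesis. So the set $\{k\ne i:t_k\in E^m_k\}$ is empty, and the expected value of $f_i$ under this posterior equals $f_i(\emptyset)<0$. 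Because $B^{f_i}_i$ requires the inequality to hold for all posteriors, $t_i\notin E^{m+1}_i$. The argument never uses the particular form of ${\bf f}$ beyond $f_i(\emptyset)<0$, and never uses which updating rule is in force. It therefore holds uniformly in both, as the statement requires.

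\textbf{Main obstacle.} The delicate step is making the construction independent of the updating rule. A rule such as maximum-likelihood or proxy updating could in principle discard the prior $P^i$ that drives the contagion. Giving all priors identical marginals rules this out through condition (ii). The cyclic "mod $N+1$" pairing is what makes the marginals identical while still placing each type $n\ge1$ exactly one step above its opponents. That same pairing puts type $0$ together with type $N$, but this is harmless, because type $0$ lies outside $E$ from the start.
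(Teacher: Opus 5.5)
Your proof is correct and follows the paper's argument essentially verbatim. It uses the same cyclic type space with $T_i=\{0,\dots,N\}$ and $2/(N+1)\le\varepsilon$, the same priors $P^j$ placing player $j$ one step above all opponents modulo $N+1$, identical uniform marginals forcing $\cP_{t_i}=\cP^{\rm FB}_{t_i}$ via Assumption~\ref{asp:ML}, and the same induction $(B^{\bf f})^m_i(E)\subseteq\{m+1,\dots,N\}$, which uses only $f_i(\emptyset)<0$. Your explicit checks of compactness of $\cP$ and positivity of the type marginals are ones the paper leaves implicit.
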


\begin{proof}
Pick $N\in\mathbb N$ large enough that $\frac{2}{N+1}\leq\varepsilon$. Let $T_i=\{0,1,\ldots, N\}$ for each $i\in I$. 
Let $\cP=\{P^i : i\in I\}$, where each $P^i$ is defined by $P^i(t)=\frac{1}{N+1}$ for any $t\in T$ such that $t_j\equiv t_i -1$ for all $j\not=i$, where $\equiv$ denotes equality modulo $N+1$.\footnote{Thus, $P^i$ assigns equal probability to the $N+1$ type profiles $(t_i = N, t_{-i}  = \mathbf{N-1})$, $(t_i = N-1, t_{-i}  = \mathbf{N-2})$, $\ldots$, $(t_i = 1, t_{-i}  = \mathbf{0})$, $(t_i = 0, t_{-i}  = \mathbf{N})$ and probability zero to all other $t$.}
Observe that ${\rm marg}_{T_i}P^j$ is the same (viz., uniform on $\{0, 1, \ldots, N\}$) for all $j$. Thus, by Assumption~\ref{asp:ML}, $\cP_{t_i} = \cP^{\rm FB}_{t_i}$ for every $t_i$. 

Consider the product event $E = (E_i)_{i \in I}$ with $E_i = \{1,2,\ldots, N\}$ for each $i$. Then for all $i \in I$, $P^i (E)=\frac{N-1}{N+1} \geq 1 - \varepsilon$. At the same time, for each $i \in I$ and $n \in\{0,1,\ldots, N-1\}$, we have 
 \begin{equation}\label{eq:conditional-ML'}
P^i \left (\{t_{j}=n \forall j\not=i\} \mid t_i = n+1 \right)=1.
\end{equation}
Take any functions ${\bf f} = (f_i)_{i \in I}$ satisfying (\ref{eq:f-ass}). To show that $C^{\bf f}_i (E) = \emptyset$ for each $i\in I$, it suffices to prove inductively that for all $n \in \mathbb{N}$, 
\[
\left(B^{\bf f} \right)^n_i (E) \subseteq \left\{ n+1,\ldots, N\right\} \text{ for each } i,
\] 
where we set $\left\{ n+1,\ldots, N\right\} = \emptyset$ for all $n \geq N$. The claim is immediate for $n = 0$, as $\left( B^{\bf f} \right)^0 (E) = E $. Suppose the claim holds for some $n$ with $0 \leq n \leq N-1$. Consider any $i$. Then for $t_i=n+1$, (\ref{eq:conditional-ML'}) and the inductive hypothesis imply that
\[
\sum_{t_{-i} \in T_{-i}} P^i \left( t_{-i}   \mid t_i  \right) f_i \left( \left\{ j \neq i : t_j \in \left(B^{\bf f} \right)^n_j  (E) \right\} \right)= f_i(\emptyset)<0.
\]
Hence, $B^{ f_i}_i   \left( \left(B^{\bf f} \right)^n (E) \right) \subseteq\left\{ n+2,\ldots, N\right\}$. Thus, $\left(B^{\bf f} \right)^{n+1}_i (E) \subseteq\left\{ n+2,\ldots, N\right\}$. This establishes the claim for any $n \leq N$ and, in particular, shows that $\left(B^{\bf f} \right)^N_i (E) = \emptyset$ for all $i$. Then we also have $\left(B^{\bf f} \right)^n_i (E) = \emptyset$ for all $i$ and $n > N$.
\end{proof}

\subsection{Proof of Theorem~\ref{prop:contagious}}\label{app:contagious}
Since $a^*$ is secure, there is $\gamma < 1$ such that for all $i$, $a_i \neq a^*_i$, and $Q_i \in \Delta(A_{-i})$ with $Q_i(a^*_{-i}) \geq \gamma$, we have 
\begin{equation}\label{eq:gamma2}
\min_{a_{-i} \in A_{-i}} u_i (a^*_i, a_{-i})>\sum_{a_{-i} \in A_{-i}} Q_i(a_{-i}) u_i (a_i, a_{-i}).
\end{equation}
For each $i$, define $f_i$ by 
$$f_i (S) = \begin{cases} \gamma - 1 & \text{ if } S = \emptyset\\
\gamma & \text{ if } S \neq \emptyset.
\end{cases}
$$

Consider any $\varepsilon > 0$. By Theorem~\ref{thm:CP-general}, there is a type space $(T, \cP)$ and a product event $E$ such that $P(E)\geq 1-\varepsilon$ for all $P\in \cP$ but $C^{\bf f}_i(E)=\emptyset$ for each $i\in I$.   
Define a bounded payoff function $v_i : A \times T \to \mathbb R$ for each $i$ such that $v_i (\cdot , (t_i, t_{-i}) )=  u_i (\cdot) $ for all $t_i \in E_i$ and $t_{-i} \in T_{-i}$ and such that $a^*_i$ is  strictly dominant  for all $t_i \notin E_i$. Then $(T, {\cP}, v)$ is an $\varepsilon$-elaboration of $u$.

Consider any equilibrium $\sigma$ of this elaboration. 
Let $\hat T_i=\{t_i \in T_i: \sigma_i(a^*_i|t_i)<1\}$ for each $i\in I$, and let $\hat T=(\hat T_i)_{i\in I}$. 
Note that $\sigma_i(a^*_i|t_i)=1$ for each $t_i\not\in E_i$, and thus $\hat T\subseteq E$.
Below we show that $\hat T_i=\emptyset$ for each $i\in I$, so that $\sigma (a^* \mid t ) = 1$ for all $t \in T$.

For each $i\in I$ and $t_i\in \hat T_i$, there is $a_i\not=a^*_i$ with $\min_{P_i\in\cP_{t_i}}\sum_{t_{-i}}P_i(t_{-i})u_i(a_i, \sigma_{-i}(t_{-i}))\geq \min_{P_i\in\cP_{t_i}}\sum_{t_{-i}}P_i(t_{-i})u_i(a^*_i, \sigma_{-i}(t_{-i}))$. By (\ref{eq:gamma2}), this implies $P_i((T_j\setminus \hat T_j)_{j\not=i})\leq \gamma$ for all $P_i\in\cP_{t_i}$, and hence $t_i\in B^{f_i}_i(\hat T)$.  Thus, $\hat T$ is lower ${\bf f}$-evident. Since $\hat T\subseteq E$, we have $\hat T\subseteq C^{\bf f}(E)$.  Then $\hat T_i=\emptyset$ for each $i \in I$, as $C^{\bf f}_i(E)=\emptyset$ by construction. \qed

\subsection{Proof of Proposition~\ref{prop:CP-ML}}

\subsubsection{First part}

Take any $p<\frac{1}{|I|}$,  type space $(T, \cP)$, product event $E$, and $P\in \cP$. Let $\hat B^{p}(\cdot)$ and $\hat C^{p}(\cdot)$ be the ${p}$-belief operator and common ${p}$-belief operator under the unambiguous type space $(T, \{P\})$. 
Then  $\hat B^{p}(\cdot)\subseteq \overline B^{p}(\cdot)$.  By monotonicity and continuity of  $\hat B^{p}(\cdot)$ and $\overline B^{ p}(\cdot)$ (see Lemma~\ref{lem:basic}), standard arguments imply $\hat C^{p}(\cdot)\subseteq \overline C^{ p}(\cdot)$. 
Thus, by \cites{kajii1997} critical path theorem,  we have $P (\overline C^{ p}(E)) \geq P(\hat C^{p}(E)) \geq 1-(1-P(E))\frac{1-p}{1-|I|p}$, as claimed.

\subsubsection{Second part}
We prove the following more general version based on $\bf f$-operators. 

\begin{prop}\label{prop:CP-ML'}
Consider maximum-likelihood updating. For all $\varepsilon \in (0,1)$, there is a type space $(T, \cP)$ and a product event $E$ such that $P(E)=P'(E) \geq 1-\varepsilon$ for all $P, P'\in \cP$ but $\overline C^{\bf f}_i(E)=\emptyset$ for all functions ${\bf f} = (f_i)_{i \in I}$ satisfying (\ref{eq:f-ass}) and all $i\in I$. 
\end{prop}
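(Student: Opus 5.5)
The plan is to mimic the proof of Theorem~\ref{thm:CP-general}, with one change. There, all priors had the same marginals on each $T_i$, so Assumption~\ref{asp:ML} forced $\cP_{t_i}=\cP^{\rm FB}_{t_i}$, which is too large for a failure of \emph{justifiability}. Here I deliberately make the marginals differ, so that maximum-likelihood updating selects a \emph{single} prior at every normal type. That prior's posterior should put probability one on all opponents having the next-lower type. Once each relevant type has a unique posterior, the existential quantifier in $\overline B^{f_i}_i$ coincides with the universal one, and the induction from Theorem~\ref{thm:CP-general} goes through verbatim.

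\emph{Construction.} I generalize the introductory example's geometric weights to $|I|$ agents. Fix a small $\eta\in(0,1)$ to be chosen in terms of $\varepsilon$, and let $T_i=\mathbb N$. For each $i\in I$, define $P^i$ by
\[
P^i(0,\ldots,0)=\eta,\qquad P^i\bigl(t_i=n+1,\ t_j=n\ \forall j\neq i\bigr)=(1-\eta)\,\eta(1-\eta)^n \quad (n\in\mathbb N),
\]
with zero mass elsewhere, and set $\cP=\{P^i:i\in I\}$. This set is finite, hence compact. Let $E_i=\{1,2,\ldots\}$. Under every $P^i$ the profiles in $E$ are exactly the chain profiles with $n\ge1$, so $P^i(E)=(1-\eta)^2$ for all $i$. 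This is the same across priors and is $\ge 1-\varepsilon$ once $\eta$ is small. Positivity of marginals also holds: type $0$ of agent $i$ has mass $\eta$ under every prior, type $m\ge1$ of $i$ gets mass from the chain index $n=m-1$ under $P^i$, and type $m\ge1$ of $j\neq i$ gets mass from $n=m$ under $P^i$.

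\emph{Maximum likelihood.} For type $m\ge1$ of agent $i$, we have $P^i(\{m\}\times T_{-i})=(1-\eta)\eta(1-\eta)^{m-1}$, while for $j\ne i$, $P^j(\{m\}\times T_{-i})=(1-\eta)\eta(1-\eta)^{m}$, which is strictly smaller. Hence $\cP^{\rm ML}_{t_i}=\{P^i(\cdot\mid t_i)\}$. This posterior is supported on the single profile in which $t_j=m-1$ for all $j\ne i$, which is the analog of (\ref{eq:conditional-ML'}). The only step needing care is this strict likelihood comparison together with the bookkeeping of positivity; I expect it to be the main point to verify, and the geometric decay is exactly what delivers it.

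\emph{Induction.} Take any ${\bf f}$ satisfying (\ref{eq:f-ass}). I claim that $(\overline B^{\bf f})^n_i(E)\subseteq\{n+1,n+2,\ldots\}$ for all $n$ and $i$. The case $n=0$ is trivial. For the inductive step, a type $t_i=n+1$ has the unique posterior placing all mass on opponents of type $n$, none of whom lies in $(\overline B^{\bf f})^n_j(E)$ by the inductive hypothesis. The expected value of $f_i$ under this posterior is therefore $f_i(\emptyset)<0$, so $t_i\notin\overline B^{f_i}_i$. It follows that $\overline C^{\bf f}_i(E)=\bigcap_n(\overline B^{\bf f})^n_i(E)=\emptyset$, since every type is removed at some finite stage.
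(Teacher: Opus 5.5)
Your proof is correct and is essentially the paper's: your priors are exactly the paper's family $P^i(\mathbf n+\mathbf 1_i)=cr^n$, $P^i(\mathbf 0)=d$ with $r=1-\eta$, $c=\eta(1-\eta)$, $d=\eta$ (the paper's choice with $\varepsilon$ replaced by $1-(1-\eta)^2$), and your strict maximum-likelihood comparison and induction on $(\overline B^{\bf f})^n_i(E)\subseteq\{n+1,n+2,\ldots\}$ are the same as the paper's. One harmless slip: under $P^j$ with $j\neq i$, type $0$ of agent $i$ has mass $\eta+\eta(1-\eta)$ rather than $\eta$, which still gives the required positivity.
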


\begin{proof}
Fix any $\varepsilon \in (0, 1)$. Let $T = (T_i)_{i\in I}$, with $T_i  = \mathbb{N}$ for all $i$. 
Choose $r\in (1-\varepsilon, 1)$, and set 
\[
c=\frac{(1-\varepsilon)(1-r)}{r}, \;\; \quad d=\varepsilon-c>0.
\]
Let $\cP=\{P^i : i\in I\}$, where each $P^i$ is defined by
\[
P^i(t)=
\begin{cases}
cr^{n} & \text{ if } \exists  n\in{\mathbb N} \text{ such that } t={\mathbf n}+{\mathbf 1}_i \\
d & \text{ if } t={\mathbf 0}\\
0 &\text{ otherwise}, 
\end{cases}
\]
where ${\mathbf 1}_i \in \mathbb{R}^I$ denotes the $i$th standard basis vector. Consider the product event $E = \{1,2,\ldots \}^I$. Then for all $i \in I$,
\[
P^i (E)=P^i \left( \{\mathbf{n}+{\bf 1}_{i} : n > 0\} \right) =1 -\varepsilon.
\]
At the same time, for each $n \in \mathbb{N}$ and $t_i = n+1$, we have ${\cal P}^{\rm ML}_{t_i}=\{P^i(\cdot \mid t_i)\}$. This is because for all $j \neq i$,
\[
P^i(t_i) = cr^n >cr^{n+1}=P^j(t_i).
\] 
  Moreover, $P^i (\cdot \mid t_i)$ satisfies (\ref{eq:conditional-ML'}).

Take any functions ${\bf f} = (f_i)_{i \in I}$ satisfying (\ref{eq:f-ass}). To show that $\overline C^{\bf f}_i (E) = \emptyset$ for each $i\in I$, it suffices to prove inductively that for all $n \in \mathbb{N}$, $\left(\overline B^{\bf f} \right)^n_i (E) \subseteq \{n+1,n+2,\ldots \}$ for each $i$. The claim is immediate for $n = 0$, as $\left(\overline B^{\bf f} \right)^0 (E) = E$. Suppose the claim holds for some $n \geq 0$. Consider any $i$. Then for $t_i=n+1$, (\ref{eq:conditional-ML'}) and the inductive hypothesis imply that
\[
\sum_{t_{-i} \in T_{-i}} P^i \left( t_{-i}   \mid t_i  \right) f_i \left( \left\{ j \neq i : t_j \in \left(\overline B^{\bf f} \right)^n_j  (E) \right\} \right)= f_i(\emptyset)<0.
\]
Hence, $\overline B^{ f_i}_i   \left( \left(\overline B^{\bf f} \right)^n (E) \right) \subseteq \{n + 2, n+3,\ldots \}$, as required.  
\end{proof}

\subsection{Proof of Proposition~\ref{prop:contagious-ML}}

Consider any strict Nash equilibrium $a^*$ of $G$. Then we can pick $\gamma < 1$ such that for all $i$, $a_i \neq a^*_i$, and $Q_i \in \Delta(A_{-i})$ with $Q_i(a^*_{-i}) \geq \gamma$, we have 
\begin{equation}\label{eq:gamma}
\sum_{a_{-i} \in A_{-i}} Q_i(a_{-i}) u_i (a^*_i, a_{-i})>\sum_{a_{-i} \in A_{-i}} Q_i(a_{-i}) u_i (a_i, a_{-i}).
\end{equation}

Define $\bf f$ as in the proof of Theorem~\ref{prop:contagious}. Construct a type space $(T, {\cal P})$ and product event $E$ as in the proof of Proposition~\ref{prop:CP-ML'} and use this to construct an $\varepsilon$-elaboration $(T, {\cal P}, v)$ as in the proof of Theorem~\ref{prop:contagious}. Let $\hat T_i=\{t_i: \sigma_i(a^*_i|t_i)<1\}$ for each $i\in I$, and let $\hat T= (\hat T_i)_{i\in I}$. For each $i\in I$ and $t_i\in \hat T_i$, there exist $a_i\not=a^*_i$ and $P_i\in {\cal P}_{t_i}$ with $\sum_{t_{-i}}P_i(t_{-i})u_i(a_i, \sigma_{-i}(t_{-i}))\geq \sum_{t_{-i}}P_i(t_{-i})u_i(a^*_i, \sigma_{-i}(t_{-i}))$. By (\ref{eq:gamma}), this implies $P_i((T_j\setminus \hat T_j)_{j\not=i} )\leq \gamma$, and hence $t_i\in \overline B^{f_i}_i(\hat T)$.  Thus, $\hat T$ is upper ${\bf f}$-evident. Since $\hat T\subseteq E$, we have $\hat T\subseteq \overline C^{\bf f}(E)$.  Then $\hat T_i=\emptyset$ for each $i \in I$, as $\overline{C}^{\bf f}_i(E)=\emptyset$ by construction. Hence, $\sigma (a^* \mid t ) = 1$ for all $t \in T$. \qed

\section{Proofs for Section~\ref{sec:implication}}

\subsection{Proof of Corollary~\ref{cor:FB/ML}}

The first part follows from the second part of Lemma~\ref{lem:MS} and the first part of Proposition~\ref{prop:CP-ML}.
The second part follows from  Proposition~\ref{prop:contagious-ML}. \qed

\subsection{Partial Implementation}

The following lemma shows that to partially implement $a^*$, the principal can without loss of optimality focus on deterministic payment schemes. 

\begin{lem}\label{lem:partial} For every $a^*\in A$, $W^{a^*}_{{\rm partial}}= \sum_{i \in I} w^*_{i, {\rm partial}}$, where $(w^*_{i, {\rm partial}})$ is the deterministic payment scheme given by $w^*_{i, {\rm partial}}=\max_{a_i\in A_i}u_i(a_i, a^*_{-i})-u_i(a^*)$ for all $i$.
\end{lem}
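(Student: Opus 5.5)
We prove the two inequalities $W^{a^*}_{\rm partial} \le \sum_i w^*_{i,{\rm partial}}$ and $W^{a^*}_{\rm partial} \ge \sum_i w^*_{i,{\rm partial}}$ separately.

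\emph{Upper bound.} Take the deterministic scheme with $|T_i|=1$, $\cP=\{P\}$ a point mass, and $w_i = w^*_{i,{\rm partial}}$. Since $w^*_{i,{\rm partial}}\ge 0$, this is feasible. Consistency holds trivially because there is a single prior. Under this scheme, player $i$'s payoff from $a^*_i$ against $a^*_{-i}$ is
\[
u_i(a^*)+w^*_{i,{\rm partial}}=\max_{a_i}u_i(a_i,a^*_{-i}),
\]
which is at least the payoff $u_i(a_i,a^*_{-i})$ of any deviation $a_i\neq a^*_i$ (these receive no payment). Hence $a^*\in\Sigma(\cC)$, and the cost is $\sum_i w^*_{i,{\rm partial}}$.

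\emph{Lower bound.} Fix any $\cC=(T,\cP,w)$ with $a^*\in\Sigma(\cC)$ and any $P\in\cP$. Fix $i$, type $t_i$, and the deviation $\tilde a_i\in\argmax_{a_i}u_i(a_i,a^*_{-i})$.
\begin{itemize}
\item If $\tilde a_i = a^*_i$, then $w^*_{i,{\rm partial}}=0\le w_i(t_i)$ trivially.
\item Otherwise, opponents play $a^*_{-i}$ at every type, so the opponent profile is deterministic. Type $t_i$'s maxmin payoff from $a^*_i$ is then $u_i(a^*)+w_i(t_i)$, independent of the posterior. Its maxmin payoff from $\tilde a_i$ is $u_i(\tilde a_i,a^*_{-i})$, since that deviation receives no payment. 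Because $a^*_i$ is a best response at $t_i$,
\[
w_i(t_i)\ge u_i(\tilde a_i,a^*_{-i})-u_i(a^*)=w^*_{i,{\rm partial}}.
\]
\end{itemize}
Thus $w_i(t_i)\ge w^*_{i,{\rm partial}}$ for every $t_i$ with positive marginal probability, which by the standing assumption is every $t_i$. Taking expectations under any $P\in\cP$ and summing over $i$ gives
\[
\sum_{t} P(t)\sum_i w_i(t_i)\ \ge\ \sum_i w^*_{i,{\rm partial}}.
\]
Taking the minimum over $P$ and then the infimum over feasible $\cC$ yields the lower bound.

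No step here is a real obstacle. The only subtle point is that ambiguity about posteriors cannot matter at a type where opponents' play is deterministic. As a result, neither the updating rule nor consistency enters the lower bound.
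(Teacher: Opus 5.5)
Your proof is correct and follows essentially the same route as the paper. The deterministic scheme $(w^*_{i,{\rm partial}})$ gives the upper bound, and the type-by-type best-response condition against $a^*_{-i}$, together with $w_i\ge 0$, gives $w_i(t_i)\ge w^*_{i,{\rm partial}}$ and hence the lower bound under every $P\in\cP$. Your case split on whether $\tilde a_i=a^*_i$ does the same job as the paper's use of nonnegativity of $w_i$ to pass from $\max_{a_i\neq a^*_i}$ to $\max_{a_i\in A_i}$.
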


\begin{proof}
To see that $W^{a^*}_{{\rm partial}}\leq \sum_{i\in I} w^*_{i, {\rm partial}}$, note that the deterministic payment scheme where each agent $i$ receives $w^*_{i, {\rm partial}} $ for playing $a^*_i$ induces $a^*$ as an equilibrium. 

To show that $W^{a^*}_{{\rm partial}}\geq \sum_{i\in I} w^*_{i, {\rm partial}}$, take any payment scheme $(T, {\cal P}, w)$ that partially implements $a^*$. Then for each type $t_i$, we must have $u_i(a^*)+w_i(t_i)\geq \max_{a_i\not=a^*_i}u_i(a_i, a^*_{-i})$, i.e., $w_i(t_i)\geq \max_{a_i\not=a^*_i}u_i(a_i, a^*_{-i})-u_i(a^*)$. Since $w_i(\cdot)$ is non-negative, we also have $w_i(t_i)\geq \max_{a_i\in A_i}u_i(a_i, a^*_{-i})-u_i(a^*) = w^*_{i, {\rm partial}}$.  Thus, for every $P\in \cP$, we have $\sum_{t\in T} P(t)\sum_{i\in I}w_i(t_i)\geq \sum_{i\in I} w^*_{i, {\rm partial}}$. \end{proof}

\subsection{Proof of Corollary~\ref{lem:FB bound}}\label{app:FB-contract}

Fix any target action profile $a^*\in A$ satisfying condition (\ref{eq:general-impl}). Since $W^{a^*}_{{\rm partial}}\leq W^{a^*}$ by definition, it suffices to show that for any $\kappa>0$, there is ${\cal C}$ satisfying (\ref{eq:full}) and (\ref{eq:consistency}) such that $\min_{P\in \cP} \sum_{t\in T} P(t)\sum_{i\in I}w_i(t_i)\leq W^{a^*}_{{\rm partial}} +\kappa$.

Let $w^*_i = \max_{a_i\in A_i}u_i(a_i, a^*_{-i})-u_i(a^*)+\frac{\kappa}{2|I|}$. Then $a^*$ is a strict Nash equilibrium of the game with modified utility functions $\tilde u_i(a)=u_i(a)+w^*_i{\bf 1}_{a_i=a^*_i}$, and in this game $a^*$ is secure by (\ref{eq:general-impl}). 
Choose $K>0$  large enough that $u_i(a^*_i, a_{-i})+K>u_i(a_i, a_{-i})$ for all $i\in I$, $a_i\not=a_i^*$, and $a_{-i}\in A_{-i}$.

For  any $\varepsilon\leq \frac{\kappa}{2|I|K}$, let $(T, {\cP})$ and $E$ be the type space and event  used in the proof of Theorem~\ref{thm:CP-general}. The construction in that proof implies that $\sum_{i\in I}P(((T_i\setminus E_i), T_{-i}))=\sum_{i\in I}P'(((T_i\setminus E_i), T_{-i})) \leq |I| \varepsilon$ for all $P,P'\in\cP$. Let ${\cal C}=(T, {\cP}, w)$, where for each $i\in I$, 
\begin{equation}
w_i(t_i)=
\begin{cases}\label{eq:w construction}
w^*_i  \text{ for } t_i\in E_i  \\
w^*_i+K \text{ for } t_i\not\in E_i.
 \end{cases}
\end{equation}
 Note that the incomplete-information game induced by $\cC$ is an $\varepsilon$-elaboration of $(A, \tilde u)$ and that by the choice of $K$, $a^*_i$ is strictly dominant for each $t_i\not\in E_i$. Moreover, the proof of 
Theorem~\ref{prop:contagious} shows that $\cC$ has $a^*$ as its unique equilibrium.

Since $\sum_{i\in I}w^*_i=W^{a^*}_{\rm partial}+\kappa/2$, the total expected payment under each $P \in {\cP}$  is 
\[
\sum_{t\in T} P(t)\sum_{i\in I}w_i(t_i) =\sum_{i\in I} \left(w^*_i+ P(((T_i\setminus E_i), T_{-i})) K\right) \leq W^{a^*}_{\rm partial}+\frac{\kappa}{2}+ |I|\varepsilon K \leq W^{a^*}_{\rm partial}+\kappa.
\]
Moreover, by construction of $(T, {\cP}, E)$, the total expected payment is the same for all $P \in {\cal P}$, so (\ref{eq:consistency}) holds. This proves that for any $\kappa > 0$, $W^{a^*}  \leq W^{a^*}_{\rm partial}+\kappa$. \qed

\subsection{Unambiguous Implementation}\label{app:unambig-impl}

To determine the unique implementation cost under unambiguous payment schemes in Example~\ref{ex:contracts}, we show a more general result for binary-action supermodular games: Here, for each player $i \in I$, we have $A_i=\{0,1\}$, and $i$'s payoff difference between action $1$ vs.\ action $0$, i.e., $d_i(S) =  u_i \left(1,\mathbf 1_S, \mathbf{0}_{I \setminus (S \cup \{i\})} \right) -  u_i\left(0,\mathbf 1_S, \mathbf{0}_{I \setminus (S \cup \{i\})}\right)$, is nondecreasing with respect to set inclusion in the set $S$ of opponents playing action 1.

The following result follows from arguments analogous to those in \cite{morris2022}, so we relegate the proof to Online Appendix~\ref{app:unambig-impl-pf}. Let $\Pi$ be the set of permutations of $I$. For $\gamma\in\Pi$, let
$S_{-i}(\gamma)$ be the set of players that precede $i$ in $\gamma$.

\begin{prop}\label{prop:prob}
In any binary-action supermodular game, we have
\begin{equation}
W_{\mathrm{prob}}^{\mathbf 1}
=
\min_{\substack{z\in\mathbb R_+^I\\ \rho\in\Delta(\Pi)}}
\left\{
 \sum_{i\in I}z_i:
 z_i+
 \sum_{\gamma\in\Pi}
 \rho(\gamma)d_i\bigl(S_{-i}(\gamma)\bigr)
 \ge 0
 \quad\text{for every }i\in I
\right\}.
\label{eq:prob-cost-sequential-obedience}
\end{equation}
\end{prop}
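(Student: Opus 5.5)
The plan is to prove the two inequalities in (\ref{eq:prob-cost-sequential-obedience}) separately, following the sequential-obedience logic of \cite{morris2022}. Throughout, the target is $a^*=\mathbf 1$. The relevant game is the incomplete-information game with payoffs $u_i(a)+w_i(t_i)\mathbf 1_{a_i=1}$, which is again supermodular because the payment depends only on $i$'s own action.

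\textbf{Lower bound ($\ge$).} Fix an unambiguous scheme $(T,\{P\},w)$ that uniquely implements $\mathbf 1$.
\begin{enumerate}
\item Run iterated best responses from below. Start from the strategy profile in which every type plays $0$. At round $k$, let type $t_i$ switch to $1$ if $1$ is a best response when exactly the types switched before round $k$ play $1$.
\item By supermodularity this produces an increasing sequence of strategy profiles. Because $T$ is countable and payoffs are bounded, dominated convergence shows that the $\omega$-limit is an equilibrium, namely the smallest equilibrium.
\item Uniqueness of the equilibrium $\mathbf 1$ then forces every type to switch at some finite round $k_i(t_i)$.
\item Optimality of $1$ at the switching round gives, for each type $t_i$,
\[
w_i(t_i)+\sum_{t_{-i}}P(t_{-i}\mid t_i)\, d_i\bigl(\{j\neq i: k_j(t_j)<k_i(t_i)\}\bigr)\ge 0 .
\]
\item Define a random permutation $\gamma(t)$ by ordering players according to $k_j(t_j)$, breaking ties uniformly at random. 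Let $\rho$ be the distribution of $\gamma(t)$ under $P$ (averaging over the tie-breaking), and set $z_i=E_P[w_i(t_i)]\ge 0$.
\item Under $\gamma(t)$, the set $S_{-i}(\gamma(t))$ contains every player $j$ with $k_j(t_j)<k_i(t_i)$. Since $d_i$ is nondecreasing, $d_i$ evaluated at the strict-predecessor set is at most $d_i(S_{-i}(\gamma(t)))$.
\item Taking $P$-expectations of the displayed inequality therefore yields $z_i+\sum_\gamma\rho(\gamma)d_i(S_{-i}(\gamma))\ge 0$. The cost of the scheme is $\sum_i z_i$, so the cost is at least the minimum in (\ref{eq:prob-cost-sequential-obedience}).
\end{enumerate}

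\textbf{Upper bound ($\le$).} Take a feasible $(z,\rho)$ and $\delta>0$, and build an email-game-style scheme.
\begin{enumerate}
\item Draw $\gamma\sim\rho$ and, independently, $n\in\mathbb N$ with probability proportional to $(1-\eta)^n$.
\item Set $t_i=n+\mathrm{rank}_i(\gamma)$, where $\mathrm{rank}_i(\gamma)\in\{1,\ldots,|I|\}$ is $i$'s position in $\gamma$. Then players preceding $i$ in $\gamma$ have strictly lower types, and players following $i$ have higher types.
\item Pay $w_i(t_i)=z_i+\delta$ if $t_i>|I|$, and $w_i(t_i)=z_i+\delta+K$ if $t_i\le|I|$, where $K$ is large enough to make $1$ strictly dominant for those low types.
\end{enumerate}

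\textbf{Key estimate (main obstacle).} The main step is to show that, for $t_i>|I|$, the posterior of type $t_i$ over $\gamma$ converges to $\rho$ uniformly as $\eta\to 0$.
\begin{enumerate}
\item Each $\gamma$ is compatible with exactly one $n=t_i-\mathrm{rank}_i(\gamma)$. The geometric weights of these compatible values of $n$ differ by factors of at most $(1-\eta)^{-|I|}$, which tends to $1$ as $\eta\to 0$.
\item Run an induction on $t_i$, with base case the types $t_i\le|I|$, for which $1$ is dominant. Suppose all types below $t_i$ play $1$. Under the posterior of $t_i$, with probability $\rho(\gamma)+o(1)$ the set of opponents playing $1$ contains $S_{-i}(\gamma)$.
\item By supermodularity, the expected gain of type $t_i$ from playing $1$ is at least $z_i+\delta+\sum_\gamma\rho(\gamma)d_i(S_{-i}(\gamma))-o(1)$. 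This is strictly positive for $\eta$ small, since $(z,\rho)$ is feasible and $\delta>0$.
\item Hence $1$ is the unique surviving action for every type, and $\mathbf 1$ is the unique equilibrium.
\end{enumerate}
The points I expect to need care are the uniformity of the $o(1)$ term across all $t_i$ and the handling of boundary types. Uniformity holds because the likelihood-ratio bound $(1-\eta)^{-|I|}$ does not depend on $t_i$, and the boundary types $t_i\le|I|$ are covered by the dominance payment $K$.

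\textbf{Cost.} The expected payment is $\sum_i z_i+|I|\delta+K\cdot O(\eta)$, which can be made arbitrarily close to $\sum_i z_i$ by taking $\delta$ and $\eta$ small. This gives the upper bound and completes the proof.
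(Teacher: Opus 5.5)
Your proposal is correct and follows essentially the same route as the paper. For the lower bound, you make explicit the sequential best-response partition that the paper imports from \cite{morris2022}: you build it from the switching rounds of iterated best responses from below, with tie-breaking, and then average the type-wise obedience inequalities into a common $\rho$. For the upper bound, you use the same rank construction $t_i=m+r_i(\gamma)$ with geometric weights and dominance payments for low types. The remaining differences do not affect the argument: you get a weak rather than strict type-wise inequality, your boundary threshold is $|I|$ rather than $|I|-1$, and with weak-best-response switching the limit need not literally be the smallest equilibrium, although you only use that it is an equilibrium.
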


The inequality condition in (\ref{eq:prob-cost-sequential-obedience}) requires that, when each player $i$ receives the deterministic payment $z_i$ for playing action 1, action profile ${\bf 1}$ satisfies a sequential obedience condition; in symmetric two-player games (as in Example~\ref{ex:contracts}), this condition is equivalent to ${\bf 1}$ being weakly risk-dominant \citep[see][]{oyama2020}. 

Applied to the game in Example~\ref{ex:contracts} (and swapping the labels for actions $0$ and $1$), Proposition~\ref{prop:prob} implies that $W^{\bf 0}_{\rm prob} = \max\{b(1)-b(0)-2c, 0\}$, as making ${\bf 0}$ weakly risk-dominant requires $z_i \geq 0.5(b(1)-b(0))-c$ for each $i$.
Likewise, to show that under unambiguous insurance contracts $\overline W^{\bf 1}_{\rm prob}=  \max \{2c-2b(1)+b(0), 0\}$, one can apply Proposition~\ref{prop:prob} to the modified game where action $1$ is safe with payoff $b(1) -c$.

\section{Proofs for Section~\ref{sec:discussion}}

\subsection{Proof of Proposition~\ref{prop:ex-ante}}

Fix any strict Nash equilibrium $a^*$ of $G$, and any $\varepsilon > 0$. We construct an $\varepsilon$-elaboration based on the type space used in the second part of Proposition~\ref{prop:CP-ML}, where $T_i =  \mathbb{N}$ for all $i$ and $\cP=\{P^i : i\in I\}$ has the property that $P^i(t_i=0)<P^j(t_i=0)$ for all $i \neq j$.

Let $T^*_i =  \{t_i \in T_i : t_i \geq 1\}$ be the set of $i$'s normal types. 
We specify the payoff of the non-normal type $t_i=0$ as
\[
v_i(a, (0, t_{-i}))={\mathbf 1}_{a_i=a^*_i}+M_i,
\]
where $M_i>0$ is large enough that for all $j\not=i$,
\[
P^i(t_i=0)(M_i+1)
+\bigl(1-P^i(t_i=0)\bigr)\max_{a\in A}u_i(a)
<
P^j(t_i=0)M_i
+\bigl(1-P^j(t_i=0)\bigr)\min_{a\in A}u_i(a).
\]
This inequality ensures that player $i$'s ex-ante worst-case payoff is always evaluated based on the prior $P^i$, i.e., for any strategy profile $\sigma$, 
\[
\min_{P\in \cP}\sum_{t\in T}P(t)v_i \left(\sigma(t), t \right)=\sum_{t\in T}P^i(t)v_i \left(\sigma(t), t \right).
\]
 Thus, against any opponent strategies $\sigma_{-i}$, $i$'s ex-ante best response 
\[
\sigma^*_i\in\argmax_{\sigma_i}\min_{P\in \cP}\sum_{t\in T}P(t)v_i \left(\sigma_i(t_i), \sigma_{-i}(t_{-i}), t_i, t_{-i} \right)
\]
satisfies interim optimality for each $t_i \in T_i$ with respect to the singleton posterior belief $\{P^i( \cdot \mid t_i) \}$, i.e., 
\[
\sigma^*_i (t_i)\in \argmax_{\alpha_i\in \Delta(A_i)}\sum_{a_i}\alpha_i(a_i)\sum_{t_{-i}}P^i(t_{-i}|t_i)v_i \left(a_i, \sigma_{-i}(t_{-i}), t_i, t_{-i} \right). 
\]
By construction, $a_i^*$ is a strictly dominant action for $t_i=0$. Then an analogous argument as in Proposition~\ref{prop:contagious-ML} implies that the unique ex-ante equilibrium $\sigma^*$ satisfies $\sigma^* (a^* \mid t) = 1$ for all $t \in T$. \qed

\section{Omitted Details}

\subsection{Role of Conservativeness in Belief Updating}\label{app:CPJ-FB}

To shed more light on the role of agents' belief updating, we extend the comparison of full Bayesian and maximum-likelihood updating in Section~\ref{sec:ML} to an intermediate class of updating rules.  Under \textit{\textbf{$\beta$-mixture updating}}, for all type spaces $(T, \cP)$, $i \in I$, and $t_i \in T_i$, we have
\begin{equation}\label{eq:hybrid}
\cP_{t_i}=\beta {\rm co} \, {\cP}_{t_i}^{\rm FB}+(1-\beta){\rm co} \, {\cP}_{t_i}^{\rm ML}.
\end{equation}
Thus, $t_i$'s set of posteriors contracts the closed convex hull of the full-Bayesian posterior set towards that of the maximum-likelihood posterior set. Parameter $\beta$ captures the extent of conservativeness, in the sense that the sets of posteriors $\cP_{t_i}$ are increasing in $\beta$ with respect to set inclusion. Convexification does not affect maxmin payoffs or the belief and justifiability operators, so $\beta=1$ and $\beta=0$ are
behaviorally equivalent to full Bayesian and maximum-likelihood updating, respectively.\footnote{While this updating rule can violate Assumption~\ref{asp:ML}, this violation does not play a role in the analysis. \cite{epstein2007}  and \cite{cheng2019} propose similar hybrids of full Bayesian and maximum-likelihood updating, for which similar results can be obtained. }

 The following proposition generalizes Proposition~\ref{prop:CP-ML}. The first part provides a weaker version of the critical path theorem with respect to common justifiability, which becomes stronger when $\beta$ is higher, i.e., updating is more conservative.   The second part exhibits failures of the critical path theorem, which are  more extreme under smaller $\beta$, i.e., less conservative updating rules:
 
\begin{prop}\label{prop:beta} Under $\beta$-mixture updating, the following holds:  
\begin{enumerate}
\item For every type space $(T, \cP)$, product event $E$, $P \in \cP$, and $p<\frac{1}{|I|}$, we have
\[
P(\overline C^{\beta p}(E)) \geq 1-(1-P(E))\frac{1-p}{1-|I|p}.
\] 

\item For any $\varepsilon \in (0, 1)$, there is a type space $(T, \cP)$ and product event $E$ such that $P(E)=P'(E) \geq 1-\varepsilon$ for all $P, P' \in\cP$ but $\overline C^{p}(E)=\emptyset$ for all $p>\beta$. 
\end{enumerate}
\end{prop}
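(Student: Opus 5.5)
The plan is to prove the two parts separately. Part 1 reduces to \cites{kajii1997} critical path theorem, as in the first part of Proposition~\ref{prop:CP-ML}. Part 2 reuses the type space from the proof of Proposition~\ref{prop:CP-ML'} and bounds how much weight the full-Bayesian component can put on the relevant event.

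\emph{First part.} Fix $(T,\cP)$, a product event $E$, $P\in\cP$ and $p<1/|I|$. Let $\hat B^p$ and $\hat C^p$ denote the $p$-belief and common $p$-belief operators of the unambiguous type space $(T,\{P\})$.

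The key step is the inclusion $\hat B^p(F)\subseteq \overline B^{\beta p}(F)$ for every product event $F$. To see it, take $t_i\in\hat B^p_i(F)$, so $P(F_{-i}\mid t_i)\ge p$. Pick any $Q\in{\rm co}\,\cP^{\rm ML}_{t_i}$, which is nonempty. Since $P(\cdot\mid t_i)\in\cP^{\rm FB}_{t_i}$, the posterior $P_i:=\beta P(\cdot\mid t_i)+(1-\beta)Q$ lies in $\cP_{t_i}$. It satisfies
\[
P_i(F_{-i})\ge\beta P(F_{-i}\mid t_i)\ge\beta p,
\]
so $t_i\in\overline B^{\beta p}_i(F)$.

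Monotonicity and continuity from Lemma~\ref{lem:basic} then give $\hat C^p(E)\subseteq\overline C^{\beta p}(E)$, by the same standard argument as in the first part of Proposition~\ref{prop:CP-ML}. Applying the critical path theorem to $(T,\{P\})$ yields
\[
P(\overline C^{\beta p}(E))\ge P(\hat C^p(E))\ge 1-(1-P(E))\tfrac{1-p}{1-|I|p}.
\]

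\emph{Second part.} Use exactly the type space $(T,\cP)$ with $\cP=\{P^i:i\in I\}$ and the event $E=\{1,2,\ldots\}^I$ from the proof of Proposition~\ref{prop:CP-ML'}. That proof already gives $P(E)=P'(E)=1-\varepsilon$ for all $P,P'\in\cP$ and $\cP^{\rm ML}_{t_i}=\{P^i(\cdot\mid t_i)\}$ for $t_i=n+1$. By (\ref{eq:conditional-ML'}), this posterior puts probability one on $t_j=n$ for all $j\neq i$.

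Every element of $\cP_{t_i}$ therefore has the form $\beta x+(1-\beta)P^i(\cdot\mid t_i)$ with $x\in{\rm co}\,\cP^{\rm FB}_{t_i}$ a probability measure. For completeness, for $j\ne i$ the update $P^j(\cdot\mid t_i)$ is the point mass on the profile with $t_j=n+2$ and all other coordinates equal to $n+1$.

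Now argue by induction that $(\overline B^p)^n_i(E)\subseteq\{n+1,n+2,\ldots\}$ for all $i$ and $n$. The case $n=0$ is immediate. Assume the claim for $n$, and let $F=(\overline B^p)^n(E)$. For $t_i=n+1$, the component $P^i(\cdot\mid t_i)$ puts all mass on opponents at type $n\notin F_j$. Hence any posterior in $\cP_{t_i}$ assigns $F_{-i}$ probability at most $\beta\, x(F_{-i})\le\beta<p$. So $t_i=n+1\notin\overline B^p_i(F)$. Combined with $\overline B^p_i(F)\subseteq F_i$, this gives $(\overline B^p)^{n+1}_i(E)\subseteq\{n+2,\ldots\}$. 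Intersecting over $n$ yields $\overline C^p(E)=\emptyset$ for every $p>\beta$.

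The only delicate points are bookkeeping. In part 1, one must check that the mixed posterior really belongs to $\cP_{t_i}$, which needs ${\rm co}\,\cP^{\rm ML}_{t_i}\neq\emptyset$. In part 2, one must verify that the maximum-likelihood prior is unique, which holds because $P^i(t_i=n+1)=cr^n>cr^{n+1}=P^j(t_i=n+1)$ for $j\neq i$. Neither step is expected to be a real obstacle.
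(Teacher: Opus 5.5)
Your proposal is correct and follows the paper's proof essentially verbatim. In part 1 you establish $\hat B^{p}(\cdot)\subseteq \overline B^{\beta p}(\cdot)$ by mixing $P(\cdot\mid t_i)$ with an element of ${\rm co}\,\cP^{\rm ML}_{t_i}$ and then apply the critical path theorem; in part 2 you reuse the type space from Proposition~\ref{prop:CP-ML'} and observe that the maximum-likelihood component puts all mass on opponents of type $n$, so every posterior gives $F_{-i}$ probability at most $\beta<p$. This simply spells out the details the paper leaves as ``analogous.''
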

 
 \begin{proof} For the first part, let $\hat B^{p}(\cdot)$ and $\hat C^{p}(\cdot)$ be the ${p}$-belief operator and common ${p}$-belief operator under the unambiguous type space $(T, \{P\})$.  For each $t_i\in T_i$ and product event $D\subseteq T$ such that $P(D_{-i}|t_i)\geq p$, we have $Q_i(D_{-i})\geq\beta p$ for some $Q_i\in \cP_{t_i}$. Thus, $\hat B^{p}(\cdot)\subseteq \overline B^{\beta p}(\cdot)$.  Then the desired conclusion follows by the same argument as in the proof of the first part of Proposition~\ref{prop:CP-ML}.
The proof of the second part is analogous to the second part of Proposition~\ref{prop:CP-ML}. \end{proof}

Proposition~\ref{prop:beta} implies the following corollary:

  \begin{cor}\label{cor:beta}
Under $\beta$-mixture updating, for any complete-information game $G$:

\begin{enumerate}
\item If $a^* \in A$ satisfies $\beta\min_{a_{-i}}u_i(a^*_i, a_{-i})+(1-\beta)u_i(a^*)>\max_{a_i\not=a^*_i}u_i(a_i, a^*_{-i})$ for every $i\in I$, then $a^*$ is contagious.

\item If $\hat a$ is a $p$-secure equilibrium of $G$ with $p<\beta/|I|$, then $\hat a$ is robust. 
\end{enumerate}

 \end{cor}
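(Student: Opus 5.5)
Both parts follow the templates already used for Theorem~\ref{prop:contagious} and Corollary~\ref{cor:FB/ML}, with Proposition~\ref{prop:beta} in place of Theorem~\ref{thm:CP-general} and Proposition~\ref{prop:CP-ML}.

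\emph{First part.} Start from the strict inequality in the hypothesis. By continuity, pick $\gamma<1$ such that for every $i$, every $a_i\neq a^*_i$ and every $Q_i\in\Delta(A_{-i})$ with $Q_i(a^*_{-i})\ge\gamma$,
\[
\beta\min_{a_{-i}}u_i(a^*_i,a_{-i})+(1-\beta)\sum_{a_{-i}}Q_i(a_{-i})u_i(a^*_i,a_{-i})>\sum_{a_{-i}}Q_i(a_{-i})u_i(a_i,a_{-i}).
\]
Take the type space of the second part of Proposition~\ref{prop:beta}, that is, the Proposition~\ref{prop:CP-ML'} construction. There each normal type $t_i=n+1$ has full-Bayesian posteriors that include $P^i(\cdot\mid t_i)$, which puts mass one on $t_{-i}=\mathbf n$. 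Under the ML posterior (the same $P^i(\cdot\mid t_i)$), $t_{-i}=\mathbf n$ also gets probability one. Hence every $\beta$-mixture posterior of $t_i$ puts probability one on $t_{-i}=\mathbf n$, so this is in fact the same degenerate belief.

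I would not use that fact directly. Instead I would modify the construction so that the full-Bayesian component supplies a worst-case posterior concentrated on the lower opponent types, while the ML component supplies the posterior on which the profile $\mathbf n$ from the chain is supported. Then any posterior has the form $\beta Q^{\rm FB}+(1-\beta)Q^{\rm ML}$. Make non-normal types ($t_i\notin E$) have $a^*_i$ strictly dominant, and let $\hat T$ be the set of types not playing $a^*$ for sure.

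The key step is to show that if $t_i=n+1$ is the lowest type in $\hat T_i$, then it strictly prefers $a^*_i$. Nature can pick the mixture posterior whose FB part points to types at which $a^*_{-i}$ is played for sure. Under that posterior the payoff from $a^*_i$ is at least $\beta\min u_i(a^*_i,\cdot)+(1-\beta)\times$(the expected payoff under the ML part). Deviations get at most the right-hand side of the display. Induction up the chain, as in the proof of Proposition~\ref{prop:contagious-ML}, then gives $\hat T=\emptyset$.

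The main obstacle is designing the type space so that the mixture posterior cleanly separates these two roles. I would first try the convex-hull form directly: under $\beta\,\mathrm{co}\,\mathcal P^{\rm FB}_{t_i}+(1-\beta)\mathrm{co}\,\mathcal P^{\rm ML}_{t_i}$, the worst case against $a^*_i$ decomposes additively. I expect the bound $\beta\min+(1-\beta)u_i(a^*)$ to emerge once the ML-relevant opponents are already known to play $a^*_{-i}$, which is exactly what the induction provides. One can then set $f_i$ with threshold $\gamma$ and invoke $\overline C^{p}(E)=\emptyset$ for $p>\beta$ as in Proposition~\ref{prop:beta}.

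\emph{Second part.} Mirror Corollary~\ref{cor:FB/ML}(1). First prove the analogue of Lemma~\ref{lem:MS}(2) under $\beta$-mixture updating: for $p$-secure $\hat a$, restrict strategies to play $\hat a$ on $\overline C^{\beta p}(T^*)$. I need to check that a type in this event best-responds with $\hat a_i$, since some posterior gives probability at least $\beta p$ to $\hat a_{-i}$. This requires $p$-security at level $\beta p$, which is where the loss enters, so I would apply the lemma with $p'=p/\beta$ on the justifiability side (note $p/\beta<1/|I|$). Then Proposition~\ref{prop:beta}(1) with $p'$ gives $P(\overline C^{\beta p'}(T^*))\ge 1-\varepsilon\frac{1-p'}{1-|I|p'}\to1$ uniformly over $P\in\mathcal P$, which yields robustness.
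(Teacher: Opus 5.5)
Your second part is correct and is the paper's argument: apply Lemma~\ref{lem:MS}(2) at level $p$ and Proposition~\ref{prop:beta}(1) at level $p/\beta<1/|I|$. The first part, however, does not close as written, for two reasons.

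\emph{The type space is misread.} For $t_i=n+1$ in the type space of Proposition~\ref{prop:CP-ML'}, the set $\cP^{\rm FB}_{t_i}$ contains $P^i(\cdot\mid t_i)=\delta_{\mathbf n}$. It also contains every $P^j(\cdot\mid t_i)$, $j\neq i$, which is concentrated on $t_j=n+2$ and $t_k=n+1$ ($k\neq i,j$). So the $\beta$-mixture posteriors are $\beta Q+(1-\beta)\delta_{\mathbf n}$ with $Q\in{\rm co}\,\cP^{\rm FB}_{t_i}$, and they are not degenerate. This is exactly why Proposition~\ref{prop:beta}(2) only reaches $p>\beta$. Because of this misreading you propose an unspecified ``modification''. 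In fact the unmodified space already separates the two roles as you want.

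\emph{The closing step does not apply.} The $\gamma$-threshold $\mathbf f$ is not $p$-justifiability with $p>\beta$; for two players it corresponds to $p=1-\gamma$. In this type space, $E$ is actually upper $\mathbf f$-evident for the full $\beta$-mixture sets whenever $\gamma\geq 1-\beta$. The reason is that $\beta P^j(\cdot\mid t_i)+(1-\beta)P^i(\cdot\mid t_i)$ puts at most $1-\beta$ on no opponent lying in $E$. Hence $\overline C^{\mathbf f}(E)=E$, not $\emptyset$. In the introductory game with $a^*=\mathbf 0$ you need $\gamma>1-c$, so this happens whenever $c\leq\beta$.

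The repair uses only ingredients you already have; both routes below work on the unmodified type space.

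\emph{Direct induction.} Types $0$ play $a^*$ by dominance. Suppose all types $\leq n$ of \emph{all} players play $a^*$; induct on the lowest type across players, not within a single $\hat T_i$. Then for $t_i=n+1$, every posterior has the form $\beta Q+(1-\beta)\delta_{\mathbf n}$, so
$V_i(a^*_i,\sigma_{-i},t_i)\geq\beta\min_{a_{-i}}u_i(a^*_i,a_{-i})+(1-\beta)u_i(a^*)$.
Meanwhile $\delta_{\mathbf n}=\beta\delta_{\mathbf n}+(1-\beta)\delta_{\mathbf n}$ itself lies in $\cP_{t_i}$, so $V_i(a_i,\sigma_{-i},t_i)\leq u_i(a_i,a^*_{-i})$ for every $a_i\neq a^*_i$. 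The hypothesis then gives strict preference for $a^*_i$.

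\emph{Evident-event version.} Keep your $\gamma$-display but apply it to the maximum-likelihood component only. Since ${\rm co}\,\cP^{\rm ML}_{t_i}\subseteq\cP_{t_i}$, suppose $t_i\in\hat T_i$. Then the posterior in ${\rm co}\,\cP^{\rm ML}_{t_i}$ that minimizes the expected payoff of $a^*_i$ gives probability below $\gamma$ to the event that no opponent's type lies in $\hat T$. By linearity, some element of $\cP^{\rm ML}_{t_i}$ does the same. So $\hat T$ is upper $\mathbf f$-evident \emph{with respect to ML posteriors}. Proposition~\ref{prop:CP-ML'}, not Proposition~\ref{prop:beta}(2), then gives $\hat T=\emptyset$. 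This is the sense in which the paper calls the argument analogous to the proof of Proposition~\ref{prop:contagious-ML}.
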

 
 \begin{proof}
 The argument for the first part is analogous to the second part of Proposition~\ref{prop:contagious-ML}. 
The second part follows from the first part of Proposition~\ref{prop:beta} and Lemma~\ref{lem:MS}. 
 \end{proof}

The condition in the first part requires that the payoff to the equilibrium action $a^*_i$ evaluated at the worst-case belief in $\beta\Delta(A_{-i})+(1-\beta)\{\delta_{a^*_{-i}}\}$ is better than the payoff to any deviation $a_{i}$ evaluated at the equilibrium belief $\delta_{a^*_{-i}}$. When $\beta = 1$, this reduces to $a^*$ being secure as in Theorem~\ref{prop:contagious}; when $\beta = 0$, this reduces to $a^*$ being a strict Nash equilibrium as in Proposition~\ref{prop:contagious-ML}. 

Extending Corollary~\ref{cor:FB/ML}, the second part shows that a $p$-secure $\hat a$ is robust when $p$ is sufficiently small, where the requirement on $p$ is more demanding the smaller is $\beta$.

\subsection{Details for Remark~\ref{rem:small-ambig}}\label{app:payoff-irrelevant}

As discussed in Remark~\ref{rem:small-ambig}, it is important for our results that while $\varepsilon$-elaborations only feature small prior ambiguity about payoffs, they allow for significant ambiguity about payoff-irrelevant events. Elaborating on 
Remark~\ref{rem:small-ambig}, we show that if $(T, \mathcal{P})$ features small prior ambiguity about \emph{all} events in $T$, then it satisfies an approximate version of the critical path theorem, and as a result, $(T, \mathcal{P})$ does not support the contagion in Theorem~\ref{prop:contagious}.

One natural way to limit the amount of ambiguity about all events in $T$ is to bound the likelihood ratios $\frac{P(t)}{P'(t)}$ across different $P,P'\in\cal P$. When ${\rm diam } \, {\cal P}:=\sup_{P, P'\in {\cal P}, t\in T}\frac{P(t)}{P'(t)}$ is small, we obtain the following approximate version of the critical-path theorem:\footnote{\cite{oyama2010} derive an analogous result in the context of non-common probabilistic priors. We use the conventions $0/0:=1$ and $x/0:=+\infty$ for every $x>0$.}

\begin{prop}\label{prop:bound}
Consider any updating rule satisfying Assumption~\ref{asp:ML}. Take any $r \geq 1$ and $ p < \frac{1}{|I|r}$. Then for any type space $(T, \mathcal{P})$ such that ${\rm diam}\, {\cal P} \leq \sqrt{r}$ and any product event $E$,
\[
P(C^{p}(E)) \geq 1 - \frac{1 - r p}{1 - r |I|p} (1 - P(E)), \; \forall P \in \mathcal{P}.
\]
\end{prop}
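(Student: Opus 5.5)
Fix any $P\in\mathcal P$. The plan is to compare the ambiguous belief operator with the unambiguous $rp$-belief operator under the single prior $P$, and then apply the critical path theorem of \cite{kajii1997} to $(T,\{P\})$ at level $rp<1/|I|$. Write $\hat B^{q}$ and $\hat C^{q}$ for the $q$-belief and common $q$-belief operators in $(T,\{P\})$.

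\textbf{Step 1 (likelihood-ratio comparison of posteriors).} Take any $t_i$ and any posterior $Q_i\in\mathcal P_{t_i}$. By Assumption~\ref{asp:ML}(i), $Q_i=P'(\cdot\mid t_i)$ for some $P'\in\mathcal P$. For any event $F_{-i}\subseteq T_{-i}$,
\[
Q_i(F_{-i})=\frac{P'(\{t_i\}\times F_{-i})}{P'(\{t_i\}\times T_{-i})}.
\]
Since ${\rm diam}\,\mathcal P\le\sqrt r$, we have $P'(s)\ge P(s)/\sqrt r$ and $P'(s)\le \sqrt r\,P(s)$ pointwise. Bounding the numerator from below and the denominator from above gives
\[
Q_i(F_{-i})\ \ge\ \frac{1}{r}\,P(F_{-i}\mid t_i).
\]
The conventions $0/0=1$ and $x/0=\infty$ make the pointwise bounds valid, because the ratio bound forces $P$ and $P'$ to share supports. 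Consequently, if $P(F_{-i}\mid t_i)\ge rp$, then every $Q_i\in\mathcal P_{t_i}$ assigns $F_{-i}$ probability at least $p$. Applying this with $F_{-i}=E_{-i}$ yields $\hat B^{rp}_i(E)\subseteq B^{p}_i(E)$ for every product event $E$. Note that this holds for every updating rule satisfying Assumption~\ref{asp:ML}, since it only uses that posteriors are Bayesian updates of some prior in $\mathcal P$.

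\textbf{Step 2 (from individual to common belief).} By monotonicity of both operators (Lemma~\ref{lem:basic}), induction gives $(\hat B^{rp})^k(E)\subseteq (B^p)^k(E)$ for all $k$. One could instead argue via evidence: $\hat C^{rp}(E)$ is $rp$-evident for $P$, so it is lower $p$-evident in the ambiguous sense and contained in $E$. Lemma~\ref{lem:basic}(3) then gives $\hat C^{rp}(E)\subseteq C^p(E)$.

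\textbf{Step 3 (apply the critical path theorem).} Since $rp<1/|I|$, the critical path theorem applied to $(T,\{P\})$ at level $rp$ gives
\[
P(C^p(E))\ \ge\ P(\hat C^{rp}(E))\ \ge\ 1-(1-P(E))\frac{1-rp}{1-|I|rp},
\]
which is exactly the claimed bound. Because $P\in\mathcal P$ was arbitrary, the bound holds for all $P\in\mathcal P$.

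The main obstacle is Step 1. The key point there is using the symmetric $\sqrt r$ bound on both the numerator and the denominator so that the overall factor is exactly $r$, and handling zero-probability type profiles cleanly. Everything else is routine monotonicity combined with the existing theorem.
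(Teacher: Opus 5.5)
Your proposal is correct and follows essentially the same route as the paper. You fix a prior, apply the $\sqrt{r}$ likelihood-ratio bound to both the numerator and the denominator to get $\hat B^{rp}(E)\subseteq B^{p}(E)$, which uses only Assumption~\ref{asp:ML}(i). You then lift this to common belief by monotonicity and invoke the Kajii--Morris critical path theorem at level $rp<1/|I|$.
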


\begin{proof}
Fix any type space $(T, \mathcal{P})$ such that  ${\rm diam}\, {\cal P} \leq \sqrt{r}$ and any $P^0 \in \mathcal{P}$.
Let $B^{{p}, 0}$ and $C^{{p}, 0}$ denote the individual and common belief operators for the unambiguous type space $(T, \{P^0\})$.
We have $B^{r{p}, 0}(E) \subseteq B^{{p}}(E)$ and hence $C^{r{p}, 0}(E) \subseteq C^{{p}}(E)$ for any product event $E \subseteq T$:
indeed,
if $t_i \in B^{r{p}, 0}_i(E)$, i.e., $t_i \in E_i$ and $P^0(\{t_i\} \times E_{-i}) \geq P^0(t_i) \times rp$,
then $t_i \in E_i$ and $P'(\{t_i\} \times E_{-i}) \geq P'(t_i) \times p$ for all $P' \in \mathcal{P}$, which ensures 
 $t_i \in B^{{p}}_i(E)$ by Assumption~\ref{asp:ML}-(i).

Since $ rp < 1/{|I|}$ by assumption, \cites{kajii1997} critical path theorem applied to $(T, \{P^0\})$ implies that for any product event $E \subseteq T$,
\[ P^0(C^{r{p}, 0}(E)) \geq 1 - \frac{1 - rp}{1 - r|I|p} (1 - P^0(E)),
\]
which proves the claim as $P^0(C^{{p}}(E)) \geq P^0(C^{r{p}, 0}(E))$.
\end{proof}

As a corollary of Proposition~\ref{prop:bound} and Lemma~\ref{lem:MS}, one also recovers a version of the robustness result in \cite{kajii1997} by bounding payoff-irrelevant ambiguity:

\begin{cor}
Consider any updating rule satisfying Assumption~\ref{asp:ML}. Suppose $a^*$ is a $p$-dominant equilibrium of $G$ with $p<\frac{1}{|I|r}$ for some $r\geq 1$. Then $a^*$ is robust against all elaborations with $
{\rm diam}\, {\cal P} \leq \sqrt{r}$.
\end{cor}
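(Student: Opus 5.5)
The statement is the corollary that $a^*$ is robust against all elaborations with ${\rm diam}\,\cP\le\sqrt r$, whenever $a^*$ is $p$-dominant with $p<\frac{1}{|I|r}$. The plan is to combine the first part of Lemma~\ref{lem:MS} with Proposition~\ref{prop:bound}, mirroring the unambiguous robustness argument of \cite{kajii1997}.

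Fix $\gamma<1$. We must find $\overline\varepsilon>0$ such that every $\varepsilon$-elaboration with $\varepsilon\le\overline\varepsilon$ and ${\rm diam}\,\cP\le\sqrt r$ has an equilibrium $\sigma$ with $\sum_t P(t)\sigma(a^*|t)\ge\gamma$ for all $P\in\cP$.

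\emph{Step 1 (equilibrium).} Take any such elaboration. By the first part of Lemma~\ref{lem:MS}, applied with this $p$ and using that $a^*$ is $p$-dominant, there is an equilibrium $\sigma$ with $\sigma(a^*|t)=1$ at every $t\in C^p(T^*)$. Hence, for every $P\in\cP$,
\[
\sum_t P(t)\sigma(a^*|t)\ge P(C^p(T^*)).
\]

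\emph{Step 2 (lower bound).} The event $T^*$ is a product event with $P(T^*)\ge1-\varepsilon$ for all $P\in\cP$, and ${\rm diam}\,\cP\le\sqrt r$ with $p<\frac{1}{|I|r}$. So Proposition~\ref{prop:bound} gives, for every $P\in\cP$,
\[
P(C^p(T^*))\ge 1-\frac{1-rp}{1-r|I|p}\,\varepsilon .
\]

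\emph{Step 3 (choice of $\overline\varepsilon$).} Set $\overline\varepsilon=(1-\gamma)\frac{1-r|I|p}{1-rp}$, which is positive because $r|I|p<1$. Then for $\varepsilon\le\overline\varepsilon$ the bound in Step 2 is at least $\gamma$, uniformly in $P\in\cP$, as robustness requires.

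The only point needing care is Step 1. Lemma~\ref{lem:MS} is stated for an arbitrary $\varepsilon$-elaboration and arbitrary updating rules satisfying Assumption~\ref{asp:ML}, so it applies here. The lemma's restricted-best-response argument relies on $C^p(T^*)$ being $p$-evident under the ambiguous operator, which holds by Lemma~\ref{lem:basic}. The rest is bookkeeping, and I expect no substantive obstacle beyond checking that the constant from Proposition~\ref{prop:bound} is finite and positive.
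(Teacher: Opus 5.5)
Your proposal is correct and follows the paper's route exactly: the paper obtains this corollary by combining the first part of Lemma~\ref{lem:MS} with Proposition~\ref{prop:bound} applied to the product event $T^*$, and your choice $\overline\varepsilon=(1-\gamma)\frac{1-r|I|p}{1-rp}$ simply writes out the bookkeeping the paper leaves implicit. One harmless caveat: Lemma~\ref{lem:MS} is stated for $p\in(0,1)$, so if $p=0$ you should first replace $p$ by some $p'\in(0,\frac{1}{|I|r})$, which is allowed because $p$-dominance is monotone in $p$.
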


\vspace{2mm}

\footnotesize
\begin{spacing}{0.05}
\bibliographystyle{econometrica}
\bibliography{contagious}

@article{aumann1976,
  title={Agreeing to disagree},
  author={Aumann, Robert J.},
  journal={The Annals of Statistics},
  volume={4},
  number={6},
  pages={1236--1239},
  year={1976},
  publisher={Institute of Mathematical Statistics}
}

@unpublished{morris2022,
	author = {Morris, Stephen and Oyama, Daisuke and Takahashi, Satoru},
	note = {working paper},
	title = {On the joint design of information and transfers},
	year = {2022}}

@article{tang2021,
	author = {Tang, Rui and Zhang, Mu},
	journal = {Journal of Economic Theory},
	pages = {105250},
	title = {Maxmin implementation},
	volume = {194},
	year = {2021}}

@article{KMsurvey,
	author = {Kajii, Atsushi and Morris, Stephen},
	journal = {The Japanese Economic Review},
	number = {1},
	pages = {7--34},
	publisher = {Springer},
	title = {Refinements and higher-order beliefs: a survey},
	volume = {71},
	year = {2020}}

@article{KMsurvey2,
	author = {Kajii, Atsushi and Morris, Stephen},
	journal = {The Japanese Economic Review},
	number = {1},
	pages = {35--41},
	publisher = {Springer},
	title = {Notes on ``Refinements and higher-order beliefs''},
	volume = {71},
	year = {2020}}

@article{bergemann2013,
	author = {Bergemann, Dirk and Morris, Stephen},
	journal = {Foundations and Trends{\textregistered} in Microeconomics},
	number = {3},
	pages = {169--230},
	publisher = {Emerald Publishing Limited},
	title = {An Introduction to Robust Mechanism Design},
	volume = {8},
	year = {2013}}

@article{horie2013,
	author = {Horie, Mayumi},
	journal = {Journal of Mathematical Economics},
	number = {6},
	pages = {467--470},
	title = {Reexamination on updating choquet beliefs},
	volume = {49},
	year = {2013}}

@article{eichberger2007,
	author = {Eichberger, J{\"u}rgen and Grant, Simon and Kelsey, David},
	journal = {Journal of Mathematical Economics},
	number = {7-8},
	pages = {888--899},
	title = {Updating choquet beliefs},
	volume = {43},
	year = {2007}}

@article{billot2000,
	author = {Billot, Antoine and Chateauneuf, Alain and Gilboa, Itzhak and Tallon, Jean-Marc},
	journal = {Econometrica},
	number = {3},
	pages = {685--694},
	title = {Sharing beliefs: between agreeing and disagreeing},
	volume = {68},
	year = {2000}}

@article{kajii2009,
	author = {Kajii, Atsushi and Ui, Takashi},
	journal = {Journal of Economic Theory},
	number = {1},
	pages = {337--353},
	title = {Interim efficient allocations under uncertainty},
	volume = {144},
	year = {2009}}

@article{seidenfeld1993,
	author = {Seidenfeld, Teddy and Wasserman, Larry},
	journal = {The Annals of Statistics},
	number = {3},
	pages = {1139--1154},
	title = {Dilation for sets of probabilities},
	volume = {21},
	year = {1993}}

@article{hanany2020,
	author = {Hanany, Eran and Klibanoff, Peter and Mukerji, Sujoy},
	journal = {American Economic Journal: Microeconomics},
	number = {2},
	pages = {135--187},
	title = {Incomplete information games with ambiguity averse players},
	volume = {12},
	year = {2020}}

@incollection{machina2014,
	author = {Machina, Mark J and Siniscalchi, Marciano},
	booktitle = {Handbook of the Economics of Risk and Uncertainty},
	publisher = {Elsevier},
	title = {Ambiguity and ambiguity aversion},
	volume = {1},
	year = {2014}}

@article{inostroza2025,
	author = {Inostroza, Nicolas and Pavan, Alessandro},
	journal = {Theoretical Economics},
	number = {2},
	pages = {763--813},
	title = {Adversarial coordination and public information design},
	volume = {20},
	year = {2025}}

@article{li2023,
	author = {Li, Fei and Song, Yangbo and Zhao, Mofei},
	journal = {Journal of Economic Theory},
	pages = {105575},
	title = {Global manipulation by local obfuscation},
	volume = {207},
	year = {2023}}

@article{hoshino2022,
	author = {Hoshino, Tetsuya},
	journal = {International Economic Review},
	number = {2},
	pages = {755--776},
	title = {Multi-Agent Persuasion: Leveraging Strategic Uncertainty},
	volume = {63},
	year = {2022}}

@article{moriya2020,
	author = {Moriya, Fumitoshi and Yamashita, Takuro},
	journal = {Journal of Economics \& Management Strategy},
	number = {1},
	pages = {173--186},
	title = {Asymmetric-information allocation to avoid coordination failure},
	volume = {29},
	year = {2020}}

@article{halac2025,
	author = {Halac, Marina},
	journal = {Journal of the European Economic Association},
	number = {3},
	pages = {815--844},
	title = {Contracting for coordination},
	volume = {23},
	year = {2025}}

@article{klibanoff2009,
	author = {Klibanoff, Peter and Marinacci, Massimo and Mukerji, Sujoy},
	journal = {Journal of Economic Theory},
	number = {3},
	pages = {930--976},
	title = {Recursive smooth ambiguity preferences},
	volume = {144},
	year = {2009}}

@article{pei2025,
	author = {Pei, Harry and Strulovici, Bruno},
	journal = {Review of Economic Studies},
	number = {1},
	pages = {476--505},
	title = {Robust implementation with costly information},
	volume = {92},
	year = {2025}}

@article{morris2024,
	author = {Morris, Stephen and Oyama, Daisuke and Takahashi, Satoru},
	journal = {Econometrica},
	number = {3},
	pages = {775--813},
	title = {Implementation via Information Design in Binary-Action Supermodular Games},
	volume = {92},
	year = {2024}}

@article{chassang2011,
	author = {Chassang, Sylvain and Takahashi, Satoru},
	journal = {Theoretical Economics},
	number = {1},
	pages = {49--93},
	title = {Robustness to incomplete information in repeated games},
	volume = {6},
	year = {2011}}

@article{chen2026,
	author = {Chen, Jaden Yang},
	journal = {American Economic Review},
	number = {1},
	pages = {209--245},
	title = {Sequential Learning under Informational Ambiguity},
	volume = {116},
	year = {2026}}

@unpublished{FII2025,
	author = {Frick, Mira and Iijima, Ryota and Ishii, Yuhta},
	note = {working paper},
	title = {Efficient Learning from Ambiguous Information},
	year = {2025}}

@inproceedings{cheng2024,
	author = {Cheng, Xiaoyu and Klibanoff, Peter and Mukerji, Sujoy and Renou, Ludovic},
	booktitle = {Proceedings of the 25th ACM Conference on Economics and Computation},
	pages = {1201--1202},
	title = {Persuasion with ambiguous communication},
	year = {2024}}

@article{schmeidler1989,
	author = {Schmeidler, David},
	journal = {Econometrica},
	pages = {571--587},
	title = {Subjective probability and expected utility without additivity},
	year = {1989}}

@article{salo1995,
	author = {Salo, Ahtia and Weber, Martin},
	journal = {Journal of Risk and Uncertainty},
	pages = {123--137},
	title = {Ambiguity aversion in first-price sealed-bid auctions},
	volume = {11},
	year = {1995}}

@article{oyama2012,
	author = {Oyama, Daisuke and Tercieux, Olivier},
	journal = {Games and Economic Behavior},
	number = {1},
	pages = {321--331},
	title = {On the strategic impact of an event under non-common priors},
	volume = {74},
	year = {2012}}

@article{dow1992,
	author = {Dow, James and Werlang, Sergio Ribeiro},
	journal = {Econometrica},
	pages = {197--204},
	title = {Uncertainty aversion, risk aversion, and the optimal choice of portfolio},
	year = {1992}}

@article{oyama2010,
	author = {Oyama, Daisuke and Tercieux, Olivier},
	journal = {Journal of Economic Theory},
	number = {2},
	pages = {752--784},
	title = {Robust equilibria under non-common priors},
	volume = {145},
	year = {2010}}

@unpublished{yokota2024,
	author = {Yokota, Yoshifumi},
	note = {working paper},
	title = {Rationalizability in Regular Preference Form Games: Incomplete Information and Higher Order Uncertainty},
	year = {2024}}

@unpublished{cerreia2022,
	author = {Cerreia-Vioglio, Simone and Corrao, Roberto and Lanzani, Giacomo},
	note = {working paper},
	title = {(Un-) Common Preferences, Ambiguity, and Coordination},
	year = {2022}}

@article{huo2024,
	author = {Huo, Zhen and Pedroni, Marcelo and Pei, Guangyu},
	journal = {American Economic Review},
	number = {12},
	pages = {4091--4133},
	title = {Bias and sensitivity under ambiguity},
	volume = {114},
	year = {2024}}

@article{ui2025,
	author = {Ui, Takashi},
	journal = {Games and Economic Behavior},
	pages = {65--81},
	title = {Strategic ambiguity in global games},
	volume = {149},
	year = {2025}}

@article{sakovics2012,
	author = {Sakovics, Jozsef and Steiner, Jakub},
	journal = {American Economic Review},
	number = {7},
	pages = {3439--3461},
	title = {Who matters in coordination problems?},
	volume = {102},
	year = {2012}}

@article{halac2021,
	author = {Halac, Marina and Lipnowski, Elliot and Rappoport, Daniel},
	journal = {American Economic Review},
	number = {3},
	pages = {757--786},
	publisher = {American Economic Association 2014 Broadway, Suite 305, Nashville, TN 37203},
	title = {Rank uncertainty in organizations},
	volume = {111},
	year = {2021}}

@article{segal2003,
	author = {Segal, Ilya},
	journal = {Journal of Economic Theory},
	number = {2},
	pages = {147--181},
	publisher = {Elsevier},
	title = {Coordination and discrimination in contracting with externalities: Divide and conquer?},
	volume = {113},
	year = {2003}}

@unpublished{morris2007,
	author = {Morris, Stephen and Shin, Hyun Song},
	note = {working paper},
	title = {Common belief foundations of global games},
	year = {2007}}

@article{ahn2008,
	author = {Ahn, David S},
	journal = {Journal of Economic Theory},
	number = {1},
	pages = {286--301},
	title = {Hierarchies of ambiguous beliefs},
	volume = {136},
	year = {2007}}

@article{morris2005,
	author = {Morris, Stephen and Ui, Takashi},
	journal = {Journal of Economic Theory},
	number = {1},
	pages = {45--78},
	publisher = {Elsevier},
	title = {Generalized potentials and robust sets of equilibria},
	volume = {124},
	year = {2005}}

@article{ui2001,
	author = {Ui, Takashi},
	journal = {Econometrica},
	number = {5},
	pages = {1373--1380},
	title = {Robust equilibria of potential games},
	volume = {69},
	year = {2001}}

@article{tillio2016,
	author = {di Tillio, Alfredo and Kos, Nenad and Messner, Matthias},
	journal = {Review of Economic Studies},
	number = {1},
	pages = {237--276},
	publisher = {Review of Economic Studies Ltd},
	title = {The design of ambiguous mechanisms},
	volume = {84},
	year = {2016}}

@article{oyama2009,
	author = {Oyama, Daisuke and Tercieux, Olivier},
	journal = {Journal of Economic Theory},
	number = {4},
	pages = {1726--1769},
	publisher = {Elsevier},
	title = {Iterated potential and robustness of equilibria},
	volume = {144},
	year = {2009}}

@article{oyama2020,
	author = {Oyama, Daisuke and Takahashi, Satoru},
	journal = {Econometrica},
	number = {2},
	pages = {693--726},
	publisher = {Wiley Online Library},
	title = {Generalized Belief Operator and Robustness in Binary-Action Supermodular Games},
	volume = {88},
	year = {2020}}

@article{bergemann2019,
	author = {Bergemann, Dirk and Morris, Stephen},
	journal = {Journal of Economic Literature},
	number = {1},
	pages = {44--95},
	title = {Information design: A unified perspective},
	volume = {57},
	year = {2019}}

@article{morris2002,
	author = {Morris, Stephen and Shin, Hyun Song},
	journal = {American Economic Review},
	number = {5},
	pages = {1521--1534},
	title = {Social value of public information},
	volume = {92},
	year = {2002}}

@article{weinstein2007,
	author = {Weinstein, Jonathan and Yildiz, Muhamet},
	journal = {Econometrica},
	number = {2},
	pages = {365--400},
	title = {A structure theorem for rationalizability with application to robust predictions of refinements},
	volume = {75},
	year = {2007}}

@article{rubinstein1989,
	author = {Rubinstein, Ariel},
	journal = {The American Economic Review},
	pages = {385--391},
	publisher = {JSTOR},
	title = {The Electronic Mail Game: Strategic Behavior Under `Almost Common Knowledge'},
	year = {1989}}

@article{dutting2024,
	author = {D{\"u}tting, Paul and Feldman, Michal and Peretz, Daniel and Samuelson, Larry},
	journal = {Econometrica},
	number = {6},
	pages = {1967--1992},
	publisher = {Wiley Online Library},
	title = {Ambiguous contracts},
	volume = {92},
	year = {2024}}

@article{kajii2005,
	author = {Kajii, Atsushi and Ui, Takashi},
	journal = {Japanese Economic Review},
	number = {3},
	pages = {332--351},
	publisher = {Wiley Online Library},
	title = {Incomplete information games with multiple priors},
	volume = {56},
	year = {2005}}

@article{epstein1996,
	author = {Epstein, Larry G and Wang, Tan},
	journal = {Econometrica},
	pages = {1343--1373},
	publisher = {JSTOR},
	volume = {64},
	number = {6},
	title = {`Beliefs about beliefs' without probabilities},
	year = {1996}}

@article{kajii1997,
	author = {Kajii, Atsushi and Morris, Stephen},
	journal = {Econometrica},
	pages = {1283--1309},
	publisher = {JSTOR},
	title = {The robustness of equilibria to incomplete information},
	year = {1997}}

@article{monderer1989,
	author = {Monderer, Dov and Samet, Dov},
	journal = {Games and Economic Behavior},
	number = {2},
	pages = {170--190},
	publisher = {Elsevier},
	title = {Approximating common knowledge with common beliefs},
	volume = {1},
	year = {1989}}

@inproceedings{ganguli2007,
	author = {Ganguli, Jayant V},
	booktitle = {Proceedings of the 11th Conference on Theoretical Aspects of Rationality and Knowledge},
	pages = {155--159},
	title = {Common p-belief and uncertainty},
	year = {2007}}

@article{cheng2019,
	author = {Cheng, Xiaoyu},
	journal = {Journal of Mathematical Economics},
	pages = {102587},
	publisher = {Elsevier},
	title = {Relative maximum likelihood updating of ambiguous beliefs},
	volume = {99},
	year = {2022}}

@article{shishkin2023,
	author = {Shishkin, Denis and Ortoleva, Pietro},
	journal = {Journal of Economic Theory},
	pages = {105610},
	publisher = {Elsevier},
	title = {Ambiguous information and dilation: An experiment},
	volume = {208},
	year = {2023}}

@article{SS2024,
	author = {Sadowski, Philipp and Sarver, Todd},
	journal = {American Economic Review: Insights},
	title = {An Evolutionary Perspective on Updating Risk and Ambiguity Preferences},
	volume = {8},
	number = {2},
	pages = {214--32},
	year = {2026}}

@article{billot2020,
	author = {Billot, Antoine and Mukerji, Sujoy and Tallon, Jean-Marc},
	journal = {Revue {\'e}conomique},
	number = {2},
	pages = {267--282},
	publisher = {Presses de Sciences Po},
	title = {Market allocations under ambiguity: A survey},
	volume = {71},
	year = {2020}}

@article{kellner2018,
	author = {Kellner, Christian and Le Quement, Mark T},
	journal = {Journal of Economic Theory},
	pages = {1--17},
	publisher = {Elsevier},
	title = {Endogenous ambiguity in cheap talk},
	volume = {173},
	year = {2018}}

@article{ellis2016,
	author = {Ellis, Andrew},
	journal = {Theoretical Economics},
	number = {3},
	pages = {865--895},
	publisher = {Wiley Online Library},
	title = {Condorcet meets Ellsberg},
	volume = {11},
	year = {2016}}

@article{klibanoff2005,
	author = {Klibanoff, Peter and Marinacci, Massimo and Mukerji, Sujoy},
	journal = {Econometrica},
	number = {6},
	pages = {1849--1892},
	publisher = {Wiley Online Library},
	title = {A smooth model of decision making under ambiguity},
	volume = {73},
	year = {2005}}

@article{siniscalchi2009,
	author = {Siniscalchi, Marciano},
	journal = {Economics \& Philosophy},
	number = {3},
	pages = {335--356},
	title = {Two out of three ain't bad: A comment on ``The ambiguity aversion literature: A critical assessment''},
	volume = {25},
	year = {2009}}

@article{gul2021,
	author = {Gul, Faruk and Pesendorfer, Wolfgang},
	journal = {Journal of Economic Theory},
	pages = {105129},
	title = {Evaluating ambiguous random variables from Choquet to maxmin expected utility},
	volume = {192},
	year = {2021}}

@article{bose2014,
	author = {Bose, Subir and Renou, Ludovic},
	journal = {Econometrica},
	number = {5},
	pages = {1853--1872},
	title = {Mechanism design with ambiguous communication devices},
	volume = {82},
	year = {2014}}

@article{beauchene2019,
	author = {Beauch{\^e}ne, Dorian and Li, Jian and Li, Ming},
	journal = {Journal of Economic Theory},
	pages = {312--365},
	title = {Ambiguous persuasion},
	volume = {179},
	year = {2019}}

@article{GS1989,
	author = {Gilboa, Itzhak and Schmeidler, David},
	journal = {Journal of Mathematical Economics},
	number = {2},
	pages = {141--153},
	publisher = {Elsevier},
	title = {Maxmin expected utility with non-unique prior},
	volume = {18},
	year = {1989}}

@article{pires2002,
	author = {Pires, Cesaltina Pacheco},
	journal = {Theory and Decision},
	number = {2},
	pages = {137--152},
	title = {A rule for updating ambiguous beliefs},
	volume = {53},
	year = {2002}}

@article{gilboa1993,
	author = {Gilboa, Itzhak and Schmeidler, David},
	journal = {Journal of Economic Theory},
	number = {1},
	pages = {33--49},
	title = {Updating ambiguous beliefs},
	volume = {59},
	year = {1993}}

@article{epstein2007,
	author = {Epstein, Larry G and Schneider, Martin},
	journal = {Review of Economic Studies},
	number = {4},
	pages = {1275--1303},
	title = {Learning under ambiguity},
	volume = {74},
	year = {2007}}

@article{hanany2007,
	author = {Hanany, Eran and Klibanoff, Peter},
	journal = {Theoretical Economics},
	number = {3},
	pages = {261--298},
	title = {Updating preferences with multiple priors},
	volume = {2},
	year = {2007}}
\end{spacing}
\newpage
\setcounter{page}{1}

\singlespacing

\begin{center}
 {\Large\textbf{Online Appendix to ``Contagious Ambiguity''\\[0.7cm]}}
 {\large Mira Frick, Ryota Iijima, and Daisuke Oyama\\[1cm]}
\end{center}

\normalsize

\section{More General Ambiguity Preferences}\label{app:general ambiguity}
\subsection{$\alpha$-Maxmin}\label{app:alpha-MEU}

We generalize the main setting by considering $\alpha$-maxmin expected utility, which allows for a mix of ambiguity-averse and ambiguity-seeking tendencies. Specifically, fix some $\alpha_i \in [0, 1]$ for each $i\in I$. Under any updating rule satisfying Assumption~\ref{asp:ML}, define type $t_i$'s interim payoff to playing action $a_i$ against opponent strategy $\sigma_{-i}$ by
\begin{align*} V_{i} (a_i, \sigma_{-i}, t_i) = &\alpha_i\min_{P_i\in {\cal P}_{t_i}} \sum_{t_{-i}\in T_{-i}} P_i (t_{-i}) v_i(a_i, \sigma_{-i}(t_{-i}), t_i, t_{-i}) \\
+ &(1-\alpha_i)\max_{P_i\in {\cal P}_{t_i}} \sum_{t_{-i}\in T_{-i}} P_i (t_{-i}) v_i(a_i, \sigma_{-i}(t_{-i}), t_i, t_{-i}).
\end{align*}

The following result generalizes Theorem~\ref{prop:contagious}. We omit the proof as the argument is analogous. 

\begin{prop}\label{prop:alpha-maxmin}
Consider players who have $\alpha$-maxmin preferences. Then action profile $a^*\in A$ is contagious if for each $i\in I$ and $a_i\not=a^*_i$,
\begin{equation}\label{eq:alpha-contagion}
\alpha_i\min_{a_{-i}}u_i(a^*_i, a_{-i})+(1-\alpha_i)u_i(a^*)>\alpha_i u_i(a_i, a^*_{-i})+(1-\alpha_i)\max_{a_{-i}}u_i(a_i, a_{-i}).
\end{equation}
\end{prop}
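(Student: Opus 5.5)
The plan is to mimic the proof of Theorem~\ref{prop:contagious} in Appendix~\ref{app:contagious}, reusing the finite type space of Theorem~\ref{thm:CP-general}. That space is useful for two reasons. First, all marginals are uniform, so $\cP_{t_i}=\cP^{\rm FB}_{t_i}=\{P^j(\cdot\mid t_i): j\in I\}$ for every updating rule satisfying Assumption~\ref{asp:ML}. Second, every type $t_i=n+1$ has a posterior, namely $P^i(\cdot\mid t_i)$, that puts probability one on $\{t_j=n\ \forall j\neq i\}$.

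\emph{Step 1: choose $\gamma$.} By the strict inequality (\ref{eq:alpha-contagion}) and continuity, I will pick $\gamma<1$ such that for every $i$, every $a_i\neq a^*_i$, and every pair $Q_i,Q'_i\in\Delta(A_{-i})$ with $Q_i(a^*_{-i})\ge\gamma$,
\[
\alpha_i\min_{a_{-i}}u_i(a^*_i,a_{-i})+(1-\alpha_i)u_i(a^*_i,Q_i) > \alpha_i u_i(a_i,Q_i)+(1-\alpha_i)\max_{a_{-i}}u_i(a_i,a_{-i}).
\]
This works because at $\gamma=1$ the left side equals the left side of (\ref{eq:alpha-contagion}) and the right side equals its right side.

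\emph{Step 2: build the elaboration.} Define $\mathbf f$ exactly as in Appendix~\ref{app:contagious}. Take $(T,\cP,E)$ from Theorem~\ref{thm:CP-general}. Give types in $E_i$ the payoffs $u_i$, and make $a^*_i$ strictly dominant for types outside $E_i$. This yields an $\varepsilon$-elaboration.

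\emph{Step 3: contagion.} Fix any equilibrium $\sigma$ and let $\hat T_i=\{t_i:\sigma_i(a^*_i|t_i)<1\}$, so that $\hat T\subseteq E$. Suppose some $t_i\in\hat T_i$ is willing to play some $a_i\neq a^*_i$. I will show that every posterior $P_i\in\cP_{t_i}$ assigns probability at most $\gamma$ to $(T_j\setminus\hat T_j)_{j\neq i}$.

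Argue by contradiction: suppose some posterior $\tilde P_i$ gives that event probability above $\gamma$. Then the induced distribution $Q_i$ over $A_{-i}$ satisfies $Q_i(a^*_{-i})\ge\gamma$, because on that event opponents play $a^*_{-i}$ for sure. Now bound the two sides of $i$'s comparison:
\begin{itemize}
\item[] the min term of $a^*_i$ is at least $\min_{a_{-i}}u_i(a^*_i,a_{-i})$;
\item[] the max term of $a^*_i$ is at least its value at $\tilde P_i$, which is $u_i(a^*_i,Q_i)$;
\item[] the min term of $a_i$ is at most its value at $\tilde P_i$, which is $u_i(a_i,Q_i)$;
\item[] the max term of $a_i$ is at most $\max_{a_{-i}}u_i(a_i,a_{-i})$.
\end{itemize}
Combining these with Step~1 shows $a^*_i$ is strictly better than $a_i$, a contradiction. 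Hence $t_i\in B^{f_i}_i(\hat T)$, so $\hat T$ is lower $\mathbf f$-evident and $\hat T\subseteq C^{\mathbf f}(E)=\emptyset$. Therefore $\sigma(a^*|t)=1$ at every $t$.

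The main obstacle is Step~1. The $\alpha$-mixture couples the min and max terms, so the bound must be evaluated at the same posterior $\tilde P_i$ for both actions: the max term for $a^*_i$ and the min term for $a_i$. I would check that the single-posterior bounds above suffice, and that the choice of $\gamma$ can be made uniform over all $Q_i$. The latter holds because the relevant expressions are affine in $Q_i$ and the domain is compact.
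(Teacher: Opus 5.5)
Your proof is correct and is exactly the argument the paper intends; the paper omits it and says only that it is analogous to Theorem~\ref{prop:contagious}. You use the same type space from Theorem~\ref{thm:CP-general}, the same $\mathbf f$, a $\gamma$ obtained from the strict $\alpha$-inequality by continuity, and the key observation: a single posterior $\tilde P_i$ that puts mass above $\gamma$ on opponents playing $a^*_{-i}$ both lower-bounds the max term of $a^*_i$ and upper-bounds the min term of $a_i$, which gives lower $\mathbf f$-evidence of $\hat T$. Two small remarks: the unused $Q'_i$ in Step~1 can be dropped, and existence of an equilibrium is immediate, because the condition implies $a^*$ is a strict Nash equilibrium, so playing $a^*$ at every type is an equilibrium of the elaboration.
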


Condition (\ref{eq:alpha-contagion})  reduces to security under maxmin expected utility ($\alpha_i = 1$ for every $i$).
In contrast, for maxmax expected utility ($\alpha_i = 0$ for every $i$), the inequality takes the form $u_i(a^*)>\max_{a_{-i}, a_i \neq a^*_i}u_i(a_i, a_{-i})$. The latter condition is satisfied if, for example, $a^*$ is a strict Nash equilibrium and every action $a_i \neq a^*_i$ is safe. Thus, if agents are ambiguity-seeking, the presence of safe actions serves as a source of contagion for non-safe equilibria. However, as in the maxmin case, this means that almost dominated actions can be contagious. For instance, consider the illustrative example from Section~\ref{sec:example}, where for each $i=1,2$, $A_i=\{0,1\}$ and $u_i(a)={\bf 1}_{\{a_i=a_{-i} =1\}}-ca_i$ for some $c\in (0,1)$.  Action profile $\bf 1$ is contagious under maxmax expected utility. However, $\bf 0$ is $p$-dominant for any $p>1-c$, and hence almost dominant when $c\approx 1$.

\subsection{Second-order Prior}\label{app:smooth}
In this section, we formalize ambiguity using a second-order prior, following the smooth ambiguity model  of \cite{klibanoff2005}. A \textbf{\textit{KMM type space}} $(T,\mu,\phi)$ consists of a countable set $T=\prod_{i\in I} T_i$ of type profiles, a common second-order prior $\mu\in\Delta(\Delta (T))$, and a tuple $\phi=(\phi_i)_{i\in I}$ of aggregators. Here, $\mu$ captures uncertainty about first-order priors $P\in\Delta(T)$ in a Bayesian manner. Each $\phi_i:\mathbb{R}\to\mathbb{R}$ is a monotone transformation that captures player $i$'s attitude toward uncertainty over first-order priors. The analysis below does not rely on the possibility that the $\phi_i$ differ across players.

At type $t_i$,  player $i$ updates the prior $\mu\in\Delta (\Delta (T))$ to $\mu_{t_i}\in\Delta (\Delta (T_{-i}))$. As in \cite{klibanoff2009}, one common updating rule is to define $\mu_{t_i}$ as the pushforward measure of $\tilde\mu_{t_i}\in\Delta (\Delta (T))$ with respect to the mapping $P\mapsto P(\cdot|t_i)$, 
where $\tilde\mu_{t_i}$ is constructed by 
\[
d\tilde\mu_{t_i}(P)=\frac{P(t_i)}{\int_{\Delta(T)}P'(t_i)d\mu(P')}d\mu(P).
\]
Here, Bayes' rule is applied to each first-order prior $P$, as well as the second-order prior $\mu$.\footnote{Our results remain valid under an alternative updating procedure that applies Bayes' rule only to first-order priors, i.e., $\mu_{t_i}(\{P(\cdot \mid t_i): P\in{\cal P}\})=\mu({\cal P})$ for each $t_i$ and measurable ${\cal P}\subseteq\Delta(T)$.} 
We assume that $\int_{\Delta(T)}P'(t_i)d\mu(P')$ is positive for each $t_i$, so that the above construction is well-defined.

Define the $p$-belief operator by 
\[
B^{p}_i(E)=\left\{t_i\in E_i: \int_{\Delta(T_{-i})}\phi_i(P_i(E_{-i})) d\mu_{t_i}(P_i) \geq \phi_i(p)\right\}
\]
for each $i\in I$, $p\in (0,1)$, and product event $E$, and define $B^p(E)$ and $C^p(E)$ analogously to the main setting.\footnote{The belief operator naturally depends on $\phi_i$. As in the unambiguous case, $B_i^p(E)$ corresponds to the set of player $i$'s types in $E_i$ who prefer to receive a bet that pays utility $1$ on $E_{-i}$ and utility $0$ otherwise, rather than a lottery that pays utility $1$ with probability $p$ and utility $0$ with the remaining probability. This is a standard method for eliciting an agent's subjective belief about $E_{-i}$.}

The following result provides an analog of Theorem~\ref{thm:CP-amb} for KMM type spaces. In particular, common $\varepsilon$-belief in an event $E$ can fail even when the ex-ante probability of $E$ is arbitrarily high, and even when the event itself is unambiguous.

\begin{prop}\label{prop:KMM-CP}
For every $\varepsilon \in (0, 1)$, there is a KMM type space $(T, \mu, \phi)$ and a product event $E$ such that $P(E)=P'(E)\geq 1-\varepsilon$ for all $P,P'\in {\rm supp}(\mu)$ but $C^{\varepsilon}(E)=\emptyset$. 
\end{prop}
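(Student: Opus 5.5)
We reuse the finite cyclic type space from the proof of Theorem~\ref{thm:CP-general}. Fix $\varepsilon\in(0,1)$ and choose $N$ with $\frac{2}{N+1}\le\varepsilon$. Let $T_i=\{0,1,\ldots,N\}$ for every $i$, and let $P^i$ be the uniform prior on the $N+1$ profiles with $t_j\equiv t_i-1 \pmod{N+1}$ for all $j\neq i$.

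Take $\mu$ to be the uniform distribution over $\{P^i:i\in I\}$, so $\mathrm{supp}(\mu)=\{P^i\}_{i\in I}$. Take any strictly increasing $\phi_i$; for concreteness, the identity. Let $E=\{1,\ldots,N\}^I$. As computed there, $P^i(E)=\frac{N-1}{N+1}\ge 1-\varepsilon$ for every $i$, so the prior condition holds and $E$ is unambiguous.

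The next step is the updated second-order belief. Every $P^j$ has the same uniform marginal on $T_i$, so $P^j(t_i)=\frac{1}{N+1}$ for all $j$. The reweighting in $\tilde\mu_{t_i}$ is therefore trivial and $\tilde\mu_{t_i}=\mu$. Hence $\mu_{t_i}$ puts mass $\frac{1}{|I|}$ on each $P^j(\cdot\mid t_i)$. The same conclusion holds under the alternative rule in the footnote. The key fact, as in (\ref{eq:conditional-ML'}), is that $P^i(\{t_j=n\ \forall j\neq i\}\mid t_i=n+1)=1$.

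We then run the contagion induction $(B^\varepsilon)^n_i(E)\subseteq\{n+1,\ldots,N\}$. Suppose it holds at stage $n$, and consider $t_i=n+1$. Under the posterior $P^i(\cdot\mid t_i)$, the opponents' types are all $n$, which lie outside $(B^\varepsilon)^n_j(E)$. So this posterior gives $P_i(E^n_{-i})=0$, writing $E^n=(B^\varepsilon)^n(E)$. The other posteriors $P^j(\cdot\mid t_i)$ with $j\neq i$ may assign probability up to $1$.

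Thus $\int\phi_i(P_i(E^n_{-i}))\,d\mu_{t_i}\le \frac{1}{|I|}\phi_i(0)+\frac{|I|-1}{|I|}\phi_i(1)$. This is the main obstacle: unlike maxmin, a single bad posterior no longer automatically kills $\varepsilon$-belief. We need the bound to lie strictly below $\phi_i(\varepsilon)$.

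The fix is to exploit the freedom in choosing $\phi$, which the statement allows since $\phi$ is part of the KMM type space. We pick $\phi_i$ concave enough, i.e.\ sufficiently ambiguity averse. For example, take $\phi_i(x)=x^{\kappa}$ with small $\kappa\in(0,1)$? That makes $\phi_i(0)=0$ and gives the bound $\frac{|I|-1}{|I|}<\varepsilon^\kappa$ for $\kappa$ small, which works. More cleanly, take $\phi_i(x)=-e^{-\theta x}$ with $\theta$ large, or any $\phi_i$ with $\phi_i(0)$ very negative relative to $\phi_i(\varepsilon)-\phi_i(1)$. Concretely, $\phi_i(x)=x^\kappa$ with $\kappa$ chosen so that $\varepsilon^\kappa>\frac{|I|-1}{|I|}$ suffices.

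With this choice, type $t_i=n+1$ is not in $B^\varepsilon_i(E^n)$, so $(B^\varepsilon)^{n+1}_i(E)\subseteq\{n+2,\ldots,N\}$. Iterating $N$ times gives $(B^\varepsilon)^N(E)=\emptyset$, and hence $C^\varepsilon(E)=\emptyset$.

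An alternative, if one insisted on arbitrary $\phi$, would be to put unequal second-order weights that shrink the mass on $P^j$ for $j\neq i$. This is impossible simultaneously for all $i$ with a common $\mu$, so the choice of a sufficiently concave $\phi$ is the intended route.
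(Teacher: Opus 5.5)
Your proposal is correct and follows the paper's proof essentially verbatim. It uses the same cyclic type space with $\mu$ uniform on $\{P^i\}_{i\in I}$, the bound $\frac{1}{|I|}\phi_i(0)+\frac{|I|-1}{|I|}\phi_i(1)$ coming from the posterior $P^i(\cdot\mid t_i)$, and a choice of $\phi_i$ that makes this bound strictly smaller than $\phi_i(\varepsilon)$ (your $x^\kappa$ with $\varepsilon^\kappa>\frac{|I|-1}{|I|}$, or CARA with large $\theta$).

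You should delete the earlier sentence allowing any strictly increasing $\phi_i$, such as the identity. With linear $\phi_i$ the KMM belief operator reduces to ordinary $p$-belief under the mean prior $\int P\,d\mu$, so for small $\varepsilon$ the critical path theorem rules the conclusion out.
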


An incomplete-information game under smooth ambiguity preferences is specified by a KMM type space $(T, \mu, \phi)$, along with a finite set $A_i$ of actions and a  bounded payoff function $v_i: A\times T\to\mathbb R$ for each $i$.
Player $i$'s interim payoff at $t_i$ is given by 
\[
V_i(a_i, \sigma_{-i}, t_i)= \int  \phi_i\left( \sum_{t_{-i}}P_i(t_{-i})v_i (a_i, \sigma_{-i}(t_{-i}), t_i, t_{-i}) \right) d\mu_{t_i}(P_i),
\]
which is linearly extended to mixed actions $\alpha_i \in \Delta(A_i)$ of player $i$. Equilibrium is defined analogously to the main model.

To address the question of contagion, fix a complete-information game $G$.  An $\varepsilon$-elaboration is an incomplete-information game under smooth ambiguity preferences where $P(\prod_{i\in I}T^*_i )\geq 1-\varepsilon$ for all $P  \in {\rm supp}(\mu)$ and $T^*_i$ is the set of $i$'s normal types. The definition of a contagious action profile $a^*$ is analogous to the main model. The following result is an analog of Theorem~\ref{prop:contagious} for smooth ambiguity preferences:

\begin{prop}\label{prop:KMM}
Consider players who have smooth ambiguity preferences. If an action profile $a^*$ in $G$ is secure, then it is contagious.  

\end{prop}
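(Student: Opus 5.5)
The plan is to mirror the proof of Theorem~\ref{prop:contagious}, replacing the multiple-prior type space with a KMM type space whose second-order prior $\mu$ is uniform over the finitely many priors $\{P^i: i\in I\}$ constructed in the proof of Theorem~\ref{thm:CP-general}. Fix $\varepsilon>0$ and take $N$ with $\frac{2}{N+1}\le\varepsilon$, $T_i=\{0,\dots,N\}$, and $\mu$ uniform on $\{P^i\}$. Since ${\rm marg}_{T_i}P^j$ is uniform for every $j$, we have $P^j(t_i)=\frac{1}{N+1}$ for all $j$. Hence the Bayesian reweighting $\tilde\mu_{t_i}$ equals $\mu$, and $\mu_{t_i}$ is uniform on $\{P^j(\cdot\mid t_i): j\in I\}$. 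The same conclusion holds under the alternative updating in the footnote. Let $E_i=\{1,\dots,N\}$, so $P^j(E)=\frac{N-1}{N+1}\ge 1-\varepsilon$ for every $j$. Make $a^*_i$ strictly dominant for $t_i=0$ and keep payoffs $u_i$ for $t_i\ge1$; this gives an $\varepsilon$-elaboration.

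Next, take any equilibrium $\sigma$ and let $\hat T_i=\{t_i:\sigma_i(a^*_i|t_i)<1\}$, so that $\hat T\subseteq E$. I will show by induction on $n$ that $\hat T_i\subseteq\{n+1,\dots,N\}$ for all $i$. The case $n=0$ holds because $\hat T\subseteq E$. For the inductive step, suppose no opponent type $\le n$ lies in $\hat T$, and consider $t_i=n+1$. Under the posterior $P^i(\cdot\mid t_i)$, all opponents have type $n$ with probability one, by (\ref{eq:conditional-ML'}). Since those types play $a^*_{-i}$, any $a_i\ne a^*_i$ yields $u_i(a_i,a^*_{-i})$ under that posterior. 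Under every other posterior in the support of $\mu_{t_i}$, $a_i$ yields at most $\max_{a}u_i(a)$, and $a^*_i$ yields at least $m_i:=\min_{a_{-i}}u_i(a^*_i,a_{-i})$ under every posterior.

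The comparison step is the main obstacle. Because $\phi_i$ is only monotone rather than infinitely concave, the weight $1/|I|$ on the single bad posterior must be enough. With $M_i:=\max_a u_i(a)$, we need
\[
\tfrac{1}{|I|}\phi_i\bigl(u_i(a_i,a^*_{-i})\bigr)+\bigl(1-\tfrac{1}{|I|}\bigr)\phi_i(M_i)<\phi_i(m_i),
\]
and this can fail for a fixed $\mu$ under security alone. To fix it, I would put weight $1-\eta$ on the adversarial priors. Concretely, replace the uniform $\mu$ by an asymmetric second-order prior, or better, enlarge the type space. The $P^i$ can be built so that after observing $t_i$, almost all posterior mass sits on priors under which the opponents' types are lower. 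For instance, use many copies $P^{i,k}$, $k=1,\dots,K$, each placing $t_{-i}\equiv t_i-1$ for player $i$, while keeping marginals uniform so that $\tilde\mu_{t_i}=\mu$. Alternatively, use the structure in the proof of Proposition~\ref{prop:CP-ML'} with likelihood ratios favoring $P^i$ at $t_i$, so that $\tilde\mu_{t_i}$ concentrates on $P^i$. The reweighting there gives $P^i$ relative weight $1/r$ against $P^j$, and I would then make $\mu$ itself tilt further. Given $\phi_i$ strictly increasing and security, choose $\eta$ small enough that
\[
(1-\eta)\phi_i\bigl(u_i(a_i,a^*_{-i})\bigr)+\eta\,\phi_i(M_i)<\phi_i(m_i)
\]
for all $i$ and $a_i\ne a^*_i$. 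Then construct $\mu$ so that every normal type $t_i=n+1$ places posterior weight at least $1-\eta$ on beliefs concentrated on opponents at type $n$. I expect reconciling this weight requirement with the condition $P(E)\ge1-\varepsilon$ for all $P$ in the support of $\mu$ to be the delicate part.

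Granting that, action $a^*_i$ gives strictly higher $V_i$ than any $a_i\ne a^*_i$ at $t_i=n+1$, so $t_i\notin\hat T_i$ and the induction proceeds. Hence $\hat T=\emptyset$, and $\sigma(a^*\mid t)=1$ in every equilibrium. An equilibrium exists by the standard fixed-point argument, so the unique equilibrium plays $a^*$ and $a^*$ is contagious.
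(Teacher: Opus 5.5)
There is a genuine gap, and it is in the step you flag as delicate. If $\phi_i$ is held fixed, no choice of $\mu$ or of the first-order priors can give posterior weight $1-\eta$ on the adversarial belief when $\eta$ is small. Under the Bayesian second-order update, $\int P(\cdot\mid t_i)\,d\tilde\mu_{t_i}(P)=\bar P(\cdot\mid t_i)$, where $\bar P=\int P\,d\mu(P)$ is the reduced prior. So your requirement forces $\bar P(t_{-i}=\mathbf{n}\mid t_i=n+1)\ge 1-\eta$ for every normal type.

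Now define $D_i=\{t: t_i\ge 1,\ t_j=t_i-1\ \forall j\ne i\}$. These events are pairwise disjoint, since in $D_i$ player $i$ has the strictly highest type. Each satisfies $\bar P(D_i)\ge(1-\eta)\bar P(t_i\ge 1)\ge(1-\eta)(1-\varepsilon)$. Hence $|I|(1-\eta)(1-\varepsilon)\le 1$, and the same bound holds for any elimination order. So the weight $1/|I|$ delivered by the uniform $\mu$ is essentially the best possible. Neither the many-copies variant nor likelihood tilting can beat it; in the latter the likelihood ratio is $1/r<1/(1-\varepsilon)$ in any case, and a common $\mu$ cannot favor every player's adversarial prior at once.

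More fundamentally, with a fixed affine $\phi_i$ the smooth model collapses to expected utility under the single common prior $\bar P$, and the statement is then false. In the introductory example with $c<1/2$, $\mathbf{1}$ is $p$-dominant for some $p<1/2$ and hence robust by Kajii--Morris. So the secure profile $\mathbf{0}$ cannot be contagious.

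The missing idea is that the aggregators are part of the KMM type space $(T,\mu,\phi)$, and hence part of the elaboration you construct. The paper says explicitly that Proposition~\ref{prop:KMM} lets $\phi_i$ depend on the elaboration, and that any sufficiently concave $\phi_i$ works. The paper keeps your first paragraph as is: uniform $\mu$ on $\{P^i\}$, so $\mu_{t_i}$ is uniform on $\{P^j(\cdot\mid t_i)\}$ and the adversarial posterior gets weight $1/|I|$. It then chooses $\phi_i$ so that this weight suffices, which is exactly your first displayed inequality, and security makes it attainable. For instance, take $\phi_i(v)=-e^{-\alpha_i v}$. The left side of that inequality is strictly below $-\frac{1}{|I|}e^{-\alpha_i u_i(a_i,a^*_{-i})}$. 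That quantity is at most $-e^{-\alpha_i m_i}$ once $\alpha_i\ge \ln|I|/\bigl(m_i-u_i(a_i,a^*_{-i})\bigr)$ for all $a_i\ne a^*_i$, which is possible because security gives $m_i>u_i(a_i,a^*_{-i})$. With this choice your induction goes through unchanged and coincides with the paper's argument. The second half of your proposal, reshaping $\mu$, should be dropped.
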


While Proposition~\ref{prop:KMM} allows the aggregators $(\phi_i)$ to depend on the $\varepsilon$-elaboration, this dependence is not needed for the contagion argument. That is, we can fix aggregators $\phi_i$ uniformly across values of $\varepsilon$, provided that the $\phi_i$ are sufficiently concave. For example, under the widely used CARA specification $\phi_i(v)=-\exp[-\alpha_iv]$, the argument goes through as long as each $\alpha_i$ is sufficiently large.

\subsection{Non-additive Prior}\label{app:CEU}  
In this section, we formalize ambiguity using a non-additive prior, following the Choquet expected utility model of \cite{schmeidler1989}.
We consider a \textbf{\textit{Choquet type space}} $(T,\nu)$, where $T$ is a countable set of type profiles and $\nu$ is a \textbf{\textit{capacity}}, i.e., a mapping $\nu:2^T\to[0,1]$ that is monotone with respect to set inclusion and satisfies $\nu(\emptyset)=0$ and $\nu(T)=1$.
The dual capacity is defined by $\overline{\nu}(E):=1-\nu(T\setminus E)$ for each $E\subseteq T$.
An event $E$ is called unambiguous if $\nu(E)=\overline{\nu}(E)$, i.e., if the capacity is additive over $E$ and its complement $T\setminus E$.

Let $\nu_{t_i}$ denote the updated capacity on $T_{-i}$ conditional on $t_i$.  There are two standard updating rules in the literature. Under \textbf{\textit{Dempster-Shafer updating}}, the conditional capacity is given by 
\[
\nu^{\rm DS}_{t_i}(E_{-i})=\frac{\overline\nu(\{t_i\}\times T_{-i})-\overline\nu(\{t_i\}\times (T_{-i}\setminus E_{-i}))}{\overline\nu(\{t_i\}\times T_{-i})}
\]
for each $E_{-i}\subseteq T_{-i}$. 
Under \textbf{\textit{generalized Bayesian updating}}, 
\[
\nu^{\rm GB}_{t_i}(E_{-i})=\frac{\nu(\{t_i\}\times E_{-i})}{\nu(\{t_i\}\times E_{-i})+\overline \nu(\{t_i\}\times (T_{-i}\setminus E_{-i}))}
\]
for each $E_{-i}\subseteq T_{-i}$.\footnote{The latter is also called the Dempster-Fagin-Halpern rule. See \cite{gilboa1993, eichberger2007, horie2013}  for axiomatic foundations of these two rules.}
We assume the denominators above are positive so that the updating rules are well-defined, and each player follows either of these updating rules.

Define $p$-belief operators by
\[
B^{p}_i(E)=\{t_i\in E_i: \nu_{t_i}(E_{-i}) \geq p\}
\]
for each $i\in I$, $p\in (0,1)$, and product event $E$, and define $B^p(E)$ and $C^p(E)$ analogously to the main setting. 

The following result provides an analog of Theorem~\ref{thm:CP-amb} for Choquet type spaces. In particular, common $\varepsilon$-belief in an event $E$ can fail even when the ex-ante probability of $E$ is arbitrarily high, and even when the event itself is unambiguous.

\begin{prop}\label{prop:CEU-CP}
For every $\varepsilon \in (0, 1)$, there is a Choquet type space $(T, \nu)$ and a product event $E$ such that $\nu(E)=\overline \nu(E)\geq 1-\varepsilon$ but $C^{\varepsilon}(E)=\emptyset$. 
\end{prop}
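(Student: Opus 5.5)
The plan is to reuse the finite cyclic type space from the proof of Theorem~\ref{thm:CP-general} and take the capacity to be the lower envelope of its priors. Pick $N$ with $\frac{2}{N+1}\le\varepsilon$, let $T_i=\{0,\dots,N\}$, and let $P^i$ ($i\in I$) be the priors defined there: $P^i$ puts mass $\frac1{N+1}$ on each profile with $t_j\equiv t_i-1$ (mod $N+1$) for all $j\ne i$. Set
\[
\nu(F)=\min_{i\in I}P^i(F),\qquad\text{so that}\qquad \overline\nu(F)=\max_{i\in I}P^i(F).
\]
Then $\nu$ is monotone, $\nu(\emptyset)=0$ and $\nu(T)=1$, so it is a capacity. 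For $E=\{1,\dots,N\}^I$, all $P^i(E)$ equal $\frac{N-1}{N+1}$. Hence $\nu(E)=\overline\nu(E)=\frac{N-1}{N+1}\ge 1-\varepsilon$, and $E$ is unambiguous.

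The key step is a one-type computation. Fix $i$ and $t_i=n+1$ with $0\le n\le N-1$. Consider any $D_{-i}\subseteq T_{-i}$ that does not contain the profile $\mathbf n:=(n,\dots,n)$. I claim that $\nu_{t_i}(D_{-i})=0$ under both updating rules.

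The slice $\{t_i\}\times T_{-i}$ receives mass $\frac1{N+1}$ from every prior, but at different points:
\begin{itemize}
\item under $P^i$ the mass sits at $(t_i,\mathbf n)$;
\item under $P^j$ with $j\ne i$ it sits at the profile with $t_j=n+2$ and $t_k=n+1$ for $k\ne i,j$ (indices mod $N+1$).
\end{itemize}
Since $\mathbf n\notin D_{-i}$, we get $P^i(\{t_i\}\times D_{-i})=0$ and hence $\nu(\{t_i\}\times D_{-i})=0$.

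For generalized Bayesian updating the numerator is therefore $0$. The denominator is positive, because $\overline\nu(\{t_i\}\times(T_{-i}\setminus D_{-i}))\ge P^i(t_i,\mathbf n)=\frac1{N+1}$. So $\nu^{\rm GB}_{t_i}(D_{-i})=0$.

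For Dempster--Shafer updating, $\overline\nu(\{t_i\}\times T_{-i})=\frac1{N+1}>0$. Also $\overline\nu(\{t_i\}\times(T_{-i}\setminus D_{-i}))\ge\frac1{N+1}$, again via $P^i$. So the numerator is $\le 0$, and since the updated capacity is nonnegative, $\nu^{\rm DS}_{t_i}(D_{-i})=0$.

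The contagion then follows by the same induction as in the proof of Theorem~\ref{thm:CP-general}. I show that $(B^\varepsilon)^n_i(E)\subseteq\{n+1,\dots,N\}$ for all $i$ and $n$, with the convention that this set is empty for $n\ge N$.
\begin{itemize}
\item The case $n=0$ holds since $(B^\varepsilon)^0(E)=E$.
\item Suppose the claim holds for $n\le N-1$, and let $D=(B^\varepsilon)^n(E)$. By the inductive hypothesis no $D_j$ ($j\ne i$) contains type $n$, so $\mathbf n\notin D_{-i}$.
\item The computation above gives $\nu_{t_i}(D_{-i})=0<\varepsilon$ at $t_i=n+1$. Hence $n+1\notin B^\varepsilon_i(D)$, which is the claim for $n+1$.
\end{itemize}
After $N$ steps every coordinate is empty, so $C^\varepsilon(E)=\emptyset$. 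This holds regardless of which of the two updating rules each player uses.

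The main obstacle I anticipate is checking that the updating formulas behave correctly on this non-convex lower-envelope capacity. With a set of priors, the update could simply be computed prior by prior, as in the earlier proofs. Here the updating rules are defined directly from $\nu$ and $\overline\nu$, so the argument must show explicitly that the worst-case prior $P^i$ alone drives both formulas to $0$. The one point that needs care is positivity of the denominators. This holds in the cases used, as verified above; elsewhere the paper's standing assumption that denominators are positive covers it.
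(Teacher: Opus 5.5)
Your proof is correct and is essentially the paper's argument. Like the paper, you set $\nu=\min_{P\in\cP}P$ on an earlier multiple-prior construction, and you verify directly from the Dempster--Shafer and generalized Bayesian formulas that the critical type's updated capacity of the surviving event is zero, because $P^i$ puts all of that type's slice mass on the profile where every opponent is one step lower. You then run the same induction as in Theorem~\ref{thm:CP-general}.

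The only difference is the base space. The paper uses the infinite type space from Proposition~\ref{prop:CP-ML}, where $P^i$ is the unique likelihood-maximizing prior at the critical type. Your finite cyclic space has identical marginals, so every prior is likelihood-maximizing. As a bonus, this makes every Dempster--Shafer and generalized Bayesian denominator at least $\frac{1}{N+1}$ for all types and events. You should state that outright rather than defer to the paper's standing positivity assumption, since that assumption is a condition your construction must satisfy.
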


An incomplete-information game under Choquet expected utility is given by a Choquet type space $(T, \nu)$, along with a finite set $A_i$ of actions and a bounded payoff function $v_i: A\times T\to\mathbb R$ for each $i$. Player $i$'s interim payoff at $t_i$ is given by 
\begin{align*}
V_i(a_i, \sigma_{-i}, t_i)
&= \int_0^\infty \nu_{t_i}(\{t_{-i}:v_i (a_i, \sigma_{-i}(t_{-i}), t_i, t_{-i})\geq x \})\, dx \\
&+\int^0_{-\infty} (\nu_{t_i}(\{t_{-i}:v_i (a_i, \sigma_{-i}(t_{-i}), t_i, t_{-i})\geq x \})-1)\,dx,
\end{align*}
which is linearly extended to mixed actions $\alpha_i \in \Delta(A_i)$ of player $i$. Equilibrium is defined analogously to the main model.

To address the question of contagion, fix a complete-information game $G$. An $\varepsilon$-elaboration is an incomplete-information game under Choquet expected utility with $\nu\left(\prod_{i\in I} T_i^*\right)\ge 1-\varepsilon$ and 
$\overline{\nu}\left(\prod_{i\in I} T_i^*\right)\ge 1-\varepsilon$, 
where $T_i^*$ denotes the set of player $i$'s normal types.
The definition of a contagious action profile $a^*$ is analogous to the main model. The following result is an analog of Theorem~\ref{prop:contagious} for Choquet expected utility:

\begin{prop}\label{prop:CEU-contagion}
Consider players with Choquet expected utility preferences. If an action profile $a^*$ in $G$ is secure, then it is contagious.  
\end{prop}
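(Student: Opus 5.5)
The plan is to follow the proof of Theorem~\ref{prop:contagious}, replacing the multiple-prior belief operator by a Choquet belief operator and transplanting the rotated cyclic type space from the proof of Theorem~\ref{thm:CP-general}.

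\textbf{Step 1: the Choquet type space.} Take $T_i=\{0,1,\ldots,N\}$ with $\frac{2}{N+1}\le\varepsilon$, and the priors $P^j$ ($j\in I$) of the proof of Theorem~\ref{thm:CP-general}. Define the capacity as the lower envelope $\nu(F)=\min_{j\in I}P^j(F)$. This is monotone, normalized, and has dual $\overline\nu(F)=\max_j P^j(F)$. Since every $P^j$ assigns probability $\frac{N-1}{N+1}$ to $E=\{1,\ldots,N\}^I$, $E$ is unambiguous with $\nu(E)\ge 1-\varepsilon$.

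\textbf{Step 2: conditional capacities.} For $t_i=n+1\ge 1$, I will show that $\nu_{t_i}(\{t_{-i}=\mathbf{n}\}^c)=0$ under both updating rules. The key fact is that $P^i(\{t_i\}\times\{\mathbf n\}^c)=0$, so $\nu(\{t_i\}\times F_{-i})=0$ whenever $\mathbf n\notin F_{-i}$.
\begin{itemize}
\item Under generalized Bayesian updating, the numerator therefore vanishes, while the denominator $\overline\nu(\{t_i\}\times T_{-i})=\frac1{N+1}>0$.
\item Under Dempster--Shafer updating, $\overline\nu(\{t_i\}\times T_{-i})=\frac1{N+1}$. Moreover, $\overline\nu(\{t_i\}\times(T_{-i}\setminus F_{-i}))\ge P^i(t_i,\mathbf n)=\frac1{N+1}$, so the numerator is $\le 0$ and hence $0$.
\end{itemize}
Monotonicity of $\nu_{t_i}$ then gives $\nu_{t_i}(F_{-i})=0$ for every $F_{-i}\not\ni\mathbf n$.

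\textbf{Step 3: contagion (the main obstacle).} Set $v_i=u_i$ on $E_i$ and make $a^*_i$ strictly dominant off $E_i$. This is an $\varepsilon$-elaboration, since $\nu(T^*)=\overline\nu(T^*)\ge 1-\varepsilon$. Let $\sigma$ be an equilibrium and $\hat T_i=\{t_i:\sigma_i(a^*_i\mid t_i)<1\}\subseteq E_i$.

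I will argue by induction on $n$ that $n+1\notin\hat T_i$, starting from types $0$, which are not in $\hat T_i$ by dominance. Suppose all types $\le n$ of every player play $a^*$. At $t_i=n+1$, consider the payoff $g(t_{-i})=u_i(a_i,\sigma_{-i}(t_{-i}))$ of a deviation $a_i\neq a^*_i$. It equals $u_i(a_i,a^*_{-i})$ at $t_{-i}=\mathbf n$. Step 2 says $\nu_{t_i}$ is the unanimity capacity on $\mathbf n$ on every upper level set: the set $\{g\ge x\}$ has capacity $1$ if it contains $\mathbf n$ and $0$ otherwise. Hence the Choquet integral of $g$ equals $g(\mathbf n)=u_i(a_i,a^*_{-i})$.

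For $a^*_i$, I will use monotonicity of the Choquet integral: its value is at least $\min_{a_{-i}}u_i(a^*_i,a_{-i})$, being in fact the value at $\mathbf n$ by the same calculation. Security then makes $a^*_i$ strictly better than every pure $a_i\neq a^*_i$. Payoffs are linear in mixtures, so $\sigma_i(a^*_i\mid n+1)=1$. Induction up to $N$ yields $\hat T_i=\emptyset$.

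The delicate point is verifying the Step 2 computation for both updating rules, in particular that the Dempster--Shafer numerator does not go negative in a way that breaks monotonicity. If this fails, I would fall back to perturbing the envelope, e.g.\ to a convex combination $\delta\overline\nu+(1-\delta)\nu$ with small $\delta$, and use the strict slack in security.
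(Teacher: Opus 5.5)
Your strategy is the paper's, with one difference. The common steps are these. At each normal type, the updated capacity vanishes on every set of opponent profiles that misses the profile in which all opponents have the next-lower type. This bounds every deviation's payoff by $u_i(a_i,a^*_{-i})$. Monotonicity bounds the payoff of $a^*_i$ below by $\min_{a_{-i}}u_i(a^*_i,a_{-i})$, and security plus induction on types finishes.

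The difference is the type space. The paper takes $\nu=\min_{P\in\cP}P$ for the infinite priors of the second part of Proposition~\ref{prop:CP-ML}, where $P^i$ is the unique maximum-likelihood prior at each normal $t_i$. You use the finite cyclic priors of Theorem~\ref{thm:CP-general}, whose marginals are unambiguous. That works equally well, and your Step~2 is correct. The Dempster--Shafer numerator is nonnegative by monotonicity of $\overline\nu$, so it is exactly $0$ and the fallback perturbation is unnecessary. For generalized Bayesian updating, the denominator is $\nu(\{t_i\}\times F_{-i})+\overline\nu(\{t_i\}\times(T_{-i}\setminus F_{-i}))$, not $\overline\nu(\{t_i\}\times T_{-i})$. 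It is still at least $P^i(t_i,\mathbf n)>0$, so your conclusion stands.

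There is, however, a false step in Step~3. Step~2 shows only that $\nu_{t_i}(F_{-i})=0$ when $\mathbf n\notin F_{-i}$. It does not show $\nu_{t_i}(F_{-i})=1$ when $\mathbf n\in F_{-i}$, and that is in fact false. At $t_i=n+1$, each $P^j$ with $j\neq i$ puts mass $\frac{1}{N+1}$ on the profile $s^j$ with $t_j\equiv n+2$ and $t_k=n+1$ for $k\neq i,j$. If $\mathbf n\in F_{-i}$ but $s^j\notin F_{-i}$, then $\nu(\{t_i\}\times F_{-i})=0$ and $\overline\nu(\{t_i\}\times(T_{-i}\setminus F_{-i}))=\frac{1}{N+1}$, so both rules give $\nu_{t_i}(F_{-i})=0$.

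In your space, both updates equal the unanimity capacity on $S_{t_i}=\{\mathbf n\}\cup\{s^j:j\neq i\}$, the Choquet analog of full Bayesian updating. So the Choquet integral of $g$ is $\min_{s\in S_{t_i}}g(s)$, not $g(\mathbf n)$. It can be strictly smaller, because types $n+1$ and $n+2$ have not yet been shown to play $a^*$. Your parenthetical claim about the value for $a^*_i$ fails for the same reason, though you only use the lower bound there.

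What the argument needs, and what Step~2 does deliver, is an inequality. For $x>g(\mathbf n)$, the level set $\{g\ge x\}$ misses $\mathbf n$ and has capacity $0$. For $x\le g(\mathbf n)$, its capacity is at most $\nu_{t_i}(T_{-i})=1$. Comparing both parts of the Choquet integral with those for the point mass at $\mathbf n$ gives $\int g\,d\nu_{t_i}\le g(\mathbf n)=u_i(a_i,a^*_{-i})$. This is exactly the bound the paper uses. With ``$=$'' replaced by ``$\le$'', your proof goes through.
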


\subsection{Proofs}
\subsubsection{Proofs for Appendix~\ref{app:smooth}}

\begin{proof}[Proof of Proposition~\ref{prop:KMM-CP}]
Take any $\varepsilon \in (0, 1)$. Choose $\phi_i$ for each $i$ such that 
\[
\phi_i(\varepsilon)>\frac{1}{|I|}\phi_i(0)+\frac{|I|-1}{|I|}\phi_i(1).
\]
For each $i\in I$, take $T_i=\{0,1,\ldots, N\}$ and $P^i$  as in the proof of  Theorem~\ref{thm:CP-amb}. Let $\mu(\{P^i\})=\frac{1}{|I|}$ for each $i\in I$.  
Let $E=\{1,2,\ldots, N\}^I$. Observe that $P^i(E)=P^j(E)\geq 1-\varepsilon$ for all $i,j$.

By construction,  $P^i(\{t_i\}\times T_{-i})=P^j(\{t_i\}\times T_{-i})$ for all $i,j$. Thus, for each $t_i>0$,  $\mu_{t_i}$ assigns probability $\frac{1}{|I|}$ on $P^i(\cdot|t_i)$, where $P^i(t_j=t_i-1 \forall j\not=i|t_i)=1$. 
Hence, for each $t_i=n$, 
\[
\int_{\Delta(T_{-i})}\phi_i(P_i(\{t_{-i}\geq {\bf n}\})) d\mu_{t_i}(P_i) \leq\frac{1}{|I|}\phi_i(0)+\frac{|I|-1}{|I|}\phi_i(1).
\]
Given the choice of each $\phi_i$, an inductive argument shows that, for all $k \in \mathbb{N}$, we have $\left(B^{\varepsilon} \right)^k (E)\subseteq \{k+1,k+2,\ldots, N\}^I$. Therefore, $ C^{\varepsilon}(E)=\emptyset$.
\end{proof}

\begin{proof}[Proof of Proposition~\ref{prop:KMM}]
Since $\min_{a_{-i}}u_i(a^*_i, a_{-i})>u_i(a_i, a_{-i}^*)$ for each  $i\in I$ and $a_i\not=a^*_i$, there exist functions $(\phi_i)_{i\in I}$ such that 
\begin{equation}\label{eq:KMM-BR}
\frac{1}{|I|}\phi_i(u_i(a^*))+\frac{|I|-1}{|I|}\phi_i(\min_{a_{-i}}u_i(a^*_i, a_{-i}))>\frac{1}{|I|}\phi_i(u_i(a_i, a_{-i}^*))+\frac{|I|-1}{|I|}\phi_i(\max_{a_{-i}}u_i(a_i, a_{-i}))
\end{equation}
 for all $i\in I$, $a_i\not=a^*_i$.

Fix any $\varepsilon>0$.  We take the KMM type space as in the proof of Proposition~\ref{prop:KMM-CP}. For each $i$, $T^*_i=\{1,2,\ldots, N\}$ is the set of normal types, while $a^*_i$ is dominant for type $t_i=0$. Note that $P^i(\prod_kT^*_k)=P^j(\prod_kT^*_k)\geq 1-\varepsilon$ for every $i,j$.

By construction,  $P^i(t_i)=P^j(t_i)$ for all $i,j$. Thus, for each $t_i>0$,  $\mu_{t_i}$ assigns probability $\frac{1}{|I|}$ on $P^i(\cdot|t_i)$, where $P^i(t_j=t_i-1 \forall j\not= i  \mid t_i)=1$. Combined  with (\ref{eq:KMM-BR}), an inductive argument shows that $a^*$ is played at every type profile in the unique equilibrium.
\end{proof}

\subsubsection{Proofs for Appendix~\ref{app:CEU}}
\begin{proof}[Proof of Proposition~\ref{prop:CEU-CP}]
Fix any $\varepsilon \in (0,1)$. Take the type space $(T, \cP)$ with multiple priors as constructed in the proof of the second part of Proposition~\ref{prop:CP-ML}. Then define a Choquet  type space $(T, \nu)$ by setting $\nu(E)=\min_{P\in \cP}P(E)$ (and hence $\overline\nu(E)=\max_{P\in{\cal P}}P(E)$)  for each $E\subseteq T$.

Consider any $i$ with type $t_i=n>0$. If $i$ follows Dempster-Shafer updating\footnote{Here ``$<$'' denotes component-wise strict inequality. }, 
\begin{eqnarray*}
\nu_{t_i}(\{t_{-i}\not< {\bf n}\})
&=& \frac{\max_{P\in{\cal P}}P(\{t_i=n\})-\max_{P\in{\cal P}}P(\{t_i=n \text{ and } t_{-i}<{\bf n}\})}{\max_{P\in{\cal P}}P(\{t_i=n\})} \\
&=& \frac{P^i(\{t_i=n\})-P^i(\{t_i=n \text{ and } t_{-i}<{\bf n}\})}{P^i(\{t_i=n\})}  =0.
\end{eqnarray*}
If $i$ follows generalized Bayesian updating, 
\begin{eqnarray*}
\nu_{t_i}(\{t_{-i}\not < {\bf n}\})
&=& \frac{\min_{P\in{\cal P}}P(\{t_i=n \text{ and } t_{-i}\not < {\bf n}\})}{\min_{P\in{\cal P}}P(\{t_i=n \text{ and } t_{-i}\not<{\bf n}\})+\max_{P\in{\cal P}}P(\{t_i=n \text{ and } t_{-i}< {\bf n}\})} \\
&=& \frac{P^i(\{t_i=n \text{ and } t_{-i}\not < {\bf n}\})}{P^i(\{t_i=n \text{ and } t_{-i}\not<{\bf n}\})+P^i(\{t_i=n \text{ and } t_{-i}< {\bf n}\})}  =0.
\end{eqnarray*}

Consider the product event $E= \{1,2,\ldots\}^I$. Then $\nu(E)=\overline \nu(E)\geq 1-\varepsilon$. 
Based on the above calculations, one can verify  inductively that for all $k \in \mathbb{N}$, we have $\left(B^{\varepsilon} \right)^k (E)= \{k+1,k+2,\ldots\}^I$. Therefore $ C^{\varepsilon}(E)=\emptyset$. 
\end{proof}

\begin{proof}[Proof of Proposition~\ref{prop:CEU-contagion}] 
Fix any $\varepsilon \in (0, 1)$. Take an $\varepsilon$-elaboration in which the Choquet type space $(T, \nu)$ is as in the proof of Proposition~\ref{prop:CEU-CP}. For each $i$, let $T^*_i=\{1,2,\ldots\}$ and $a^*_i$ be strictly dominant for type $t_i=0$. 
For type $t_i=1$, the interim payoff of $a^*_i$ is at least $\min_{a_{-i}}u_i(a^*_i, a_{-i})$, while the interim payoff of any $a_i\not=a^*_i$ is at most 
\[
 u_i(a_i, a^*_{-i}) + \nu_{t_i}(\{t_{-i}\not< {\bf 1}\}) (\max_{a_{-i}}u_i(a_i, a_{-i})-u_i(a_i, a^*_{-i}))= u_i(a_i, a^*_{-i}),
\]
 where the equality follows from the calculations in the proof of Proposition~\ref{prop:CEU-CP} and the inequality uses the fact that type $t_j=0$ chooses $a^*_j$  for each $j\not=i$. Therefore, security implies that $a^*_i$ is chosen by type $t_i=1$. Repeating the same argument for type $t_i=n$, using
$\nu_{t_i}(\{t_{-i}\not<{\bf n}\})=0$ and induction on $n$, shows that every type
plays $a_i^*$ in the unique equilibrium.
 \end{proof}

\section{Proof of Proposition~\ref{prop:prob}}\label{app:unambig-impl-pf}

\noindent{\bf Lower bound:}
Fix any unambiguous scheme
$ \mathcal C
 =
 \bigl((T_i)_{i\in I},\{P\},(w_i)_{i\in I}\bigr)$
that uniquely implements $\mathbf 1$. We apply the sequential
best-response partition used in Step~1 of Proposition~4 of
\cite{morris2022}. There is a partition
$
 T
 =
\bigcup_{\gamma\in\Pi}T(\gamma)
$
such that, for every $i\in I$ and every type $t_i$ with positive
marginal probability,
\begin{equation}
 \sum_{\gamma\in\Pi}
 \sum_{\substack{t_{-i}:\\
                  (t_i,t_{-i})\in T(\gamma)}}
 P(t_{-i}\mid t_i)
 \left[
 d_i\bigl(S_{-i}(\gamma)\bigr)+w_i(t_i)
 \right]
 >0.
\label{eq:typewise-sequential-obedience}
\end{equation}
Define
\[
 \rho_i(\gamma\mid t_i)
 =
 \sum_{\substack{t_{-i}:\\
                  (t_i,t_{-i})\in T(\gamma)}}
 P(t_{-i}\mid t_i), \;\; 
 \rho(\gamma)
 =
 \sum_{t\in T(\gamma)}P(t).
\]
For every player $i$,
\begin{equation}
 \rho(\gamma)
 =
 \sum_{t_i\in T_i}
P(\{t_i\}\times T_{-i})\rho_i(\gamma\mid t_i),
 \qquad \gamma\in\Pi.
\label{eq:common-rho}
\end{equation}

Using the additive form of the payment,
\eqref{eq:typewise-sequential-obedience} becomes
\[
 w_i(t_i)
 +
 \sum_{\gamma\in\Pi}
 \rho_i(\gamma\mid t_i)
 d_i\bigl(S_{-i}(\gamma)\bigr)
 >0.
\]
By multiplying this inequality by $P(\{t_i\}\times T_{-i})$, summing over $t_i$, and using \eqref{eq:common-rho}, we obtain
\[
 \bar w_i
 +
 \sum_{\gamma\in\Pi}
 \rho(\gamma)d_i\bigl(S_{-i}(\gamma)\bigr)
 >0,
\label{eq:mean-payment-strict-so}
\]
where $ \bar w_i
 =
 \sum_{t_i\in T_i}P(\{t_i\}\times T_{-i})w_i(t_i)$.

Thus $(\bar w,\rho)$ is feasible for the weak inequalities in
\eqref{eq:prob-cost-sequential-obedience}. Since the expected cost of
$\mathcal C$ is $\sum_i\bar w_i$, every uniquely implementing scheme
has cost at least the right-hand side of \eqref{eq:prob-cost-sequential-obedience}. 

\medskip
\noindent{\bf Upper bound:}
Let $(z^\star,\rho^\star)$ solve
\eqref{eq:prob-cost-sequential-obedience}. Fix $\delta>0$ and set $b_i^\delta=z_i^\star+\delta$. Then
\begin{equation}
 \sum_{\gamma\in\Pi}
 \rho^\star(\gamma)
 \left[
 d_i\bigl(S_{-i}(\gamma)\bigr)+b_i^\delta
 \right]
 \ge\delta
 \qquad\text{for every }i\in I.
\label{eq:strict-so-delta}
\end{equation}
Choose a finite number $\widehat b_i>b_i^\delta$ such that
\begin{equation}
 d_i(\varnothing)+\widehat b_i>0.
\label{eq:dominance-payment}
\end{equation}
By supermodularity, action $1$ is then strictly dominant for a type
receiving $\widehat b_i$.

We use the rank construction from Step~2 of Proposition~4 of
\cite{morris2022}. Let
$
 T_i=\{1,2,\ldots\},
$
and, for each $\gamma\in\Pi$, let
$r_i(\gamma)\in\{1,\ldots,n\}$ be the rank of player $i$ in
$\gamma$, where $n := |I|$. For $m\in\mathbb N$, define
\[
 t_i^{m,\gamma}=m+r_i(\gamma).
\]
For $\eta\in(0,1)$, define the common prior by
\[
P_\eta\bigl(t^{m,\gamma}\bigr)
 =
 \eta(1-\eta)^m\rho^\star(\gamma),
 \qquad
 m\in\mathbb N,\quad \gamma\in\Pi,
\label{eq:rank-prior}
\]
with zero probability on all other type profiles. Types of zero
marginal probability may be deleted from $T_i$. Set
\[
 w_i^\eta(\tau)
 =
 \begin{cases}
 \widehat b_i, & \tau\le n-1,\\
 b_i^\delta,  & \tau\ge n.
 \end{cases}
\label{eq:rank-payments}
\]

By \eqref{eq:strict-so-delta}, for all
sufficiently small $\eta>0$,
\begin{equation}
 \sum_{\gamma\in\Pi}
 (1-\eta)^{n-r_i(\gamma)}
 \rho^\star(\gamma)
 \left[
 d_i\bigl(S_{-i}(\gamma)\bigr)+b_i^\delta
 \right]>0
 \qquad\text{for every }i\in I.
\label{eq:weighted-strict-so}
\end{equation}

Every
type $\tau\le n-1$ strictly prefers action $1$ regardless of
opponents' actions, by \eqref{eq:dominance-payment}. Now take
$\tau\ge n$ and suppose that all types below $\tau$ have already been
shown to have action $1$ as their unique rationalizable action.
Conditional on $t_i=\tau$, the opponents who precede $i$ in
$\gamma$ have types below $\tau$. Even if every remaining opponent
chooses action $0$, supermodularity implies that type $\tau$'s
conditional expected payoff gain from action $1$ is at least
\begin{equation}
 \frac{
 \displaystyle
 \sum_{\gamma\in\Pi}
 (1-\eta)^{n-r_i(\gamma)}
 \rho^\star(\gamma)
 \left[
 d_i\bigl(S_{-i}(\gamma)\bigr)+b_i^\delta
 \right]
 }{
 \displaystyle
 \sum_{\gamma\in\Pi}
 (1-\eta)^{n-r_i(\gamma)}
 \rho^\star(\gamma)
 }
 >0.
\label{eq:rank-induction-gain}
\end{equation}
Induction on $\tau$ therefore eliminates action $0$ for every type. Thus action $1$ is uniquely
rationalizable for every type and, in particular, $\mathbf 1$ is the
unique equilibrium. 

It remains to bound the cost. The probability that all players' types are at least $n$ is
$
(1-\eta)^{n-1}.
$
On this event, each player $i$ receives $b_i^\delta$; off this event, total payment is at most
$\sum_i\widehat b_i$. Hence the expected cost is no more than 
\[
(1-\eta)^{n-1} \sum_{i\in I}b_i^\delta
 +
 (1-(1-\eta)^{n-1})\sum_{i\in I}\widehat b_i.
\]
Letting $\eta\downarrow0$ through values satisfying
\eqref{eq:weighted-strict-so} gives $W_{\mathrm{prob}}^{\mathbf 1}
 \le
 \sum_{i\in I}b_i^\delta$. Since $ \sum_{i\in I}b_i^\delta$ equals the right-hand side of \eqref{eq:prob-cost-sequential-obedience} plus $n\delta$, and  $\delta>0$ is arbitrary, this proves that the right-hand side of \eqref{eq:prob-cost-sequential-obedience} upper-bounds 
$W_{\mathrm{prob}}^{\mathbf 1}$.  \qed

\end{document}